\documentclass[%
 reprint,
 amsmath,amssymb,
 aps,
]{revtex4-2}

\usepackage{graphicx}
\usepackage{dcolumn}
\usepackage{bm}
\usepackage{amsmath}
\usepackage{amsfonts}
\usepackage{amsthm}
\newtheorem{theorem}{Theorem}
\newtheorem{corollary}[theorem]{Corollary}
\newtheorem{lemma}[theorem]{Lemma}
\newtheorem{remark}[theorem]{Remark}
\newtheorem{definition}[theorem]{Definition}

\newtheorem{conjecture}[theorem]{Conjecture}
\usepackage{xcolor}
\usepackage{mleftright}
\definecolor{darkblue}{RGB}{0, 0, 255}
\definecolor{reddish}{rgb}{.8, 0.2, 0.2}
\usepackage[hidelinks]{hyperref}
\hypersetup{
	colorlinks=true,
	citecolor=darkblue,
	linkcolor=reddish,
	urlcolor=darkblue,
	pdfauthor={},
	pdfsubject={}
}
\usepackage{physics}
\usepackage{mathtools}
\usepackage{float}

\usepackage{xcolor}

\usepackage{aliascnt}
\usepackage{hyperref}
\usepackage{cleveref}
\usepackage{float}

\usepackage{comment}
\DeclareMathOperator{\Sym}{Sym}
\DeclareMathOperator{\GL}{GL}
\DeclareMathOperator{\diag}{diag}

\NewDocumentCommand{\pDist}{ O{M} O{\ensuremath{\eta} } m }{\ensuremath{p_{#3}  }}
\NewDocumentCommand{\piDist}{ O{N} O{\ensuremath{\eta} } m }{\ensuremath{\pi_{#3}  }}
\NewDocumentCommand{\snIrrepDim}{ O{N} O{M} }{\ensuremath{ m_{#2, #1} }}

\NewDocumentCommand{\PhiChannel}{ O{\ensuremath{\eta}} O{M} O{N} }{\ensuremath{ \Phi_{#2 \to #3}^{#1} }}
\NewDocumentCommand{\unnormalPhiChannel}{ O{\ensuremath{\eta}} O{M} O{N} }{\ensuremath{ \Psi_{#2 \to #3}^{#1} }}

\NewDocumentCommand{\cloneChannel}{ O{s} O{N} }{\ensuremath{ C_{#1 \to #2} }}
\NewDocumentCommand{\traceChannel}{ O{s} O{M}}{\ensuremath{ L_{#2 \to #1} }}

\newcommand{\dicke}[2]{\ket{D_{#2}^{#1}}}
\newcommand{\cP}{\mathcal{P}}
\newcommand{\cQ}{\mathcal{Q}}
\newcommand{\one}{\mathbb{I}}

\crefname{theorem}{theorem}{theorems}
\Crefname{theorem}{Theorem}{Theorems}

\crefname{lemma}{lemma}{lemmas}
\Crefname{lemma}{Lemma}{Lemmas}

\crefname{proposition}{proposition}{propositions}
\Crefname{proposition}{Proposition}{Propositions}

\crefname{corollary}{corollary}{corollaries}
\Crefname{corollary}{Corollary}{Corollaries}

\crefname{conjecture}{conjecture}{conjectures}
\Crefname{conjecture}{Conjecture}{Conjectures}

\crefname{definition}{definition}{definitions}
\Crefname{definition}{Definition}{Definitions}

\crefformat{enumi}{(#2#1#3)}
\crefmultiformat{enumi}{(#2#1#3)}{ and~(#2#1#3)}{, (#2#1#3)}{ and~(#2#1#3)}
\crefrangeformat{enumi}{(#3#1#4)-(#5#2#6)}

\definecolor{dgreen}{HTML}{006600}
\definecolor{lgreen}{HTML}{B3FFB3}

\newcommand{\cD}{\mathcal{D}}

\newcommand{\cL}{\mathcal{L}}

\newcommand{\cN}{\mathcal{N}}

\newcommand{\cT}{\mathcal{T}}

\DeclareMathOperator{\id}{id}

\def\Binom{\mathrm{Binom}}
\def\NegBinom{\mathrm{NegBinom}}
\def\Hypergeo{\mathrm{HyperGeo}}
\def\NegHypergeo{\mathrm{NegHyperGeo}}

\def\Gaussian{\Gamma}
\def\Gain{\mathcal{A}}
\def\Loss{\mathcal{L}}

\def\Sym{\mathrm{Sym}}

\def\K{\tilde K}

\def\E{\mathbb{E}}

\def\Id{\mathrm{id}}
\def\id{\mathbb{I}}

\newcommand{\KDiaNorm}[2][K]{ \norm{#2}_{\diamond, #1}}

\newcommand{\diver}[2]{D \hspace{-0.7mm} \left( #1 \middle\| #2\right)}

\newcommand\Jloss[1][s] {J_{#1}^{\text{loss}} }

\newcommand{\be}{\begin{equation}}
\newcommand{\ee}{\end{equation}}

\newcommand{\C}{\mathbb{C}}

\newcommand{\bea}{\begin{eqnarray}}
\newcommand{\eea}{\end{eqnarray}}
\newcommand{\beas}{\begin{eqnarray*}}
\newcommand{\eeas}{\end{eqnarray*}}

\NewDocumentCommand{\glIrrep}{ O{N} O{M} }{\ensuremath{ \cQ_{#2, #1} }}
\NewDocumentCommand{\snIrrep}{ O{N} O{M} }{\ensuremath{ \cP_{#2, #1} }}

\newtheoremstyle{suppplain}
  {6pt}{6pt}{\itshape}{}{\bfseries}{.}{.5em}
  {\thmname{#1}\thmnumber{ (#2)}\thmnote{ \textnormal{(#3)}}}

\newtheoremstyle{suppdefinition}
  {6pt}{6pt}{\normalfont}{}{\bfseries}{.}{.5em}
  {\thmname{#1}\thmnumber{ (#2)}\thmnote{ \textnormal{(#3)}}}

\theoremstyle{suppplain}
\newtheorem{supptheorem}{Theorem}[section]
\newtheorem{supplemma}[supptheorem]{Lemma}
\newtheorem{suppproposition}[supptheorem]{Proposition}
\newtheorem{suppcorollary}[supptheorem]{Corollary}
\newtheorem{suppconjecture}[supptheorem]{Conjecture}

\theoremstyle{suppdefinition}
\newtheorem{suppdefinition}[supptheorem]{Definition}
\newtheorem*{suppremark}{Remark}

\crefname{supptheorem}{theorem}{theorems}
\Crefname{supptheorem}{Theorem}{Theorems}
\crefname{supplemma}{lemma}{lemmas}
\Crefname{supplemma}{Lemma}{Lemmas}
\crefname{suppproposition}{proposition}{propositions}
\Crefname{suppproposition}{Proposition}{Propositions}
\crefname{suppcorollary}{corollary}{corollaries}
\Crefname{suppcorollary}{Corollary}{Corollaries}
\crefname{suppconjecture}{conjecture}{conjectures}
\Crefname{suppconjecture}{Conjecture}{Conjectures}
\crefname{suppdefinition}{definition}{definitions}
\Crefname{suppdefinition}{Definition}{Definitions}
\begin{document}

\preprint{APS/123-QED}

\title{A depolarizing choir sings in Gaussian harmony}  

\author{Rabsan Galib Ahmed$^{1}$}
\email{rgahmed@uwaterloo.ca}
\author{Sujeet Bhalerao$^{2}$}%
\email{sgb4@illinois.edu}
\author{Sungjai Lee$^{1}$}%
\email{sungjai.lee@uwaterloo.ca}
\author{Felix~Leditzky$^{2}$}%
\email{leditzky@illinois.edu}
\author{Debbie Leung$^{1,3}$}%
\email{wcleung@uwaterloo.ca}
\author{Luke Schaeffer$^{1}$}%
\email{lschaeffer@uwaterloo.ca}
\author{Graeme Smith$^{1}$}%
\email{graeme.smith@uwaterloo.ca}
\affiliation{$^1$University of Waterloo, Waterloo, Ontario, Canada.\\
$^2$University of Illinois Urbana-Champaign, Urbana, Illinois, USA.\\
$^3$Perimeter Institute for Theoretical Physics, Waterloo, Ontario, Canada.
}%


\date{October 1, 2026}

\begin{abstract}
We study the noise threshold for positive quantum capacity for the qubit depolarizing channel. We explore analytically the action of the qubit depolarizing channel on the symmetric subspaces of the input qubits, in the limit of asymptotically many uses of the channel. We observe the emergence of a bosonic Gaussian channel.  Furthermore, the codes previously developed for the depolarizing channel can be translated to codes for the emergent Gaussian channel, and it is easier to further optimize these codes for the simpler emergent Gaussian channel.  Translating these codes back to the depolarizing channel leads to extremely good input states for the coherent information of the depolarizing channel producing new lower bounds on the noise threshold for positive capacity. In addition to improved lower bounds on the threshold, this newly found link between depolarizing noise and Gaussian channels offers a novel perspective contributing to our understanding of these symmetric codes.
\end{abstract}

\maketitle

A fundamental question in quantum information theory addresses the maximum achievable rate of reliable quantum communication over asymptotically many uses of a given channel, known as the quantum capacity $Q$ of the channel. In contrast to the analogous quantity for a classical channel, for which there is a single-letter optimization to evaluate the capacity \cite{shannon}, the quantum capacity of a quantum channel requires an optimization of \textit{coherent information} over a potentially unbounded number of channel uses \cite{lloyd1997capacity,shor2002quantum,devetak2005private}  
 making the quantification difficult. Even certifying when a channel has positive capacity seems challenging \cite{cubitt2015unbounded}. 

A classical binary symmetric channel, which flips a classical bit with some probability, can be generalized to the quantum setting in three ways: a quantum bit flip, which applies the Pauli-$X$ operator to a qubit; a phase flip applying the Pauli-$Z$ operator; and a bit-phase flip, applying both flips resulting in a Pauli-$Y$ error.
The qubit depolarizing channel applies each of the three errors with equal probability:
it leaves the input qubit state unchanged with probability $1-p$, and applies any of the Pauli operators $X,Y$ and $Z$ with equal probabilities $p/3$, $\mathcal{D}_p(\rho) = (1-p)\rho + (p/3)(X\rho X+ Y\rho Y+Z\rho Z)$. Alternatively, one can parameterize the depolarizing channel by its transmissivity, $\eta = (1-4p/3)$, the probability with which the input state is left unchanged and otherwise replaced by a maximally mixed state: 
\begin{equation}
\mathcal{D}_\eta(\rho) = \eta \rho + (1-\eta) \frac{\mathbb{I}}{2}\Tr\rho .
\end{equation}

Depolarizing noise is a fundamental and well-studied error model in quantum information processing, yet the quantum capacity of this channel is unknown.
In fact, the supremum value of the noise parameter $p$ for which the quantum capacity of $\mathcal{D}_p$ is positive, called the threshold $p_{th}$, is not known exactly, except for lower bounds $p_{th} \gtrsim 0.1940$ obtained by explicit inner codes (which are then concatenated with a random code) \cite{divincenzo1998capacity,shor1996quantumerrorcorrectingcodesneed,smith2006degenerate,fern2008lower,bausch2021error,agarwal2026enhanced}, and an upper bound of $p_{th} \leq 1/4$ obtained by a no-cloning argument~\cite{bruss1998optimal}. 

Recently, reference \cite{agarwal2026enhanced} reported a substantial improvement on the lower bound on $p_{th}$.  Using a representation-theoretic framework developed in \cite{bhalerao2025improvingquantumcommunicationrates}, they  restricted the input state of $M$ copies of $\mathcal{D}_{p}$ to the symmetric subspace of $M$ qubits and further optimized this ansatz over low-dimensional code spaces.  However, computational requirements limit reliable results to $M \approx 45$, despite an upward trend of the lower bound with $M$.  Furthermore, there is no compelling explanation for the numerically optimal solutions. Very small numerical improvements have been found by LLMs subsequently \cite{Anthony,Artus}.

In this work, we explore analytically the action of the qubit depolarizing channel with transmissivity $\eta$ on the symmetric subspaces of the input qubits, in the limit of asymptotically many uses of the channel. We observe the \emph{emergence} of a Gaussian thermal attenuator of transmissivity $\eta$ and environmental mean photon number $(2\eta)^{-1}$. This emergence is qualitatively similar to the local asymptotic normality~\cite{Gu__2006}. Furthermore, the codes previously developed for the depolarizing channel can be translated to codes for the emergent Gaussian channel, and it is easier to further optimize these codes for the simpler emergent Gaussian channel. Translating these codes back to the depolarizing channel leads to extremely good input states for the coherent information of the depolarizing channel, producing a new improved lower bound $p_{th}\gtrsim 0.2029$. In addition to improved thresholds, this newly found link between depolarizing noise and Gaussian thermal attenuation offers a novel perspective contributing to our understanding of these symmetric codes. 

As an aside, this study also enables progress for the noise threshold problem for the Gaussian channel with this parameter relation.  Our work provided several codes that outperform any Gaussian code for our Gaussian channel. The first such code was found only very recently in \cite{mele2026bosonicthermal} for similar Gaussian channels, but for a different parameter range.

\textit{Preliminaries ---}
For a Hilbert space $\mathcal{H}$ we denote by $\mathcal{L}(\mathcal{H})$
the algebra of linear operators on $\mathcal{H}$.
The coherent information of a state $\rho_A$ with purification $\ket{\psi_{RA}}$ through a
channel $\mathcal{N}$ is $$I_c(\rho,\mathcal{N})=S(\mathcal{N}(\rho))-S\bigl((\mathrm{id}_R\otimes\mathcal{N})(\psi)\bigr).$$
The quantum capacity is then given by the regularized expression \cite{lloyd1997capacity,shor2002quantum,devetak2005private}
$$Q(\mathcal{N})=\lim_{n\to\infty}\frac1n\max_\rho I_c(\rho,\mathcal{N}^{\otimes n}).$$
For any state $\rho$ on $\Sym^M(\mathbb{C}^2)$ one has
$$Q(\cD_\eta)\ge\frac1M I_c(\rho,D_\eta^{\otimes M}),$$ thus finding a state $\rho$ on $\Sym^M(\mathbb{C}^2)$ for which $I_c(\rho,\cD_\eta^{\otimes M}) > 0$ certifies $p = \frac{3}{4}(1-\eta)$ as a lower bound for the quantum capacity threshold of the channel.

The groups $\mathrm{GL}(2)$ and $S_M$ act on $(\mathbb{C}^2)^{\otimes M}$
by $A \mapsto A^{\otimes M}$ and by permuting tensor factors, respectively. These actions commute, and Schur-Weyl duality states that these two representations span each other's commutants in $\cL((\mathbb{C}^2)^{\otimes M})$, yielding a decomposition of the representation space as \cite{goodman2009symmetry,etingof2011introduction,fulton2013representation} 
\begin{equation}
  (\mathbb{C}^2)^{\otimes M} \cong \bigoplus_{N} \mathcal{Q}_{M,N} \otimes \mathcal{P}_{M,N},
  \label{eq:schur-weyl-duality}
\end{equation}
where $N \in \{M, M{-}2, \dots\}$ labels the Young diagram 
$\big(\tfrac{M{+}N}{2}, \tfrac{M{-}N}{2}\big)$ (we will call $N$ the spin sector, or simply sector), $\tfrac{N}{2}$ is called the \emph{total spin} or \emph{total angular momentum}, where $N=0,2,\dots,M$ for even $M$ and $N=1,3,\dots, M$ for odd $M$, 
$\mathcal{Q}_{M,N}$ is a $\mathrm{GL}(2)$-irreducible representation (irrep),
and $\mathcal{P}_{M,N}$ is an $S_M$-irrep of dimension 
\begin{align}\label{eq: m_MN}\snIrrepDim  = \binom{M}{\frac{M-N}{2}} - \binom{M}{\frac{M-N}{2}-1}.
\end{align}
By Schur's Lemma,
any permutation-invariant operator has the form $\bigoplus_N X_N \otimes \one_{M,N}$ with respect to \eqref{eq:schur-weyl-duality}, where $\one_{M,N}$ denotes the identity on $\cP_{M,N}$.
Independent and identically distributed (IID) operators $A^{\otimes M}$ can be expressed explicitly as \cite{goodman2009symmetry}
\begin{align} 
A^{\otimes M} & \cong \bigoplus_N (\det A)^{\frac{M-N}{2}} S_N(A) \otimes \mathbb{I}_{M,N} \,, 
\label{eq:schurlemma} 
\end{align} 
where $S_N(A) = P_N A^{\otimes N} P_N$ 
is the restriction of $A^{\otimes N}$ to $\Sym^N(\mathbb{C}^2)$, and $P_N$ is the projection operator onto $\Sym^N(\mathbb{C}^2)$. 
When $N = M$, $\mathcal{Q}_{M,N}$ is the symmetric subspace, with Dicke basis 
\begin{equation}
\dicke{M}{k} = \binom{M}{k}^{-1/2}\sum_{\mathrm{wt}(x)=k}\ket{x}\,.
\end{equation}
We refer the interested readers to \cite{goodman2009symmetry,fulton2013representation} for details.  

We use two maps between symmetric subspaces.  The first is the loss channel
$$\traceChannel(X)=\tr_{M-s}X \,,$$ which maps $\mathcal{L}(\Sym^M(\mathbb{C}^2))$ to
$\mathcal{L}(\Sym^s(\mathbb{C}^2))$ and satisfies ${\traceChannel[s][r] \circ \traceChannel[r][M] = \traceChannel[s][M]}$. 
The second map is the optimal universal cloner \cite{Werner1998}; 
in terms of the projector $P_N$ onto $\Sym^N(\C^2)$, it is 
\begin{equation}
\cloneChannel (Y)=\frac{s+1}{N+1}\,P_N\big(Y\otimes \mathbb{I}_2^{\otimes(N-s)}\big)P_N . 
\end{equation}

Channels in optical quantum communication are best described in terms of bosonic systems. A single bosonic mode corresponds to a separable Hilbert space, spanned by the Fock states: $\bigl\{\ket{k}: k\in \{0\}\cup\mathbb{N} \bigr\}$. Physically, $k$ corresponds to the number of photons present in the mode. Among several examples of bosonic channels, the two most relevant to this work are the single mode pure-loss attenuation channel, $\mathcal{L}_T$ and the quantum-limited amplification channel, $\mathcal{A}_{G}$~\cite{Holevo2001}. Mathematically, a pure-loss attenuation channel is realized by mixing the input mode with some environmental vacuum mode on a beam splitter of transmissivity $0\leq T \leq 1$, followed by tracing out the environment~\cite{eisert2005}. Physically, it arises when a fraction $1-T$ of a signal is absorbed. A quantum-limited amplification channel is mathematically realized by jointly acting on the input mode and some environmental vacuum mode with a two-mode squeezer with gain $G\ge 1$, followed by tracing out the environment~\cite{eisert2005}. Physically, the signal is amplified by a factor $G$ with some added noise.

The composition of two Gaussian channels is also a Gaussian channel. In particular, for $G T<1$, the composition $\mathcal{A}_G\circ \mathcal{L}_T$ is a Gaussian thermal attenuation channel with transmissivity $GT$ and an environmental mean photon number $(G-1)/(1-GT)$. Physically, a thermal attenuator is the same as a pure-loss attenuation with the environment being prepared in a thermal state instead of a vacuum. 


\textit{Main results--} Consider a state $\rho \in {\cal L}(\Sym^M(\C^2))$
which is permutation-invariant.  
After the application of $M$ IID~depolarizing channels on each individual qubit, which are permutation-covariant, and thus preserving the permutation invariance of any input state, 
it follows that the output $\mathcal{D}_{\eta}^{\otimes M}(\rho)$ is also permutation-invariant, 
and decomposes into different spin sectors labeled by $N$  
and can thus be written as
\begin{align} 
\mathcal{D}_\eta^{\otimes M}(\rho) = \bigoplus_{N} \unnormalPhiChannel(\rho) \otimes \frac{\mathbb{I}_{M,N}}{\snIrrepDim}
\end{align}
for some linear completely positive maps
$\unnormalPhiChannel$. We have the following theorem \footnote{Note that throughout the manuscript we are suppressing the dependence on $\eta$ of scalar quantities like $p_N$ to increase readability.
Furthermore, we often also suppress dependence on $M$, e.g., for $p_N$ defined in \eqref{eq: p_N} and $\pi_s$ defined in \eqref{eq: pi_s}.}.  
\begin{theorem}\label{thm:decomp-main}
    For any operator $X\in \cL(\Sym^M(\C^2))$,
    \begin{align}
        \mathcal{D}_\eta^{\otimes M}(X) = \bigoplus_{N}\; \pDist{N}\;\PhiChannel(X) \otimes \frac{\mathbb{I}_{M,N}}{\snIrrepDim}
    \end{align}
    for some quantum channels $\PhiChannel\colon {\cal L}(\Sym^M(\C^2))$ $\to$ ${\cal L}(\Sym^N(\C^2))$ and a fixed probability distribution
    $\{\pDist{N}\}_N$:
    \begin{align}\label{eq: p_N}
        &\hspace*{-2ex} \pDist{N}  = \snIrrepDim \left(\frac{1-\eta^2}{4}\right)^{\hspace*{-1ex}\frac{M-N}{2}} \hspace*{-0.5ex} \frac{(1+\eta)^{N+1}-(1-\eta)^{N+1}}{2^{N+1}\eta}, \hspace*{-1ex}
    \end{align}
    with $\snIrrepDim=\dim\cP_{M,N}$ defined in \eqref{eq: m_MN}.
\end{theorem}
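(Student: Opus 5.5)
The plan is to show that each unnormalized block map $\unnormalPhiChannel$ scales the trace by a constant factor on $\cL(\Sym^M(\C^2))$. Concretely, I will prove $\Tr\unnormalPhiChannel(X)=\pDist{N}\Tr X$ for all $X$, with $\pDist{N}$ as in \eqref{eq: p_N}. Then $\PhiChannel:=\unnormalPhiChannel/\pDist{N}$ is completely positive, since $\unnormalPhiChannel$ is, and trace preserving. Moreover $\sum_N\pDist{N}=1$ follows from trace preservation of $\mathcal{D}_\eta^{\otimes M}$. Summing over blocks gives $\Tr\mathcal{D}_\eta^{\otimes M}(X)=\sum_N \Tr\unnormalPhiChannel(X)=\sum_N \pDist{N}\Tr X$, and the left-hand side equals $\Tr X$.

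\emph{Step 1 (trace is proportional to $\Tr X$).} The depolarizing channel is unitarily covariant, $\mathcal{D}_\eta(U\rho U^\dagger)=U\mathcal{D}_\eta(\rho)U^\dagger$. By \eqref{eq:schurlemma}, $U^{\otimes M}$ acts on sector $N$ as $(\det U)^{\frac{M-N}{2}}S_N(U)\otimes\one_{M,N}$. Hence $\unnormalPhiChannel(S_M(U)XS_M(U)^\dagger)=S_N(U)\unnormalPhiChannel(X)S_N(U)^\dagger$, because the phases $(\det U)^{\cdots}$ cancel between $U$ and $U^\dagger$. It follows that $X\mapsto\Tr\unnormalPhiChannel(X)$ is a linear functional on $\cL(\Sym^M(\C^2))$ invariant under conjugation by $S_M(U)$. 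Write it as $\Tr(E_NX)$, with $E_N=(\unnormalPhiChannel)^*(\one)$ restricted to $\Sym^M(\C^2)$. Invariance forces $E_N$ to commute with all $S_M(U)$, $U\in SU(2)$. Since $\Sym^M(\C^2)$ is irreducible under $SU(2)$, Schur's lemma gives $E_N=c_N P_M$, i.e.\ $\Tr\unnormalPhiChannel(X)=c_N\Tr X$. This covariance/Schur step is the conceptual core. The only care needed is to track the determinant phases and to handle the identification of sector $N$ of $(\C^2)^{\otimes M}$ with $\Sym^N(\C^2)\otimes\cP_{M,N}$.

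\emph{Step 2 (computing $c_N$).} It suffices to evaluate on the single state $X=\ketbra{D^M_0}=\ketbra{0}^{\otimes M}$. Its output is the IID operator $A^{\otimes M}$ with $A=\diag(a,b)$, where $a=\frac{1+\eta}{2}$ and $b=\frac{1-\eta}{2}$. By \eqref{eq:schurlemma} the weight in sector $N$ is
\begin{align}
c_N=\snIrrepDim\,(ab)^{\frac{M-N}{2}}\Tr S_N(A).
\end{align}
The factor $\snIrrepDim$ enters because the block is written as $\unnormalPhiChannel(X)\otimes\one_{M,N}/\snIrrepDim$. In the Dicke basis $S_N(A)$ is diagonal with entries $a^kb^{N-k}$. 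Therefore
\begin{align}
\Tr S_N(A)=\sum_{k=0}^N a^kb^{N-k}=\frac{a^{N+1}-b^{N+1}}{a-b}.
\end{align}
Substituting $a-b=\eta$ and $ab=\frac{1-\eta^2}{4}$ reproduces \eqref{eq: p_N} exactly, so $c_N=\pDist{N}$. Positivity of $\pDist{N}$ is manifest for $0<\eta\le 1$, and normalization was obtained above, which completes the argument.
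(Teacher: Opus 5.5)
Your proposal is correct and follows the paper's proof essentially step for step. Input-independence of the sector weight comes from unitary covariance plus Schur's lemma on the irreducible $\Sym^M(\C^2)$ (the paper phrases this via $P_M(\mathcal{D}_\eta^{\otimes M})^*(\Pi_\lambda)P_M$), then $p_N$ is computed on $\ketbra{0}{0}^{\otimes M}$ from the determinant-twisted sector form of $A^{\otimes M}$, and your explicit normalization via trace preservation is a welcome addition. One minor slip: at $\eta=1$ one has $ab=0$, hence $p_N=0$ for all $N<M$, so $p_N$ is not positive there; on those sectors $\Phi^\eta_{M\to N}$ can be chosen arbitrarily, which is why the paper only normalizes ``whenever $p_N>0$''.
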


\begin{proof}[Proof sketch] 
Using the unitary covariance of the depolarizing channel, one can show that, for any operator $X\in {\cal L}(\Sym^M(\C^2))$, 
$\tr \unnormalPhiChannel(X)  = \pDist{N} \tr X$ where $\pDist{N}$ is independent of $X$ (see Lemma~\ref{lem:sector-probability-input-ind} in Supplemental), and $\left\{\pDist{N} \right\}_N$ forms a probability distribution.
Furthermore, as the $N/2$-spin sector is $(N+1)$-dimensional, the corresponding $\GL(2)$-irreps are isomorphic to the symmetric subspace of $N$ qubits.

    Since $\pDist{N}$ does not depend on the input state, we may evaluate it at  $\ketbra{0}{0}^{\otimes M}$. The channel output is $A^{\otimes M}$ with $A = \diag(a,b)$, where $a = \frac{1+\eta}{2}$ and $b = \frac{1-\eta}{2}$. 
    By the explicit expression in Eq.~\eqref{eq:schurlemma},  $A^{\otimes M}$ acts on the $N$-th sector as $(\det A)^{\frac{M-N}{2}} S_N(A) \otimes \mathbb{I}_{M,N}$. 
    Taking the trace gives
\begin{align*}
\pDist{N} &= \snIrrepDim\,(ab)^{\frac{M-N}{2}}\,q_N,\\
q_N &\coloneqq \Tr\,S_N(A) = \frac{a^{N+1}-b^{N+1}}{\eta},
\end{align*}
which gives \eqref{eq: p_N} with $a = \frac{1+\eta}{2}$ and $b = \frac{1-\eta}{2}$.
\end{proof}

Next we demonstrate that each of these conditional channels $\PhiChannel$ is equivalent to randomly losing $M-s$ qubits and optimally cloning the state of the $s$ qubits to $N$ qubits \footnote{Note that as the state of the $M$ qubits we begin with is permutation-invariant, it is irrelevant which $s$ qubits are retained}.

\begin{theorem} The conditional channels satisfy
    \begin{align}\label{eq:irrep-channel-decomp}
        \PhiChannel
        =
        \sum_{s=0}^{N}
        \piDist{s}\,
        \cloneChannel
        \circ
        \traceChannel,
    \end{align}
where the probability distribution over $s$, the number of intermediate retained qubits, is given by 
\begin{align}\label{eq: pi_s}
    \piDist{s} = {N+1 \choose s+1}\frac{(2\eta)^{s+1}(1-\eta)^{N-s}}{(1+\eta )^{N+1}- (1-\eta)^{N+1}},
\end{align}
and $\traceChannel \colon {\cal L}(\Sym^M(\C^2))$ $\to$ ${\cal L}(\Sym^s(\C^2))$ and $\cloneChannel \colon {\cal L}(\Sym^s(\C^2))$ $\to$ ${\cal L}(\Sym^N(\C^2))$ are the loss and cloning channels, respectively. 
\end{theorem}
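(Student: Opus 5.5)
The plan is to show that both sides of \eqref{eq:irrep-channel-decomp} are $\mathrm{SU}(2)$-covariant linear maps. Then it suffices to compare them on a single input, $\ketbra{0}{0}^{\otimes M}$, where everything is diagonal in the Dicke basis and the identity reduces to a binomial sum.

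\emph{Step 1 (covariance).} Since $\cD_\eta(U\rho U^\dagger)=U\cD_\eta(\rho)U^\dagger$, we have $\cD_\eta^{\otimes M}(U^{\otimes M}XU^{\dagger\otimes M})=U^{\otimes M}\cD_\eta^{\otimes M}(X)U^{\dagger\otimes M}$. By \eqref{eq:schurlemma} with $\det U=1$, $U^{\otimes M}$ acts on the $N$-th sector as $S_N(U)\otimes\one_{M,N}$. Projecting onto that sector and using Theorem~\ref{thm:decomp-main} gives
\[
\PhiChannel(S_M(U)XS_M(U)^\dagger)=S_N(U)\,\PhiChannel(X)\,S_N(U)^\dagger .
\]
The partial trace $\traceChannel$ and the universal cloner $\cloneChannel$ are likewise covariant, because $P_N$ commutes with $U^{\otimes N}$. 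Hence the right-hand side of \eqref{eq:irrep-channel-decomp} is covariant too.

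\emph{Step 2 (reduction to one input).} The operators $\ketbra{\psi}{\psi}^{\otimes M}$, for $\psi\in\C^2$, span $\cL(\Sym^M(\C^2))$; this is the standard polarization argument showing that they span all Hermitian operators on the symmetric subspace. By linearity it therefore suffices to check \eqref{eq:irrep-channel-decomp} on these inputs. Each of them equals $S_M(U)\ketbra{0}{0}^{\otimes M}S_M(U)^\dagger$ for some $U\in\mathrm{SU}(2)$, so by Step 1 it suffices to check the single input $\ketbra{0}{0}^{\otimes M}$.

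\emph{Step 3 (left side).} As in the proof sketch of Theorem~\ref{thm:decomp-main}, the output is $A^{\otimes M}$ with $A=\diag(a,b)$, where $a=\frac{1+\eta}{2}$ and $b=\frac{1-\eta}{2}$. Its $N$-th sector component is $(ab)^{\frac{M-N}{2}}S_N(A)$. Dividing by $\pDist{N}/\snIrrepDim$ gives
\[
\PhiChannel(\ketbra{0}{0}^{\otimes M})=\frac{S_N(A)}{q_N}=\frac{1}{q_N}\sum_{k=0}^N a^{N-k}b^k\,\dicke{N}{k}\!\bra{D^N_k}.
\]

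\emph{Step 4 (right side).} First, $\traceChannel(\ketbra{0}{0}^{\otimes M})=\ketbra{0}{0}^{\otimes s}$. Next, $P_N(\ketbra{0}{0}^{\otimes s}\otimes\one^{\otimes(N-s)})P_N$ is diagonal in the Dicke basis, since it commutes with the $Z$-rotations $S_N(e^{i\theta Z})$. Its Dicke weight is
\[
\bra{D^N_k}\bigl(\ketbra{0}{0}^{\otimes s}\otimes\one\bigr)\dicke{N}{k}=\binom{N-s}{k}\Big/\binom{N}{k},
\]
which counts the weight-$k$ strings whose ones all lie in the last $N-s$ positions. So the coefficient of $\dicke{N}{k}\!\bra{D^N_k}$ on the right side is
\[
\sum_s\piDist{s}\,\frac{s+1}{N+1}\,\binom{N-s}{k}\Big/\binom{N}{k}.
\]

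\emph{Step 5 (the binomial identity).} Using $\binom{N+1}{s+1}\frac{s+1}{N+1}=\binom{N}{s}$ and $(1+\eta)^{N+1}-(1-\eta)^{N+1}=2^{N+1}\eta\,q_N$, we get
\[
\piDist{s}\,\frac{s+1}{N+1}=\binom{N}{s}\frac{\eta^s b^{N-s}}{q_N}.
\]
Then $\binom{N}{s}\binom{N-s}{k}=\binom{N}{k}\binom{N-k}{s}$ collapses the coefficient to
\[
\frac{1}{q_N}\sum_s\binom{N-k}{s}\eta^s b^{N-s}=\frac{b^k(\eta+b)^{N-k}}{q_N}=\frac{a^{N-k}b^k}{q_N},
\]
because $\eta+b=a$. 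This matches Step 3.

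As a side check, summing $\piDist{s}$ over $s$ by the binomial theorem gives $1$, so $\{\piDist{s}\}_s$ is a probability distribution and the right side is a genuine channel.

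I expect the main obstacle to be Step 1. One must check that the sector projection used to define $\PhiChannel$ really intertwines $S_M(U)$ with $S_N(U)$ under the Schur–Weyl isomorphism, and this relies on using $\mathrm{SU}(2)$ rather than $\mathrm{GL}(2)$ so that the determinant factor is trivial. The remaining steps are explicit computation.
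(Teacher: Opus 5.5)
Your proof is correct, but it takes a different route from the paper. The paper first factors $\PhiChannel=\Theta_N\circ\traceChannel[N]$ with $\Theta_N(X)=q_N^{-1}P_N\mathcal{D}_\eta^{\otimes N}(X)P_N$. This is checked on product inputs and is essentially your Step~3 transported by the group action. The paper then decomposes $\Theta_N$ as an operator identity valid for every symmetric $X$. It writes $\mathcal{D}_\eta=\eta\,\mathrm{id}+b\,\cT$, expands the $N$-fold tensor power, and uses permutation invariance to collect the $\binom{N}{s}$ terms with $s$ untouched qubits into $P_N\bigl(\traceChannel[s][N](X)\otimes\one_2^{\otimes(N-s)}\bigr)P_N$. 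The weights $\pi_s\propto\binom{N+1}{s+1}\eta^s b^{N-s}$ thus fall out of the calculation, and composing with $\traceChannel[s][N]\circ\traceChannel[N][M]=\traceChannel$ finishes.

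You instead verify the stated identity. Unitary covariance of both sides plus the spanning property collapse everything to the single diagonal input $\ketbra{0}{0}^{\otimes M}$. There the check reduces to the Dicke-weight count $\binom{N-s}{k}/\binom{N}{k}$ and a binomial sum via $\binom{N}{s}\binom{N-s}{k}=\binom{N}{k}\binom{N-k}{s}$.

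Your route is shorter and fully explicit. It bypasses both the intermediate map $\Theta_N$ and the determinant computation for $\mathcal{D}_\eta(\ketbra{u}{v})$. The paper's route derives the mixing weights rather than presupposing them. It also gives the operational reading directly: each qubit is kept with weight $\eta$ or replaced by $\one_2$ with weight $b$. Its expansion step needs only permutation invariance, not unitary covariance.

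On the obstacle you flag in Step~1: restricting to $\mathrm{SU}(2)$ is harmless but unnecessary. For unitary $U$ the factor $(\det U)^{(M-N)/2}$ has modulus one and cancels under conjugation. The only real input is the identification of $\mathcal{Q}_{M,N}$ with $\Sym^N(\C^2)$ under which $U^{\otimes M}$ acts on the $N$-th sector as $(\det U)^{(M-N)/2}S_N(U)\otimes\one_{M,N}$. The paper's proof relies on this as well.
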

\begin{proof}[Proof sketch]
    We first show that $\PhiChannel =\Theta_N\circ \traceChannel[N]$, where
\begin{equation*}
\Theta_N(X) = q_N^{-1}\,P_N\,\mathcal{D}_\eta^{\otimes N}(X)\,P_N
\end{equation*}
and $P_N$ is the projector onto $\Sym^N(\C^2)$. By
linearity it suffices to check this equality on $\ketbra{u}{u}^{\otimes M}$, since these operators span ${\cal L}(\Sym^M(\C^2))$ (see \cite[Section 1.1]{harrow2013church}). The output of such an input
is $A_u^{\otimes M}$, where $A_u := \mathcal{D}_\eta(\ketbra{u}{u})$
has eigenvalues $a$ and $b$. Repeating the sector computation from the
proof of Theorem \ref{thm:decomp-main} with $A$ replaced by $A_u$ gives
$\PhiChannel (\ketbra{u}{u}^{\otimes M}) = S_N(A_u)/q_N$,
which equals $\Theta_N \circ \traceChannel[N] (\ketbra{u}{u}^{\otimes M})$. Next, write $\mathcal{D}_\eta=\eta\,\mathrm{id}+b\, \mathcal{T}$ with
$\mathcal{T}(\cdot)=\tr(\cdot)\,\mathbb{I}_2$. 
Expanding
$\mathcal{D}_\eta^{\otimes N}$ and using the permutation invariance
of $X$,
\begin{multline}
P_N\mathcal{D}_\eta^{\otimes N}(X)P_N
\\= \sum_{s=0}^{N}\binom{N+1}{s+1}\eta^s b^{N-s}\,
\cloneChannel \big(\traceChannel[s][N](X)\big),
\end{multline}
where
$\cloneChannel (Y)=\tfrac{s+1}{N+1}P_N\big(Y\otimes\mathbb{I}_2^{\otimes(N-s)}\big)P_N$
is the optimal cloner of \cite{Werner1998}. Dividing by $q_N$ gives
the weights in Eq.~\eqref{eq: pi_s}. Finally, composing with $\traceChannel[N]$ and using $\traceChannel[s][N] \circ \traceChannel[N][M]  =\traceChannel[s][M]$ gives
Eq.~\eqref{eq:irrep-channel-decomp}.
\end{proof}

Now, we begin to analyze the asymptotic behavior of $\mathcal{D}_\eta^{\otimes M}$ as $M$, the number of the input qubits, grows. First, we note that $\pDist{N}$ exhibits a concentration phenomenon for the typical spin-sector, leading to $N/M$ getting concentrated around $\eta$ in the large $M$ limit.
\begin{lemma}\label{lem: con.in.N}
    For any $\delta_1 >0$ and every $M$,
    \begin{align}
        \mathbb{P}_{\pDist{N}}\left(\abs{\frac{N}{M} - \eta}>\delta_1\right) \leq \frac{1+\eta}{\eta}\;\exp\left(-\frac{M\delta_1^2}{2}\right).
    \end{align}
\end{lemma}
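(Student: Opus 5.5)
The plan is to dominate the sector distribution $\pDist{N}$ pointwise by a constant multiple of a binomial probability mass function, and then apply Hoeffding's inequality to the binomial tail. Write $a=\frac{1+\eta}{2}$, $b=\frac{1-\eta}{2}$, so that $ab=\frac{1-\eta^2}{4}$. Reparametrize the sector by the number of ``singlet pairs'' $k=\frac{M-N}{2}\in\{0,1,\dots,\lfloor M/2\rfloor\}$, so that $N=M-2k$. By \eqref{eq: p_N} (equivalently, the proof sketch of Theorem~\ref{thm:decomp-main}),
\begin{equation*}
\pDist{N} = \snIrrepDim\,(ab)^{k}\,\frac{a^{N+1}-b^{N+1}}{\eta}.
\end{equation*}

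The first step is to bound $\pDist{N}$ above. From \eqref{eq: m_MN}, $\snIrrepDim\le\binom{M}{k}$. We also have $a^{N+1}-b^{N+1}\le a^{N+1}$. Together these give
\begin{equation*}
\pDist{N}\le \frac{a}{\eta}\binom{M}{k}a^{M-2k+k}b^{k}=\frac{1+\eta}{2\eta}\,\binom{M}{k}a^{M-k}b^{k}.
\end{equation*}
The right-hand side is $\frac{1+\eta}{2\eta}$ times the probability that a random variable $K\sim\Binom(M,b)$ equals $k$. This is the key observation: up to the prefactor, the sectors are distributed like the number of ``depolarized'' pairs in $M$ Bernoulli trials with success probability $b$.

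The second step translates the event. Since $\eta=1-2b$, we have $\frac{N}{M}-\eta=1-\frac{2k}{M}-1+2b=2\bigl(b-\frac{k}{M}\bigr)$. Hence $\abs{\frac{N}{M}-\eta}>\delta_1$ holds if and only if $\abs{\frac{k}{M}-b}>\frac{\delta_1}{2}$. Summing the pointwise bound over all sectors $N$ in this event, and extending the sum to all $k\in\{0,\dots,M\}$ satisfying the same condition (which only adds nonnegative terms), gives
\begin{equation*}
\mathbb{P}_{\pDist{N}}\Bigl(\abs{\tfrac{N}{M}-\eta}>\delta_1\Bigr)\le\frac{1+\eta}{2\eta}\,\mathbb{P}\Bigl(\abs{\tfrac{K}{M}-b}>\tfrac{\delta_1}{2}\Bigr).
\end{equation*}

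The final step applies the two-sided Hoeffding inequality to $K$, which gives the bound $2\exp\bigl(-2M(\delta_1/2)^2\bigr)=2\exp(-M\delta_1^2/2)$. Multiplying by the prefactor yields exactly the stated constant $\frac{1+\eta}{\eta}$.

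There is no serious obstacle here. The only delicate point is spotting that dropping $b^{N+1}$ and bounding $\snIrrepDim$ by the binomial coefficient turns $\pDist{N}$ into a binomial mass function with the correct exponents $a^{M-k}b^k$. Once that identification is made, the remaining steps are bookkeeping of constants, which should be checked carefully so that the prefactor comes out as $(1+\eta)/\eta$ and the exponent as $M\delta_1^2/2$.
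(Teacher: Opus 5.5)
Your proof is correct and follows the paper's route. Like the paper, you dominate $p_N$ pointwise by $\frac{1+\eta}{2\eta}$ times a binomial mass function and then apply two-sided Hoeffding with deviation $\delta_1/2$. Dropping the subtracted binomial coefficient and the $b^{N+1}$ term is exactly the paper's step of bounding the factors $\frac{2N+2}{M+N+2}$ and $1-\bigl(\frac{1-\eta}{1+\eta}\bigr)^{N+1}$ by $1$. The only cosmetic difference is that you count $k=\frac{M-N}{2}\sim\mathrm{Binom}(M,b)$, whereas the paper uses the complementary variable $X_M=\frac{M+N}{2}\sim\mathrm{Binom}(M,a)$.
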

\begin{proof}[Proof sketch]
With some simple algebraic manipulation, as elaborated in the supplemental material, we can write
\begin{align}
    \pDist{N} \leq \frac{1+\eta}{2\eta}\;p_{\mathrm{Bin}}\mleft(X_M = \frac{M+N}{2}\mright).
\end{align}
Here we have introduced the binomially distributed random variable $X_M = \sum_{i=1}^M x_i$, where the $x_i$'s are IID random variables taking values $1$ with probability $(1+\eta)/2$ and $0$ with probability $(1-\eta)/2$. Therefore,
\begin{align}
    &\mathbb{P}_{\pDist{N}}\mleft(\abs{\frac{N}{M} - \eta}>\delta_1\mright) =\mathbb{P}_{\pDist{N}}\mleft(\abs{N - \eta M}>\delta_1 M\mright)\nonumber\\
    \leq\; &\frac{1+\eta}{2\eta}\; \mathbb{P}_{\mathrm{Bin}} \mleft(\abs{X_M - \frac{1+\eta}{2}M}> \frac{M\delta_1}{2}\mright)\nonumber\\
    \leq \; & \frac{1+\eta}{\eta}\;\exp\mleft(-\frac{M\delta_1^2}{2}\mright), 
\end{align}
and the last line comes from Hoeffding's inequality.
\end{proof}

From the Lemma above, we see that as $M$ grows, the typical $N$ also grows. This leads to a second concentration phenomenon for $\piDist{s}$ for such typical sectors. We see that $s/N$ concentrates around $2\eta/(1+\eta)$ in the large $N$ limit.
\begin{lemma}\label{lem:con.in.s}
    For any $1\geq \delta_2 >0$, and $c=e^{4\frac{1-\eta}{1+\eta}}$, we have that for 
    sufficiently large $N$,
    \begin{align}\label{eq:con.in.s-main}
\mathbb{P}_{\piDist{s}}\left(\abs{\frac{s}{N} - \frac{2\eta}{1+\eta}}>\delta_2\right) \leq \frac{1+\eta}{\eta}c\;\exp\left(-\frac{2N^2\delta_2^2}{(N+1)}\right).
    \end{align}
\end{lemma}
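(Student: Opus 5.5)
The plan is to recognize $\piDist{s}$ as a shifted, conditioned binomial distribution and then apply Hoeffding's inequality, in the same way as in the proof of Lemma~\ref{lem: con.in.N}. Set $t = s+1 \in \{1,\dots,N+1\}$ and $q = \frac{2\eta}{1+\eta}$, so that $1-q = \frac{1-\eta}{1+\eta}$. Dividing the numerator and denominator of \eqref{eq: pi_s} by $(1+\eta)^{N+1}$ gives
\begin{align*}
\piDist{s} = \frac{\binom{N+1}{t} q^{t}(1-q)^{N+1-t}}{1-(1-q)^{N+1}} .
\end{align*}
This is exactly the law of $T\sim \mathrm{Bin}(N+1,q)$ conditioned on the event $T\geq 1$.

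The first step bounds the normalization. Since $(1-q)^{N+1}\le 1-q$, we have $1-(1-q)^{N+1}\ge q$. Hence $\piDist{s}\le \frac{1+\eta}{2\eta}\,p_{\mathrm{Bin}}(T=s+1)$, and the same factor carries over to the probability of any event in $s$.

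The second step re-centres the event. The binomial mean is $q(N+1)$, so
\begin{align*}
T - q(N+1) = (s - qN) + (1-q).
\end{align*}
Therefore $\abs{s/N - q}>\delta_2$ implies $\abs{T-q(N+1)} > \delta_2 N - (1-q)$. We take $N$ large enough that $\delta_2 N > 1-q$, so that this threshold is positive. This is the only place where "sufficiently large $N$" is needed.

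The third step applies two-sided Hoeffding to the $N+1$ Bernoulli summands:
\begin{align*}
\mathbb{P}\big(\abs{T-q(N+1)}>\lambda\big)\le 2\exp\!\left(-\frac{2\lambda^2}{N+1}\right), \qquad \lambda=\delta_2N-(1-q).
\end{align*}
The main (though mild) obstacle is absorbing the shift $1-q$ into a constant. We expand
\begin{align*}
\lambda^2\ge \delta_2^2N^2 - 2(1-q)\delta_2 N.
\end{align*}
Then we use $\delta_2\le 1$ and $N/(N+1)\le 1$ to get
\begin{align*}
\frac{4(1-q)\delta_2 N}{N+1}\le 4(1-q) = 4\,\frac{1-\eta}{1+\eta}.
\end{align*}
Exponentiating this term yields exactly the constant $c=e^{4\frac{1-\eta}{1+\eta}}$, leaving the main term $\exp\!\big(-\frac{2N^2\delta_2^2}{N+1}\big)$.

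Collecting the constants, $2\cdot\frac{1+\eta}{2\eta}=\frac{1+\eta}{\eta}$. Combined with the factor $c$ and the exponential, this gives \eqref{eq:con.in.s-main}.
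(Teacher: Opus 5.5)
Your proposal is correct and is essentially the paper's proof. Both bound the normalization by $1-(1-q)^{N+1}\ge q$ to dominate $\pi_s$ by $\frac{1+\eta}{2\eta}$ times a $\mathrm{Bin}(N+1,q)$ mass, shift the window by $1-q$ (your $\lambda=\delta_2 N-(1-q)$ is exactly the paper's $(N+1)\delta'_N$), apply two-sided Hoeffding, and drop the $(1-q)^2$ term so the cross term is absorbed into $c$. Your largeness condition $\delta_2 N>1-q$ is the same as the paper's $N>\frac{1}{\delta_2}\frac{1-\eta}{1+\eta}$.
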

\begin{proof}[Proof sketch]
    As elaborated in the supplemental material, we can write
    \begin{align}
        \piDist{s} \leq \frac{1+\eta}{2\eta}\; p_{\mathrm{Bin}}(Y_{N+1} = s+1).
    \end{align}
    Here we have introduced another binomially distributed random variable $Y_{N+1}$ similar to the proof of Lemma~\ref{lem: con.in.N}, however with the success probability $2\eta/(1+\eta)$. We then obtain the inequality~\eqref{eq:con.in.s-main} in a similar fashion.
\end{proof}
    
With these two concentrations in place, we now turn our attention to spin-sectors given by $N$ which scale with $M$. We find that on these sectors, within a fixed excitation cutoff, $K$, each conditional channel asymptotically acts as a bosonic pure-loss attenuation channel, with transmissivity $2\eta(N/M)/(1+\eta)$, followed by a quantum-limited amplifier channel, with gain $(1+\eta)/(2\eta)$. In particular, this convergence happens for the typical conditional channels. Here, a natural mapping between the Dicke states and the bosonic Fock states has been assumed. To quantify the convergence we use the diamond norm of the restriction of a linear map $\Phi$ to the first $K$ excitations: $\norm{\Phi}_{\diamond,K}:= \norm{\Phi\vert_{\mathcal{F}_K}}_{\diamond}$, where ${\mathcal{F}_K =\mathrm{Span}\{\ket{k}: k=0,1,\dots K\}}$. As the input space of this restricted channel is finite-dimensional, this constitutes a valid norm~\cite{Paulsen_2003}. We prove the following result.
\begin{theorem}\label{thm:diamond-dist-bound-main} For a fixed excitation cutoff, $K\leq M$, the typical sectors, $N/M \in \left[\eta -\sqrt{\frac{2\log M}{M}}, \eta +\sqrt{\frac{2\log M}{M}}\right]$, satisfy
    \begin{align}\label{eq:diamond-bound-thm5}
        \norm{\PhiChannel - \mathcal{A}_{\frac{1+\eta}{2\eta}}\circ \mathcal{L}_{\frac{2\eta}{1+\eta}\frac{N}{M}}}_{\diamond, K} \leq \varepsilon_{M,K},
    \end{align}
    where $\varepsilon_{M,K}\to 0$ as $M\to \infty$.
\end{theorem}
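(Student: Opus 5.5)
We start from the decomposition $\PhiChannel=\sum_{s}\piDist{s}\,\cloneChannel\circ\traceChannel$ of the preceding theorem, and from the matching decomposition of the Gaussian target. Write $T=\frac{2\eta}{1+\eta}$, $\tau=N/M$ and $G=1/T$. The quantum-limited amplifier $\mathcal{A}_G$ acting on a Fock state $\ket{n}$ produces a negative-binomial mixture over $\ket{n+j}$. The pure-loss channel $\mathcal{L}_{T\tau}$ is a binomial thinning.

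The discrete objects converge to these under the Dicke--Fock identification $\dicke{M}{k}\leftrightarrow\ket{k}$, but only on the low-excitation corner. Concretely:
\begin{itemize}
\item The partial trace $\traceChannel$ acts on matrix units $\ket{D^M_k}\bra{D^M_l}$ with $k,l\le K$ as a hypergeometric thinning with amplitudes $\sqrt{\binom{s}{j}\binom{M-s}{k-j}/\binom{M}{k}}$. As $M\to\infty$ with $s/M\to \theta$, these converge to the beam-splitter amplitudes $\sqrt{\binom{k}{j}\theta^{j}(1-\theta)^{k-j}}$ of $\mathcal{L}_\theta$.
\item The cloner $\cloneChannel$ acts on $\ket{D^s_j}\bra{D^s_{j'}}$ with explicit coefficients obtained from $P_N(\dicke{s}{j}\otimes \ket{x})$. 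These give negative-hypergeometric weights that converge, as $s,N\to\infty$ with $s/N\to T$, to the Kraus amplitudes of $\mathcal{A}_{N/s}\to\mathcal{A}_G$.
\end{itemize}

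\textbf{Step 1.} Fix a window $|s/N-T|\le\delta_2$ and bound the diamond distance between $\cloneChannel\circ\traceChannel$ and $\mathcal{A}_{G}\circ\mathcal{L}_{T\tau}$ restricted to $\mathcal{F}_K$. I would not bound the diamond norm directly. Instead I would use the fact that the input space $\mathcal{F}_K$ is $(K+1)$-dimensional, so that
\[
\KDiaNorm{\Phi}\le (K+1)\max_{k,l\le K}\norm{\Phi(\ket{k}\bra{l})}_1 .
\]
I would also use a Stinespring/Kraus description: both the discrete and the Gaussian maps have Kraus operators indexed by the number of lost excitations $i$ and of added excitations $j$. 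A bound on the Kraus-operator differences then controls the channel difference.

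\textbf{Step 2 (main obstacle).} The outputs live in an unbounded, or $(N+1)$-dimensional, space, because the amplifier adds arbitrarily many photons. I would truncate at $j\le J$ added excitations. The negative-binomial tail beyond $J$ is at most $(1-T)^{J}\mathrm{poly}(J,K)$, uniformly in $s/N$ near $T$, and similarly for the discrete cloner's weights. For $i\le K$ and $j\le J$, ratio-of-binomials estimates give coefficient errors of order $O((K+J)^2/s)$ and $O(|s/M-T\tau|)$. Choosing $J\sim\log M$ makes all of these errors vanish.

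\textbf{Step 3.} Average over $s$ using Lemma~\ref{lem:con.in.s}. The atypical $s$ contribute at most $2\cdot\frac{1+\eta}{\eta}c\,e^{-2N\delta_2^2\cdot N/(N+1)}$, which is small because for typical sectors $N\ge \eta M-\sqrt{2M\log M}\to\infty$. For typical $s$, we have $s/M=(s/N)\tau$ within $\delta_2$ of $T\tau$. Continuity of $\theta\mapsto\mathcal{L}_\theta$ on $\mathcal{F}_K$, with modulus $O(K\sqrt{\delta_2})$, and of $G'\mapsto\mathcal{A}_{G'}$ then absorbs the mismatch in transmissivity and gain.

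Finally, choose $\delta_2=\delta_2(M)\to0$ slowly, for example $M^{-1/3}$, so that the exponential tail still vanishes. The sum of the three error terms defines $\varepsilon_{M,K}\to0$, uniformly over the typical window of $N$.
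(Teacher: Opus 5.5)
Your proposal is correct, and it shares the paper's outer skeleton but bounds each fixed-$s$ term differently. The shared part is:
\begin{itemize}
\item the loss--cloning decomposition;
\item a bound, for each retained-qubit number $s$, on the distance between $C_{s\to N}\circ L_{M\to s}$ and $\mathcal{A}_G\circ\mathcal{L}_T$ on $\mathcal{F}_K$;
\item averaging over $s$ with Lemma~\ref{lem:con.in.s}, with the atypical $s$ charged at most $2$ each.
\end{itemize}
The paper splits each composition by the triangle inequality into a cloning term and a loss term, using that both loss maps preserve $\mathcal{F}_K$. For each term it uses Stinespring dilations whose Kraus operators are indexed by lost or added excitations, as you suggest.

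From there the routes differ. The paper observes that $V^\dagger W$ is diagonal in the Fock basis with nonnegative entries, so the worst input is a single Fock state $\tilde K\le K$. The diamond distance is then at most $2\sqrt{1-\mathrm{BC}^2}$, where $\mathrm{BC}$ is the Bhattacharyya coefficient between a (negative) hypergeometric and a (negative) binomial law. The paper bounds $\mathrm{BC}$ below by $\exp(-D/2)$, with $D$ the relative entropy between the two laws. It controls $D$ with Stam's bounds for sampling without versus with replacement, plus an explicit binomial-versus-binomial term.

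This approach needs neither your $(K+1)\max_{k,l}$ reduction nor a truncation $J$ of the added photons. The infinite amplifier tail is handled automatically because the divergence is taken over the full support. It also gives a closed-form $\varepsilon_{M,K}$.

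Your route uses pointwise ratio-of-binomials estimates for $j\le J$, geometric tail bounds beyond $J$ (any $J\to\infty$ with $J=o(\sqrt{M})$ works), and continuity of $\theta\mapsto\mathcal{L}_\theta$ and $G'\mapsto\mathcal{A}_{G'}$. It is more elementary, since it avoids Stam, and it suffices because only $\varepsilon_{M,K}\to0$ is claimed. The cost is messier bookkeeping and worse constants and rates.

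Your $(K+1)\max_{k,l}$ inequality is valid: the extremal inputs are rank one, and $\sum_k\|\psi_k\|\le\sqrt{K+1}$. However, it becomes redundant once you use the Stinespring bound, which already controls the diamond norm directly.

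Your window $\delta_2=M^{-1/3}$ instead of the paper's $\sqrt{\log N/N}$ is immaterial. One thing you should state explicitly: all your constants depend continuously on $\tau=N/M$, and $\tau$ lies in a compact subinterval of $(0,1)$ for large $M$. This is what makes the bound uniform over the typical window of $N$, and the paper spends its final step on exactly that.
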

\begin{proof}[Proof sketch]
    In Theorem~\ref{thm:diamond-dist-bound} of the Supplemental material we show that for every fixed ratio $0<N/M<1$ and a fixed cutoff $K\leq M$, the left hand side of Eq.~\eqref{eq:diamond-bound-thm5} is upper-bounded by $\varepsilon_{M,K,(N/M)}$ which approaches zero as $M$ approaches infinity. The proof can be understood simply as the following. On every excitation state $\ket{k}$ with $k\leq K$, the action of the Kraus operators of $\cloneChannel$ and $\traceChannel$ resembles the negative hypergeometric distribution and the hypergeometric distribution respectively. In the large population limit, these distributions approach a negative binomial distribution with success probability $(s+1)/(N+1)$ and a binomial distribution with success probability $s/M$, respectively, for fixed $K$~\cite{Kinney2009-oq}. These are precisely the coefficients of $\mathcal{A}_{(N+1)/(s+1)}$ and $\mathcal{L}_{s/M}$ respectively on the Fock state $\ket{k}$. 

    Moreover, as $M$ grows (and thus $N$ since $N/M>0$), Lemma~\ref{lem:con.in.s} implies a concentration of $s/N$ around $\frac{2\eta}{1+\eta}$. Hence within the typical interval for $s$, $\cloneChannel$ approaches $\mathcal{A}_{\frac{1+\eta}{2\eta}}$ and $\traceChannel$ approaches $\mathcal{L}_{\frac{2\eta}{1+\eta}\frac{N}{M}}$. 

    Finally we conclude that as the convergence occurs for all fixed positive ratios $N/M$, it particularly occurs for the typical interval for $N/M$ and a suitable upper bound on the diamond distance, $\varepsilon_{M,K}$, is presented in the Supplemental material. We see explicitly that for fixed $K$, $\varepsilon_{M,K}\to 0$ as $M\to \infty$.
\end{proof}
    
The convergence in diamond norm between the typical conditional channels and a corresponding Gaussian channel, under the fixed cutoff $K$ indeed implies convergence in coherent information evaluated on states supported within the cutoff. Therefore, we can chain these convergences and conclude:
\begin{theorem}\label{thm:coherent-information} For every $\rho\in {\cal L}(\mathrm{Sym}^M(\mathbb{C}^2))$ supported within a fixed Dicke excitation cutoff (equivalently Fock cutoff) given by $K$, 
    \begin{align}
        \abs{I_c(\rho,\mathcal{D}^{\otimes M}_\eta) - I_c(\rho, \mathcal{A}_{\frac{1+\eta}{2\eta}}\circ \mathcal{L}_{\frac{2\eta^2}{1+\eta}})} \leq \Delta_{M,K},
    \end{align}
    where $\Delta_{M,K}\to 0$ as $M\to \infty$.
\end{theorem}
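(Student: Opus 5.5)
The plan is to reduce everything to entropies of block-diagonal states and then chain Lemma~\ref{lem: con.in.N} with Theorem~\ref{thm:diamond-dist-bound-main}. Write $\rho$ (supported on $\mathcal{F}_K$) with purification $\ket{\psi_{RA}}$, $R$ of dimension at most $K+1$. By Theorem~\ref{thm:decomp-main}, both $\mathcal{D}_\eta^{\otimes M}(\rho)$ and $(\mathrm{id}_R\otimes\mathcal{D}_\eta^{\otimes M})(\psi)$ are direct sums over $N$ with weights $p_N$ independent of the input. The entropy of such a direct sum is $H(\{p_N\}) + \sum_N p_N \bigl[S(\cdot)+\log m_{M,N}\bigr]$. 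The Shannon term and the $\log m_{M,N}$ terms are identical for the output and for the joint state, so they cancel. This gives
\begin{equation*}
I_c(\rho,\mathcal{D}_\eta^{\otimes M})=\sum_N p_N\, I_c\bigl(\rho,\PhiChannel\bigr).
\end{equation*}
Write $\mathcal{G}:=\mathcal{A}_{\frac{1+\eta}{2\eta}}\circ \mathcal{L}_{\frac{2\eta^2}{1+\eta}}$. Since $\sum_N p_N=1$, the quantity to bound is $\sum_N p_N \abs{I_c(\rho,\PhiChannel)-I_c(\rho,\mathcal{G})}$.

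I would split this sum into the typical set $\abs{N/M-\eta}\le\sqrt{2\log M/M}$ and its complement. On the atypical set, $I_c(\rho,\Phi)$ is bounded by $\log(K+1)$ in absolute value for any channel, since $\abs{I_c}\le \log\dim R$. The total weight of the atypical set is at most $\frac{1+\eta}{\eta}M^{-1}$ by Lemma~\ref{lem: con.in.N} with $\delta_1=\sqrt{2\log M/M}$. This part therefore contributes $O(\log(K+1)/M)$.

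On the typical set I would use the triangle inequality through the intermediate channel $\mathcal{G}_N:=\mathcal{A}_{\frac{1+\eta}{2\eta}}\circ\mathcal{L}_{\frac{2\eta}{1+\eta}\frac{N}{M}}$.
\begin{itemize}
\item For the pair $\PhiChannel$ and $\mathcal{G}_N$, Theorem~\ref{thm:diamond-dist-bound-main} gives a restricted diamond distance of at most $\varepsilon_{M,K}$. Hence both the output states and the joint states with $R$ are $\varepsilon_{M,K}$-close in trace norm.
\item For the pair $\mathcal{G}_N$ and $\mathcal{G}$, the transmissivities differ by $O(\sqrt{\log M/M})$. On Fock states with $k\le K$, the Kraus coefficients of $\mathcal{L}_T$ are polynomials in $\sqrt{T},\sqrt{1-T}$, so the restricted diamond distance is $O_K(\sqrt{\log M/M})$. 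The gain map $\mathcal{A}$ is a channel and cannot increase distance.
\end{itemize}

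The main obstacle is converting trace-norm closeness into entropy closeness, because the outputs live in $\Sym^N(\C^2)$, of dimension $N+1$, or in infinite-dimensional Fock space. Applying Fannes--Audenaert naively costs a factor $\log N$ or worse. My first attempt would be to embed everything in Fock space via the Dicke–Fock identification and use the energy-constrained continuity bound of Winter. The Gaussian outputs of $\mathcal{G}$ and $\mathcal{G}_N$ on $\mathcal{F}_K$ have mean photon number bounded by a constant $E_K$ depending only on $K$ and $\eta$. For $\PhiChannel$, the decomposition into $\cloneChannel\circ\traceChannel$ shows that the mean excitation is at most $\frac{N+1}{s+1}(K+1)$. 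This stays uniformly bounded on the typical $s$ by Lemma~\ref{lem:con.in.s}. The atypical $s$ are exponentially suppressed, so their contribution to the energy is negligible even after multiplying by a factor $N$. Winter's bound then gives entropy differences of order $\epsilon\log(E/\epsilon)+g(\epsilon)$, with $g(x)=(1+x)\log(1+x)-x\log x$. Here $\epsilon=\varepsilon_{M,K}+O_K(\sqrt{\log M/M})$, and the same bound applies to the joint states on $R\otimes$Fock, since $R$ is finite-dimensional.

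Summing the typical and atypical contributions gives $\Delta_{M,K}$, which tends to $0$ as $M\to\infty$ for fixed $K$.
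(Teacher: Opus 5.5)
Your proposal is correct, and its skeleton matches the paper's:
\begin{itemize}
\item the block formula $I_c(\rho,\mathcal{D}_\eta^{\otimes M})=\sum_N p_N I_c(\rho,\Phi^\eta_{M\to N})$;
\item the triangle inequality through $\mathcal{A}_{\frac{1+\eta}{2\eta}}\circ\mathcal{L}_{\frac{2\eta}{1+\eta}\frac{N}{M}}$;
\item the typical/atypical split with $\delta_1=\sqrt{2\log M/M}$;
\item the crude bound $2\log(K+1)$ on the atypical set.
\end{itemize}

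You diverge at the step you call the main obstacle, and the paper avoids that obstacle altogether. Since $I_c(\rho,\mathcal{N})=-S(R|B)$ on $(\mathrm{id}_R\otimes\mathcal{N})(\psi)$, the paper applies the Alicki--Fannes--Winter continuity bound for conditional entropy. That bound depends only on the dimension of the conditioned system $R$, which is at most $K+1$, and not on the possibly infinite dimension of $B$. It gives $2\varepsilon\log(K+1)+(1+\varepsilon)h_2\bigl(\tfrac{\varepsilon}{1+\varepsilon}\bigr)$ for each typical sum. No energy constraint is needed, and nothing about $\Phi^\eta_{M\to N}$ is used beyond Theorem~\ref{thm:diamond-dist-bound-main}.

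Your alternative also goes through: Winter's energy-constrained continuity applied separately to $S(B)$ and $S(RB)$, with Hamiltonian $\mathbb{I}_R\otimes\hat n$. It does, however, require an extra uniform mean-excitation bound on the outputs of $\Phi^\eta_{M\to N}$. Your argument for that bound is sound, with one point worth making explicit.
\begin{itemize}
\item Each Kraus operator of the loss and cloning maps shifts the excitation number by a fixed amount, so checking Dicke inputs suffices.
\item The negative-hypergeometric mean $(i+1)\frac{N-s}{s+2}$ then gives a bound of $(K+1)\frac{N+1}{s+1}$.
\item Lemma~\ref{lem:con.in.s} suppresses the atypical $s$ enough to offset the worst-case energy $(K+1)(N+1)$.
\end{itemize}

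Your route costs more bookkeeping and yields an $\epsilon\log(E_K/\epsilon)$-type bound rather than $\epsilon\log(K+1)$. In return it controls the two entropies individually. Your Lipschitz-in-$T$ estimate also gives a sharper rate, $O_K(\sqrt{\log M/M})$, for the Gaussian-to-Gaussian step than the paper's fidelity bound $4\sqrt{K}(2\log M/M)^{1/4}$. Neither gain is needed for the statement.
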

\begin{proof}[Proof sketch]
   Using the block decomposition form of the output of $\mathcal{D}_{\eta}^{\otimes M}$ as obtained from Theorem~\ref{thm:decomp-main}, we can write $I_c(\rho,\mathcal{D}^{\otimes M}_\eta) = \sum_{N}p_N I_c(\rho,\Phi^\eta_{M\to N})$. As $M$ grows, Lemma~\ref{lem: con.in.N} implies that $N/M$ concentrates around $\eta$. Within the typical interval given by $\delta_M = \sqrt{(2\log M)/M}$, we apply Theorem~\ref{thm:diamond-dist-bound-main} and the continuity of conditional entropy~\cite{Winter_2016,Alicki_2004} to upper-bound $\big|I_c(\rho,\Phi^\eta_{M\to N}) - I_c(\rho, \mathcal{A}_{\frac{1+\eta}{2\eta}}\circ \mathcal{L}_{\frac{2\eta}{1+\eta}\frac{N}{M}})\big|$ in terms of $\varepsilon_{M,K}$. Moreover, we bound $\big|I_c(\rho,\mathcal{A}_{\frac{1+\eta}{2\eta}}\circ \mathcal{L}_{\frac{2\eta}{1+\eta}\frac{N}{M}}) - I_c(\rho, \mathcal{A}_{\frac{1+\eta}{2\eta}}\circ \mathcal{L}_{\frac{2\eta^2}{1+\eta}})\big|$ in the typical interval of $N/M$ in terms of $\varepsilon^{\mathrm{att}}_{M,K}$, an upper bound on the diamond distance between the respective attenuation channels within the typical interval (see Lemma~\ref{lem:typical-gaussian-channels-convergence} in Supplemental material). Finally, using the triangle inequality we obtain a bound on $\big|I_c(\rho,\Phi^\eta_{M\to N}) - I_c(\rho, \mathcal{A}_{\frac{1+\eta}{2\eta}}\circ \mathcal{L}_{\frac{2\eta^2}{1+\eta}})\big|$ in the typical interval for $N/M$.

   Outside the typical interval, we replace the absolute value of the difference in coherent information by its upper bound $2\log (K+1)$. Combining everything, we finally obtain the stated bound of $\Delta_{M,K}$, which approaches $0$ as $M\to \infty$. 
\end{proof}


\def\bestEta{0.7294452}
\def\bestP{0.2029161}
\def\bestPriori{0.4506152}
\def\bestCon{56.47}
\def\bestK{220}
\def\bestD{320}
\def\bestIc{10^{-12}}

\textit{Improved lower bound for the quantum capacity threshold ---} Using the convergence as stated in Theorem~\ref{thm:coherent-information}, we can conclude that if the Gaussian channel has a positive coherent information on states supported on a fixed finite number of excitations, a sufficiently large number of uses of the corresponding depolarizing channel also has positive coherent information. To compute the coherent information of the Gaussian channel, we adopt a method similar to \cite{mele2026bosonicthermal}; we truncate the output of the Gaussian channel to a finite cutoff $D$ and collapse all higher excitation levels to $\ket{\perp}$ orthogonal to the lower $D$ levels. Then, by data processing, we can lower-bound the coherent information of the full Gaussian output in terms of the truncated coherent information. Finally, we verify that such a lower bound is positive.

Fix ${\eta = \bestEta}$, which corresponds to ${p = \bestP}$. The emergent Gaussian channel is then $\mathcal{G}_{\eta} = \mathcal{A}_G \circ \mathcal{L}_T$ with $T = 2\eta^2/(1+\eta)$ and $G = (1+\eta)/2\eta$. This is a thermal attenuator with transmissivity $\eta$ and environmental mean photon number $1/(2\eta)$.

As input we take a rank-two state of the form 
\begin{equation}
    \rho = q \ketbra{\psi_0}{\psi_0} + (1-q) \ketbra{\psi_1}{\psi_1}
\end{equation}
where $\ket{\psi_0}$ is supported on Fock states $\ket{n}$ with $n \equiv 0 \pmod 3$ and $\ket{\psi_1}$ is supported on $n \equiv 1 \pmod 3.$ Both codewords have real amplitudes and are supported on $n \leq \bestK$ with $ q \approx \bestPriori.$ We find $\rho$ by optimizing $I_c(\rho,\mathcal{G}_\eta)$ directly. This is an optimization over two codewords of a single bosonic mode, and is easier than the corresponding optimization over $M$ qubits for the depolarizing channel. The amplitudes of both codewords have a nearly Gaussian envelope, with mean photon number close to $\bestCon$. The output of the channel is truncated at $D = \bestD$ photons. At $\eta = \bestEta$ we obtain $I_c(\rho,\mathcal{G}_{\eta}) \approx \bestIc$. Since $I_c(\rho,\mathcal{G}_{\eta})> 0,$ Theorem~\ref{thm:coherent-information} gives $I_c(\rho, \mathcal{D}_{\eta}^{\otimes M}) \ge I_c(\rho, \mathcal{G}_\eta) - \Delta_{M,K} >0$ for all sufficiently large $M$. Hence $$Q(\mathcal{D}_\eta) \ge \frac{1}{M} I_c(\rho, \mathcal{D}_{\eta}^{\otimes M}) > 0.$$ We therefore conclude from our work that ${p_{\mathrm{th}} \ge \bestP}.$ This provides the best-known lower bound on the threshold for the quantum capacity of the qubit depolarizing channel.  
In comparison, the previous best-known lower bound from \cite{agarwal2026enhanced} was $p_{\mathrm{th}} \ge 0.19397$ obtained recently, which improved upon the lower bound of $0.19128$ obtained by \cite{fern2008lower} in 2008.  The first lower bound from \cite{shor1996quantumerrorcorrectingcodesneed} in 1996 was $0.19036$, and the hashing bound is $0.18929$.  A timeline on the improvement of the lower bound of the threshold is as follows: 
\begin{figure}[H]
    \centering 
   \includegraphics[width=0.48\textwidth]{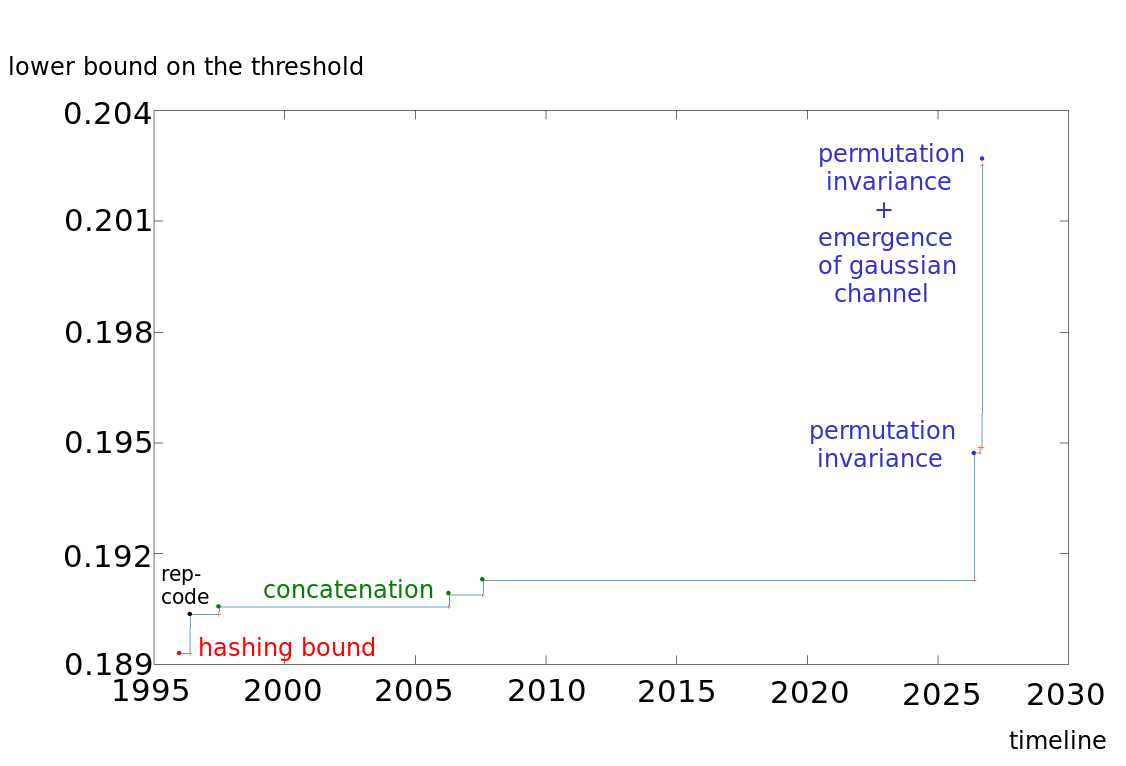}
    \caption{Best known lower bounds for the threshold as a function of time (in calendar years) and the coding / analysis giving rise to them.}
\end{figure}

We finally note that smaller instances of such mod $3$ codes are known in the bosonic codes literature~\cite{Michael_2016}.

\textit{Acknowledgments ---} SB and FL are supported by National Science Foundation Grant No.~2442410. 
GS, RGA, and SL are supported under NSERC-NSF alliance grant ALLRP-586858-2023 and NSERC Discovery grant
RGPIN-2025-02094.
RGA acknowledges the support of the Institute for Quantum Computing and the Mike and Ophelia Lazaridis Graduate Fellowship. LS is supported under NSERC RGPIN-2025-04875.  DL is supported under NSERC RGPIN-2024-03823 and an NSERC Alliance Consortia Quantum grants (ALLRP 578455-22). 
We acknowledge discussions with Lauritz van Luijk about the diamond norm and with Zhiyao Wang about bosonic codes.  

\textit{AI statement---} The Gaussian comb structure of the numerical optimizers from~\cite{agarwal2026enhanced} was observed without the assistance of AI.  The emergence of Gaussianity in the action of the channel on the symmetric space was suggested by AI. Theorem 1 is an application of Schur-Weyl duality known to us without AI. The proof of Theorem 2 was suggested by AI and carefully digested and confirmed by the authors. The proofs of Lemmas 3 and 4 were created by the authors after AI suggested unbearable proofs of them. Theorems 5 and 6 are completely conceived and proven by the authors.  The numerics for this work were conceived by the authors, implemented by AI and verified by the authors.



\let\oldaddcontentsline\addcontentsline
\renewcommand{\addcontentsline}[3]{}

\bibliography{prl}

\let\addcontentsline\oldaddcontentsline




\clearpage
\onecolumngrid
\begingroup

\let\theorem\supptheorem
\let\endtheorem\endsupptheorem

\let\lemma\supplemma
\let\endlemma\endsupplemma

\let\proposition\suppproposition
\let\endproposition\endsuppproposition

\let\corollary\suppcorollary
\let\endcorollary\endsuppcorollary

\let\conjecture\suppconjecture
\let\endconjecture\endsuppconjecture

\let\definition\suppdefinition
\let\enddefinition\endsuppdefinition

\let\remark\suppremark
\let\endremark\endsuppremark

\setcounter{secnumdepth}{3}
\renewcommand{\thesection}{\Roman{section}}
\setcounter{section}{0}
\setcounter{supptheorem}{0}

\renewcommand{\theHsection}{supp.\arabic{section}}

\numberwithin{equation}{section}
\setcounter{equation}{0}

\begin{center}
    {\large\bfseries Supplemental Material for}\\[4pt]
    {\large \bfseries ``A depolarizing choir sings in Gaussian harmony''}\\[10pt]
{\normalsize
 Rabsan Galib Ahmed$^{1}$, Sujeet Bhalerao$^2$, Sungjai Lee$^{1}$, Felix Leditzky$^{2}$,\\
Debbie Leung$^{1,3}$, Luke Schaeffer$^{1}$, and Graeme Smith$^{1}$
}\\[7pt]

{\small
$^1$University of Waterloo, Waterloo, Ontario, Canada.
}\\[3pt]

{\small
$^2$University of Illinois Urbana-Champaign, Urbana, Illinois, USA
}\\[3pt]

{\small
$^3$Perimeter Institute for Theoretical Physics, Waterloo, Ontario, Canada.
}\\[3pt]
\end{center}

\tableofcontents 

\section{Background and notation}

\subsection{Representation theory}

We write $\Lambda(M,2)$ for the set of partitions of $M$ into at most $2$ parts. The general linear group $\GL(d)$ is the set of invertible linear operators on $\C^d.$ For $\lambda\in\Lambda(M,2)$ we put
\begin{equation}
N=\lambda_1-\lambda_2,\qquad \lambda_2=\frac{M-N}{2},
\end{equation}
so $\lambda\mapsto N$ is a bijection from $\Lambda(M,2)$ onto
$\bigl\{N\in\{0,\dots,M\}: N\equiv M \bmod 2 \bigr\}$.
For $\pi\in S_M$ we consider the permutation representation $\pi\mapsto W_\pi$, where the unitary permutation operator $W_\pi$ on $(\mathbb{C}^2)^{\otimes M}$ acts as
\begin{align} 
    W_\pi\ket{x_1}\otimes\cdots\otimes\ket{x_M}=\ket{x_{\pi^{-1}(1)}}\otimes\cdots\otimes\ket{x_{\pi^{-1}(M)}}.
    \label{eq:symmetric-group-rep}
\end{align}
We also consider the representation of $\GL(2)$ on $(\mathbb{C}^2)^{\otimes M}$ as
\begin{align} 
    g&\mapsto g^{\otimes M}.
    \label{eq:general-linear-group-rep}
\end{align}

Since every $A^{\otimes M}$ with $A\in\mathcal{L}(\mathbb{C}^2)$ commutes with every $W_\pi$, the two representations \eqref{eq:symmetric-group-rep} and \eqref{eq:general-linear-group-rep} commute with each other.
Moreover, they span each other's commutant in $\cL\left((\mathbb{C}^2)^{\otimes M}\right)$, which is known as \emph{Schur-Weyl duality} \cite{goodman2009symmetry,etingof2011introduction,fulton2013representation}.
This duality also gives a decomposition of the common representation space $(\mathbb{C}^2)^{\otimes M}$ as
\begin{align}
\begin{aligned}
 (\mathbb{C}^2)^{\otimes M} &\cong
 \bigoplus_{\lambda \in \Lambda(M, 2)} V_{\lambda}^2\otimes S_\lambda\\
 U^{\otimes M}W_\pi &\cong\bigoplus_{\lambda\in\Lambda(M,2)} q_\lambda(U)\otimes p_\lambda(\pi),
 \end{aligned}
 \label{eq:schur-weyl-decomposition}
\end{align}
where the $(V_\lambda^2,q_\lambda)$ are irreducible representations (irreps) of $\GL(2)$, and $(S_\lambda,p_\lambda)$ are irreps of the symmetric group $S_M$. 
In the following discussion as well as in the main text, we will often use the following `spin notation' commonly used in the Physics literature alongside the partition notation introduced above: For a partition $\lambda=(\lambda_1,\lambda_2)\in \Lambda(M,2)$, we set $N=\lambda_1-\lambda_2$ so that $\lambda_1=(M+N)/2$ and $\lambda_2=(M-N)/2$, and we write $\glIrrep \equiv V_{(\lambda_1,\lambda_2)}^2$ and $\snIrrep \equiv S_{(\lambda_1,\lambda_2)}$.

For $\lambda\in \Lambda(M,2)$ we denote by $\Pi_\lambda$ the projector onto the isotypical component $V_\lambda^2\otimes S_\lambda$ in \eqref{eq:schur-weyl-decomposition}.
With $\lambda=(\lambda_1,\lambda_2)$, the dimension of $S_\lambda$ is equal to 
\begin{equation}
 f_\lambda \coloneqq \dim S_\lambda
 =\binom M{\lambda_2}-\binom M{\lambda_2-1} = \binom{M}{\frac{M-N}{2}} - \binom{M}{\frac{M-N}{2}-1},
 \label{eq:S-lambda-dimension}
\end{equation}
where we set $\binom M{-1}=0$. 
We also use the notation $\snIrrepDim\equiv f_\lambda$.

The symmetric subspace
\begin{align} 
\Sym^n(\mathbb{C}^2)=\{\ket{\psi}\in(\mathbb{C}^2)^{\otimes n}: W_\pi\ket{\psi}=\ket{\psi}\ \forall\pi\in S_n\}
\end{align}
has dimension $n+1$ and an orthonormal basis consisting of Dicke states is given by
\begin{equation}
\dicke{n}{k}=\binom{n}{k}^{-1/2}\smashoperator{\sum_{x\in\{0,1\}^n,\ |x|=k}}\ket{x},\qquad k=0,\dots n,
\label{eq:dicke}
\end{equation}
where $|x|$ is the Hamming weight of the string $x$. We call $k$ the excitation number of $\dicke{n}{k}$.
We write $P_n$ for the orthogonal projector onto $\Sym^n(\mathbb{C}^2)$, and note that
$P_nW_\pi=W_\pi P_n=P_n$ for all $\pi\in S_n$.
Since $A^{\otimes n}$ commutes with every $W_\pi$, it leaves $\Sym^n(\mathbb{C}^2)$ invariant,
and we write
\begin{equation}
S_n(A)\coloneqq A^{\otimes n}\big|_{\Sym^n(\mathbb{C}^2)},\qquad S_0(A)\coloneqq 1 .
\end{equation}
Then $S_n(AB)=S_n(A)S_n(B)$, and $A\mapsto S_n(A)$ restricted to $GL(2)$ is a representation. An explicit expression for the matrix entries of $S_n(A)$ is given in \cite[Appendix B]{bhalerao2025improvingquantumcommunicationrates}.
For $A=\mathrm{diag}(a,b)$ one has $S_n(A)\dicke{n}{k}=a^{n-k}b^k\dicke{n}{k}$, hence
$\tr S_n(A)=\sum_{k=0}^n a^{n-k}b^k$.
Finally, for $A\in\GL(2)$ (or $A\in\cL(\mathbb{C}^2)$ by continuity), the $\GL(2)$-irrep $q_\lambda$ for $\lambda=(\lambda_1,\lambda_2)$ can be expressed via the formula
\begin{equation}
 q_\lambda(A)=(\det A)^{(M-N)/2}S_N(A),
 \label{eq:determinant-twist}
\end{equation}
where as before we have $N=\lambda_1-\lambda_2$ and $(M-N)/2 = \lambda_2$.

\subsection{Quantum capacity}

A quantum channel $\mathcal{N}\colon A\to B$ is a completely
positive trace-preserving map from $\mathcal{L}(\mathcal{H}_A)$ to $\mathcal{L}(\mathcal{H}_B)$,
where $\mathcal{L}(\mathcal{H})$ denotes the set of linear operators on $\mathcal{H}$.
Every channel has a Stinespring isometry $V\colon\mathcal{H}_A\to\mathcal{H}_B\otimes\mathcal{H}_E$
with $\mathcal{N}(\rho)=\tr_E V\rho V^\dagger$, and the complementary channel is
$\mathcal{N}^c(\rho)=\tr_B V\rho V^\dagger$. 
If $\lbrace K_i\rbrace_i$ is a Kraus decomposition for $\cN$ with $\cN(X)=\sum_i K_i X K_i^\dagger$, then $V$ can be chosen as $V = \sum_i K_i\otimes |i\rangle_E$, where $\lbrace |i\rangle_E\rbrace_i$ is an orthonormal basis for the environment $E$.
The coherent information of a state $\rho$ through $\mathcal{N}$ is
\begin{equation}
I_c(\rho,\mathcal{N})\coloneqq S(\mathcal{N}(\rho))-S(\mathcal{N}^c(\rho)),
\label{eq:Ic}
\end{equation}
where $S(\rho)=-\tr\rho\log\rho$ is the von Neumann entropy and all logarithms are taken to base 2.
Equivalently, $I_c(\rho,\mathcal{N})=S(B)-S(RB)$ evaluated on
$(\mathrm{id}_R\otimes\mathcal{N})(\psi_{RA})$ for any purification $\psi_{RA}$ of $\rho$.
By the Araki--Lieb inequality, $|I_c(\rho,\mathcal{N})|\le S(\rho)\le\log\operatorname{rank}\rho$.
The quantum capacity is given by~\cite{lloyd1997capacity,shor2002quantum,devetak2005private}
\begin{equation}
Q(\mathcal{N})=\lim_{n\to\infty}\frac1n I_c(\mathcal{N}^{\otimes n})
=\sup_{n}\frac1n I_c(\mathcal{N}^{\otimes n}),
\label{eq:Q}
\end{equation}
where $I_c(\mathcal{N})\coloneqq \max_\rho I_c(\rho,\mathcal{N})$.
The regularization in Eq.~\eqref{eq:Q} cannot be removed in general because $I_c$ is
superadditive~\cite{shor1996quantumerrorcorrectingcodesneed,divincenzo1998capacity,smith2006degenerate}: there are channels and
states with $I_c(\rho_n,\mathcal{N}^{\otimes n})>nI_c(\mathcal{N})$.
Every state $\rho_n$ on $n$ channel inputs certifies the lower bound
$Q(\mathcal{N})\ge\frac1n I_c(\rho_n,\mathcal{N}^{\otimes n})$.
In the other direction, if $\mathcal{N}$ is antidegradable, meaning
$\mathcal{N}=\mathcal{A}\circ\mathcal{N}^c$ for some channel $\mathcal{A}$, then
$Q(\mathcal{N})=0$ by a no-cloning argument.

We parametrize the qubit depolarizing channel by its
transmissivity $\eta\in[0,1]$,
\begin{equation}
\cD_\eta(X)=\eta X+(1-\eta)\tr(X)\frac{\one_2}{2}=\eta X+b\,\cT(X),
\label{eq:Deta}
\end{equation}
where $b\coloneqq \frac{1-\eta}{2}$ and $\cT(X)\coloneqq \tr(X)\one_2$.
Equivalently, $\cD_\eta$ applies each of $X,Y,Z$ with probability $p/3$, where
$p=\frac34(1-\eta)$. For convenience, define \begin{equation}
 a=\frac{1+\eta}{2},\qquad b=\frac{1-\eta}{2},\qquad
 r=\frac ba=\frac{1-\eta}{1+\eta}.
 \label{eq:2}
\end{equation}

The depolarizing channel is unitarily covariant, that is, $$\cD_\eta(UXU^\dagger)=U \cD_\eta(X)U^\dagger$$ for all
$U\in U(2).$
The maximally mixed input gives the achievable rate~\cite{bennett1996mixed}
$$I_c(\one_2/2, \cD_\eta)=1-h(p)-p\log 3,$$ with $h(p)=-p\log p - (1-p)\log(1-p)$ the binary entropy.
It vanishes at $p\approx0.18929$ (or $\eta\approx0.7476$) which is known as the \emph{hashing bound}.
Superadditivity pushes the threshold $p_{\rm th}$ beyond this value.
Concatenated degenerate codes reach $p\approx0.1909$~\cite{smith2006degenerate} and
$p\approx0.1913$~\cite{fern2008lower}.
Recently, an optimization over rank-two states in the symmetric subspace of $n$ qubits yielded the improved lower bound
$p\approx0.1940$ (or $\eta\approx0.7414$)~\cite{agarwal2026enhanced}.

\subsection{Gaussian channels}
An $N$-mode bosonic system is defined by $N$ pairs $\mathbf{R}\coloneqq (x_1,p_1,\dots x_N,p_N)^T$ of \textit{canonical observables} satisfying the canonical commutation relation (CCR) $[R_j,R_k] = i\Omega_{jk}$, where $$\Omega \coloneqq  \begin{pmatrix}
    0 & 1\\
    -1 & 0
\end{pmatrix}^{\oplus N}.$$
Given a quantum state $\rho$ of an $N$-mode bosonic system, the first moment and the quantum covariance matrix (QCM) are defined by $d_j = \tr(\rho R_j)$ and $V_{jk} = \tr[(R_jR_k + R_kR_j)\rho] - 2d_j d_k$, respectively. The uncertainty principle implies that every QCM must satisfy $V\geq i\Omega$. A particularly important class of bosonic states called \textit{Gaussian states} is defined to be the ground and thermal states of Hamiltonians that are quadratic in $R_j$'s with eigenvalues bounded from below. Moreover, these states are entirely determined by their first moment and QCM~\cite{Simon1994,Serafini2017}.  

Quantum operations preserving Gaussianity are called Gaussian operations. Operationally, they can be realized using ancillary Gaussian states, Gaussian unitaries generated by quadratic Hamiltonians, and homodyne measurements. Important examples of Gaussian unitaries include phase shifts, squeezing transformations, beam splitters, and two-mode squeezers~\cite{Weedbrook2012}. Linear Bosonic Gaussian channels correspond to Gaussian operations where measurement outcomes are discarded~\cite{Holevo2001}.

 Among several examples of linear Gaussian channels, the two most relevant to this work are the single mode pure-loss attenuation channel, $\Loss_T$ and the quantum-limited amplification channel, $\Gain_{G}$. Mathematically, a pure-loss attenuation channel is realized by mixing the input mode with some environmental vacuum mode on a beam splitter of transmissivity $0\leq T \leq 1$, followed by tracing out the environment~\cite{eisert2005}. Physically, it arises when a fraction $1-T$ of a signal is absorbed. Its action on Gaussian states is $V\mapsto T \;V+(1-T)\mathbb{I}$ and $d\mapsto \sqrt{T}\;d$. 
 
 A quantum-limited amplification channel is mathematically realized by jointly acting on the input mode and environmental vacuum mode with a two-mode squeezer with gain $G\ge 1$, followed by tracing out the environment~\cite{eisert2005}. Its action on the QCM and the first moments is given by $V\mapsto G \;V+(G-1)\mathbb{I}$ and $d\mapsto \sqrt{G}\;d$. Physically, the signal is amplified by a factor $G$ with some added noise. 

 Fock states $\{\ket{k}: k\in \{0\}\cup\mathbb{N}\}$ span the Hilbert space of a single bosonic mode. The Kraus operators of $\Loss_T$ and $\Gain_G$ are conveniently expressed in the Fock basis
\begin{align}
    K^u_{T} &= \sum_{k=0}^\infty \sqrt{{k\choose u}T^{(k-u)}(1-T)^{u}}\; \ketbra{k-u}{k};\\
    A^t_{G} &= \sum_{k=0}^\infty \sqrt{{k+t \choose t}\frac{(G-1)^t}{G^{k+t+1}}}\; \ketbra{k+t}{k}
\end{align}
for $u, t = 0, \dots, \infty$.
Physically, the indices $u$ and $t$ respectively denote the number of photons lost to the environment and added by the environment during the action of the two channels. 

The composition of two Gaussian channels is also a Gaussian channel. For example, the action of $\Gaussian_{G,T} = \Gain_G\circ \Loss_T$ on the first two moments is given by $d\mapsto \sqrt{GT} d$ and $V\mapsto G\;[T\;V+ (1-T) \;\mathbb{I}] + (G-1)\;\mathbb{I}$. We distinguish three cases: (i) $GT <1$, (ii) $GT=1$, and (iii) $GT>1$, with the following effective bosonic channels: (i) a thermal attenuator  with transmissivity $GT$ and mean environmental photon number $(G-1)/(1-GT)$, (ii) an additive Gaussian noise with variance $(G-1)$, and (iii) a thermal amplifier with gain $GT$ and mean environmental photon number $G(1-T)/(GT-1)$. A comprehensive review of these channels can be found in~\cite{Ivan_Sabapathy_Simon}. The additive Gaussian noise channel of variance $\nu$ is a mixed unitary channel, given by
\begin{align}
    \mathcal{N}_{\nu} (\rho) = \frac{1}{2\pi \nu} \int_{\mathbb{C}}d^2\alpha \; e^{-\frac{\abs{\alpha}^2}{2\nu}} D(\alpha) \rho D^\dagger(\alpha),
\end{align}
 where $D(\alpha)$ is the unitary displacement operator in phase space.


\section{Depolarizing noise on the symmetric subspace}

In this work we focus on input states $\rho$ that are fully supported on the symmetric subspace $\Sym^M(\mathbb{C}^2)$, that is, $\rho = P_M\rho P_M$, with $P_M$ the projector onto $\Sym^M(\mathbb{C}^2)$.
In particular, such a $\rho$ is permutation-invariant and the action of an IID channel preserves the permutation invariance. Hence $\mathcal{D}_\eta^{\otimes M}(\rho)$ is also permutation-invariant, and by Schur's Lemma decomposes into irrep blocks $\Pi_{\lambda} \cD_\eta^{\otimes M} (\cdot) \Pi_{\lambda}$ on $V_{\lambda}^2$. We now show that the probability of mapping onto these blocks is independent of the input $\rho$.
\begin{lemma}\label{lem:sector-probability-input-ind}
Let $\Pi_\lambda$ be a projection onto $V_\lambda^2\otimes S_\lambda$ for $\lambda = (\lambda_1,\lambda_2)\in\Lambda(M,2)$, and set $N=\lambda_1-\lambda_2$. For every state $\rho$ on $\Sym^M(\mathbb{C}^2)$, the probability
\begin{align}
\pDist{N} =\tr\bigl[\Pi_\lambda\mathcal{D}_\eta^{\otimes M}(\rho)\bigr]
\label{eq:pN}
\end{align}
is independent of $\rho$.\footnote{Throughout the manuscript we are suppressing the dependence on $\eta$ of scalar quantities like $p_N$ to increase readability.
Furthermore, we often also suppress dependence on $M$, e.g., for $p_N$ defined in \eqref{eq:pN} or $\pi_s$ defined in \eqref{eq:explicit-mixing-weights}.}
\end{lemma}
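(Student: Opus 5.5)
The plan is to show that the linear functional $X\mapsto \tr[\Pi_\lambda \cD_\eta^{\otimes M}(X)]$, restricted to $\cL(\Sym^M(\C^2))$, is a multiple of the trace. Since $\pDist{N}$ is this functional evaluated on $\rho$, it then equals that multiple times $\tr\rho=1$, independently of $\rho$. Define
\begin{equation*}
f(X)\coloneqq \tr\bigl[\Pi_\lambda\cD_\eta^{\otimes M}(X)\bigr]=\tr\bigl[(\cD_\eta^{*})^{\otimes M}(\Pi_\lambda)\,X\bigr].
\end{equation*}
Because $\cD_\eta$ is self-adjoint with respect to the Hilbert--Schmidt inner product, $f(X)=\tr[\cD_\eta^{\otimes M}(\Pi_\lambda)X]$. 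The whole task is therefore to understand the operator $B\coloneqq P_M\,\cD_\eta^{\otimes M}(\Pi_\lambda)\,P_M$ on $\Sym^M(\C^2)$.

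First I will use the unitary covariance of $\cD_\eta$, stated above: $\cD_\eta^{\otimes M}(U^{\otimes M} Y U^{\dagger\otimes M})=U^{\otimes M}\cD_\eta^{\otimes M}(Y)U^{\dagger\otimes M}$. By Schur--Weyl duality \eqref{eq:schur-weyl-decomposition}, the isotypic projector $\Pi_\lambda$ commutes with every $U^{\otimes M}$. Combining the two facts, $\cD_\eta^{\otimes M}(\Pi_\lambda)$ also commutes with every $U^{\otimes M}$. Since $P_M$ commutes with $U^{\otimes M}$ as well, $B$ commutes with $S_M(U)$ for all $U\in U(2)$.

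Next I need that $\Sym^M(\C^2)$ is an irreducible representation of $U(2)$ under $U\mapsto S_M(U)$. It is the $\GL(2)$-irrep $V^2_{(M,0)}$ from \eqref{eq:schur-weyl-decomposition}, and irreducibility persists on restricting to $U(2)$: $U(2)$ is Zariski dense in $\GL(2)$, or, equivalently, the complex span of $\{U^{\otimes M}\}$ equals the span of $\{A^{\otimes M}\}$ by analyticity. Schur's lemma then gives $B=c_N P_M$ for some scalar $c_N$ depending only on $\eta$, $M$ and $N$. Hence $f(X)=c_N\tr X$ for all $X\in\cL(\Sym^M(\C^2))$, and in particular $\pDist{N}=c_N$ for every state $\rho$.

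The only delicate point is the irreducibility of $\Sym^M(\C^2)$ under $U(2)$ rather than $\GL(2)$. Instead of invoking density, I could verify it directly. The diagonal unitaries $\diag(1,e^{i\theta})$ act on the Dicke states as $e^{ik\theta}$, so any invariant subspace is spanned by Dicke states. A rotation with all matrix elements nonzero then mixes $\dicke{M}{k}$ into every Dicke state, so no proper subspace spanned by Dicke states is invariant. As a sanity check, evaluating $f$ on $\ketbra{0}{0}^{\otimes M}$ with \eqref{eq:determinant-twist} recovers the explicit formula \eqref{eq: p_N}.
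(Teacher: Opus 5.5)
Your proposal is correct and is essentially the paper's proof: you pull $\Pi_\lambda$ back through the adjoint channel (which you rightly note equals $\cD_\eta^{\otimes M}$), use unitary covariance together with the collective $U(2)$-invariance of $\Pi_\lambda$, and apply Schur's lemma on the irreducible $U(2)$-module $\Sym^M(\C^2)$ to get $P_M(\cD_\eta^{\otimes M})^*(\Pi_\lambda)P_M=c\,P_M$; your justification of irreducibility under $U(2)$ rather than $\GL(2)$ fills in a point the paper merely asserts. One small caveat on your optional direct check: a rotation with nonzero entries need not have all Dicke matrix elements nonzero (for $M=2$ and $R=\tfrac{1}{\sqrt2}\bigl(\begin{smallmatrix}1&-1\\1&1\end{smallmatrix}\bigr)$ one gets $\bra{D^2_1}S_2(R)\ket{D^2_1}=0$), so you should take a generic rotation angle, which works because each such matrix element is a trigonometric polynomial in the angle that is not identically zero.
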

\begin{proof}
Let $P_M$ be the projector onto the symmetric subspace
$\Sym^M(\mathbb C^2) = V^2_{(M)} = \glIrrep[M].$
Since $\rho$ is supported on $V^2_{(M)}$, we have
$\rho=P_M\rho P_M.$
Define $$X\coloneqq P_M(\mathcal D_\eta^{\otimes M})^*(\Pi_\lambda)P_M.$$
By the definition of the adjoint channel,
$$\begin{aligned}
\pDist{N}
&=\tr\!\left[\Pi_\lambda\mathcal D_\eta^{\otimes M}(\rho)\right] \\
&=\tr\!\left[(\mathcal D_\eta^{\otimes M})^*(\Pi_\lambda)\rho\right] \\
&=\tr[X\rho].
\end{aligned}$$

We show that $X$ is a scalar multiple of $P_M$. The depolarizing channel is unitarily covariant: for every $U\in U(2)$,
$$\mathcal D_\eta(U\cdot U^\dagger)=U\mathcal D_\eta(\cdot)U^\dagger.$$
Hence the adjoint of its $M$-fold tensor power satisfies
$$(\mathcal D_\eta^{\otimes M})^*\!\left(U^{\otimes M}\cdot(U^\dagger)^{\otimes M}\right)
=
U^{\otimes M}(\mathcal D_\eta^{\otimes M})^*(\cdot)(U^\dagger)^{\otimes M}.$$

The isotypical projectors $\Pi_\lambda$ are invariant under the collective $U(2)$-action as well, 
$$U^{\otimes M}\Pi_\lambda(U^\dagger)^{\otimes M}=\Pi_\lambda.$$
Applying the covariance of the adjoint then gives
$$U^{\otimes M}(\mathcal D_\eta^{\otimes M})^*(\Pi_\lambda)(U^\dagger)^{\otimes M}
=
(\mathcal D_\eta^{\otimes M})^*(\Pi_\lambda),$$
and similarly we also have $U^{\otimes M}X(U^\dagger)^{\otimes M}=X.$
Since the $U(2)$-representation $V^2_{(M)}=\Sym^M(\mathbb C^2)$ is irreducible, Schur's lemma implies that there is a scalar $c$ such that $X=cP_M$. 
Taking traces and using $P_M\rho P_M=\rho$, we get that $\pDist{N}=\tr[X\rho]=c\,\tr[P_M\rho]=c,$ and thus $\pDist{N}$ is independent of the choice of $\rho$.

\end{proof}

Write $S_N(A)$ for the action of $A$ on $\glIrrep = V_{\lambda}^2$ where $N = \lambda_1 - \lambda_2$. For $A=\diag(a,b)$, define $q_N=\tr S_N(A)$. Then
\begin{equation}
 q_N=\sum_{k=0}^N a^{N-k}b^k
 =\frac{a^{N+1}-b^{N+1}}{\eta},
 \label{eq:8}
\end{equation}
with the rightmost expression extended by continuity at $\eta=0$ (see, e.g., \cite[App.~B]{bhalerao2025improvingquantumcommunicationrates}). We now compute the input-independent probability by choosing a coherent input whose noisy output is diagonal.

\begin{lemma}\label{lem: p_MN-prob}
Let $\lambda=(\lambda_1,\lambda_2)\in\Lambda(M,2)$ and set $N=\lambda_1-\lambda_2$.
Then $\pDist{N}$ is equal to
\begin{equation}
 \pDist{N}=f_\lambda(ab)^{\lambda_2}q_N,
 \label{eq:9}
\end{equation}
where $f_\lambda \equiv \snIrrepDim$ is defined in \eqref{eq:S-lambda-dimension}.
Equivalently,
\begin{equation}
 \pDist{N}= \begin{cases}\dfrac{f_\lambda}{\eta}
 \left(a^{\lambda_1+1}b^{\lambda_2}-a^{\lambda_2}b^{\lambda_1+1}\right) & \text{if $\eta>0$,}\\[1em]
 \dfrac{f_\lambda(N+1)}{2^M} & \text{if $\eta=0$.}
 \end{cases}
 \label{eq:10}
\end{equation}
\end{lemma}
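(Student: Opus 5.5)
By Lemma~\ref{lem:sector-probability-input-ind}, $\pDist{N}$ does not depend on the input. I will therefore evaluate it on the convenient state $\rho=\ketbra{0}{0}^{\otimes M}=\dicke{M}{0}\!\bra{D_0^M}$, which lies in $\Sym^M(\C^2)$. The depolarizing channel maps $\ketbra{0}{0}$ to $A=\diag(a,b)$ with $a=\frac{1+\eta}{2}$ and $b=\frac{1-\eta}{2}$, so the output is $A^{\otimes M}$. This operator is diagonal and, more to the point, is the image of $A$ under the $\GL(2)$ representation $g\mapsto g^{\otimes M}$, extended by continuity to all of $\cL(\C^2)$ (and $A$ is invertible anyway for $\eta<1$).

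Next I apply the Schur--Weyl decomposition \eqref{eq:schur-weyl-decomposition} with $\pi$ the identity. On the isotypical component, $\Pi_\lambda A^{\otimes M}\Pi_\lambda\cong q_\lambda(A)\otimes \one_{S_\lambda}$. Taking the trace gives
\begin{equation*}
\pDist{N}=\tr\bigl[\Pi_\lambda A^{\otimes M}\bigr]=f_\lambda\,\tr q_\lambda(A).
\end{equation*}
By the determinant twist \eqref{eq:determinant-twist}, $q_\lambda(A)=(\det A)^{\lambda_2}S_N(A)=(ab)^{\lambda_2}S_N(A)$, whose trace is $(ab)^{\lambda_2}q_N$ by the definition of $q_N$. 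This yields \eqref{eq:9}.

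For \eqref{eq:10} with $\eta>0$, I substitute \eqref{eq:8}, $q_N=(a^{N+1}-b^{N+1})/\eta$, and expand:
\begin{equation*}
(ab)^{\lambda_2}\bigl(a^{N+1}-b^{N+1}\bigr)=a^{\lambda_1+1}b^{\lambda_2}-a^{\lambda_2}b^{\lambda_1+1},
\end{equation*}
using $N+\lambda_2=\lambda_1$. For $\eta=0$ we have $a=b=\tfrac12$, so $q_N=\sum_{k=0}^N 2^{-N}=(N+1)2^{-N}$ and $(ab)^{\lambda_2}=2^{-2\lambda_2}$. Hence $\pDist{N}=f_\lambda(N+1)2^{-(N+2\lambda_2)}=f_\lambda(N+1)2^{-M}$.

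The only step needing care is justifying the twist formula \eqref{eq:determinant-twist} for $A$ (already stated in the paper) and confirming that the trace of $\Pi_\lambda A^{\otimes M}$ factorizes as $f_\lambda$ times the trace on $V^2_\lambda$. Both follow directly from the stated decomposition, so I expect no real obstacle.
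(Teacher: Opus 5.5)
Your proposal is correct and follows the paper's proof essentially step for step. Both evaluate the input-independent $p_N$ on $\ketbra{0}{0}^{\otimes M}$, use the Schur--Weyl block form $q_\lambda(A)\otimes\one_{S_\lambda}$ together with the determinant twist to get $f_\lambda(ab)^{\lambda_2}q_N$, and then specialize using $\lambda_1=N+\lambda_2$ for $\eta>0$ and $M=N+2\lambda_2$ for $\eta=0$.
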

\begin{proof}

By Lemma~\ref{lem:sector-probability-input-ind}, the probability $\pDist{N}$ of any $\lambda$-block is independent of the input state in $\mathcal{Q}_{M,M}$. We may therefore evaluate it on the state $\ketbra{0}{0}^{\otimes M}$. Since $\mathcal D_\eta(\ketbra{0}{0})=A=\diag(a,b)$, its output is $A^{\otimes M}$.

Under Schur-Weyl duality, $A^{\otimes M}$ acts on each $\lambda$-block $\glIrrep \otimes \snIrrep$ as $q_\lambda(A)\otimes \one_{S_\lambda}$. Hence
$$\pDist{N} =\tr[\Pi_\lambda A^{\otimes M}]
=f_\lambda\,\tr[q_\lambda(A)],$$ where $f_\lambda=\dim S_\lambda$ is defined in \eqref{eq:S-lambda-dimension}.

By~\eqref{eq:determinant-twist}, $q_\lambda(A)=(\det A)^{\lambda_2}S_N(A)$. Since $A=\diag(a,b)$, we have $\det A=ab$, and $S_N(A)$ has eigenvalues $a^{N-k}b^k$ for $k=0,\ldots,N$ (see \cite[App.~B]{bhalerao2025improvingquantumcommunicationrates}). Therefore
$$\pDist{N}
=f_\lambda(ab)^{\lambda_2}\sum_{k=0}^N a^{N-k}b^k
=f_\lambda(ab)^{\lambda_2}q_N.$$

For $\eta>0$, since $a-b=\eta$,
$q_N=(a^{N+1}-b^{N+1})/\eta$. Using $N=\lambda_1-\lambda_2$ gives
$$\pDist{N}
=\frac{f_\lambda}{\eta}
\left(
a^{\lambda_1+1}b^{\lambda_2}
-a^{\lambda_2}b^{\lambda_1+1}
\right).$$

At $\eta=0$, we have $a=b=1/2$, so $q_N=(N+1)2^{-N}$. Since $M=N+2\lambda_2$,
$$\pDist[M][0]{N}
=f_\lambda 2^{-2\lambda_2}(N+1)2^{-N}
=\frac{f_\lambda(N+1)}{2^M},$$
which concludes the proof.
\end{proof}

The following theorem gives the Schur-Weyl block decomposition of
the channel output for an input supported on the symmetric subspace.
Since this is used in the main text, we phrase it in spin notation, using the abbreviation $\one_{M,N}\equiv \one_{S_\lambda}$ and $\snIrrepDim = f_\lambda$.

\begin{theorem}\label{thm:decomp}
    For any operator $X\in \cL(\mathrm{Sym}^M(\mathbb{C}^2))$, we have
    \begin{align}
        \mathcal{D}_\eta^{\otimes M}(X) = \bigoplus_{N}\; \pDist{N}\;\PhiChannel(X) \otimes \frac{\one_{M,N}}{\snIrrepDim},
        \label{eq:normalized-block-decomposition}
    \end{align}
    where $N=0,2,\dots,M$ for even $M$ and $N=1,3,\dots,M$ for odd $M$,
    the $\PhiChannel \colon {\cal L}(\mathrm{Sym}^M(\mathbb{C}^2))\to {\cal L}(\mathrm{Sym}^N(\mathbb{C}^2))$ are quantum channels, $\snIrrepDim$ is defined in \eqref{eq:S-lambda-dimension}, and $\pDist{N}$ is a fixed probability distribution over $N$ for each $M$, with
    \begin{align}
        \pDist{N} = \snIrrepDim \left(\frac{1-\eta^2}{4}\right)^{\frac{M-N}{2}} \frac{(1+\eta)^{N+1}-(1-\eta)^{N+1}}{2^{N+1}\eta}.
        \label{eq:block-probabilities}
    \end{align}
\end{theorem}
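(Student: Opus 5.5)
The plan is to assemble the decomposition from Schur's lemma together with Lemmas~\ref{lem:sector-probability-input-ind} and~\ref{lem: p_MN-prob}. First, for any $X\in\cL(\Sym^M(\C^2))$, note that $X=P_MXP_M$ commutes with every $W_\pi$. Since $\cD_\eta^{\otimes M}$ is permutation-covariant, the output $\cD_\eta^{\otimes M}(X)$ also commutes with every $W_\pi$. It therefore lies in the commutant of the $S_M$-action, which by Schur--Weyl duality \eqref{eq:schur-weyl-decomposition} and Schur's lemma is $\bigoplus_\lambda \cL(V^2_\lambda)\otimes\one_{S_\lambda}$. This gives linear maps $\unnormalPhiChannel$ with $\cD_\eta^{\otimes M}(X)=\bigoplus_N \unnormalPhiChannel(X)\otimes \one_{M,N}/\snIrrepDim$, where concretely $\unnormalPhiChannel(X)=\snIrrepDim\cdot\frac{1}{\snIrrepDim}\tr_{S_\lambda}[\Pi_\lambda\cD_\eta^{\otimes M}(X)\Pi_\lambda]$, i.e.\ the partial trace over $S_\lambda$ of the $\lambda$-block. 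Here I identify $V^2_\lambda\cong\glIrrep$ with $\Sym^N(\C^2)$ as vector spaces; both are $(N+1)$-dimensional, and via \eqref{eq:determinant-twist} the $\GL(2)$ action differs only by the scalar $(\det)^{\lambda_2}$, so this identification is canonical up to a unitary.

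Complete positivity of $\unnormalPhiChannel$ is immediate, since it is a composition of the CP map $\cD_\eta^{\otimes M}$, compression by $\Pi_\lambda$, and a partial trace. Taking traces gives $\tr\unnormalPhiChannel(X)=\tr[\Pi_\lambda\cD_\eta^{\otimes M}(X)]$. The proof of Lemma~\ref{lem:sector-probability-input-ind} shows $P_M(\cD_\eta^{\otimes M})^*(\Pi_\lambda)P_M=\pDist{N}P_M$, and this argument applies verbatim to any operator $X$ on the symmetric subspace, not only states. Hence $\tr\unnormalPhiChannel(X)=\pDist{N}\tr X$ for all $X$, by linearity.

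For every $N$ with $\pDist{N}>0$, define $\PhiChannel:=\unnormalPhiChannel/\pDist{N}$. This map is CP and trace preserving, hence a channel. If some $\pDist{N}=0$ (only possible at degenerate $\eta$), I would set $\PhiChannel$ to an arbitrary channel, since the block then vanishes. The distribution $\{\pDist{N}\}_N$ is nonnegative. Summing $\tr\unnormalPhiChannel(\rho)$ over $N$ gives $\tr\cD_\eta^{\otimes M}(\rho)=1$, because $\sum_\lambda\Pi_\lambda=\one$, so it is a probability distribution.

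Finally, the explicit formula \eqref{eq:block-probabilities} follows from Lemma~\ref{lem: p_MN-prob}. Substitute $ab=(1-\eta^2)/4$, $\lambda_2=(M-N)/2$, and $q_N=(a^{N+1}-b^{N+1})/\eta=\frac{(1+\eta)^{N+1}-(1-\eta)^{N+1}}{2^{N+1}\eta}$. The only mildly delicate step is making the identification of $V^2_\lambda$ with $\Sym^N(\C^2)$ explicit enough that $\PhiChannel$ is a well-defined map into $\cL(\Sym^N(\C^2))$. I would fix it by choosing a highest-weight vector in each block, i.e.\ an isometry $V^2_\lambda\to\Sym^N(\C^2)$ that intertwines $q_\lambda(U)$ with $(\det U)^{\lambda_2}S_N(U)$. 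Everything else is routine bookkeeping.
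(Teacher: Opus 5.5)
Your proposal is correct and follows essentially the same route as the paper's proof: Schur--Weyl duality and Schur's lemma give the block form with completely positive $\unnormalPhiChannel$, Lemma~\ref{lem:sector-probability-input-ind} plus linearity gives $\tr \unnormalPhiChannel(X) = \pDist{N} \tr X$, one normalizes by $\pDist{N}$, and Lemma~\ref{lem: p_MN-prob} supplies the explicit formula. The details you add---fixing the identification $V^2_\lambda\to\Sym^N(\C^2)$ by an intertwining isometry, noting that a block with $\pDist{N}=0$ (which happens only at $\eta=1$) vanishes identically, and checking $\sum_N \pDist{N}=1$ via $\sum_\lambda \Pi_\lambda=\one$---are sound and make explicit what the paper leaves implicit.
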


\begin{proof}
    The input state $\rho$ is supported on $\Sym^M(\mathbb{C}^2)$ and thus permutation-invariant, and the IID channel $ \mathcal{D}_p^{\otimes M}$ preserves this permutation-invariance of the input state $\rho.$ 
    Thus, the output $\mathcal{D}_p^{\otimes M}(\rho)$ is again a permutation-invariant state and by Schur-Weyl duality can be written as $$\mathcal D_{\eta}^{\otimes M}(X) =\bigoplus_N\unnormalPhiChannel(X)\otimes\frac{\one_{M,N}}{\snIrrepDim}, $$ where each $\unnormalPhiChannel$ is completely positive. By \Cref{lem:sector-probability-input-ind} and linearity, ${\tr \unnormalPhiChannel(X) = \pDist{N} \tr X}.$ Thus $\PhiChannel = \unnormalPhiChannel / \pDist{N} $ is a quantum channel whenever $\pDist{N} > 0.$

    Finally, \Cref{lem: p_MN-prob} gives the stated probabilities and their values at $\eta = 0$ after substituting $a=(1+\eta)/2$, $b=(1-\eta)/2$,
and $\lambda_2=(M-N)/2$.
\end{proof}

\subsection{Decomposition of the channel on irreps}
We will describe the channels $\PhiChannel$ as a concatenation of two simple channels.
The first one retains $s$ of the symmetric input qubits and discards the others.
\begin{definition}
For $0\le s\le M$, let
\begin{equation}
 \traceChannel(X) =\tr_{M-s}(X),
 \label{eq:17}
\end{equation}
viewed as a channel from $\mathcal{L}(\glIrrep[M])$ to $\mathcal{L}(\glIrrep[s][s])$.
\end{definition}

The second operation is the universal symmetric cloning channel of~\cite{Werner1998}. It enlarges the symmetric system by adjoining identities and projecting back onto the symmetric space.

\begin{definition}
For $0\le s\le N$, define
\begin{equation}
 \cloneChannel(X)=\frac{s+1}{N+1}
 P_N\bigl(X\otimes \one_2^{\otimes(N-s)}\bigr)P_N,
 \label{eq:18}
\end{equation}
viewed as a channel from $\mathcal{L}(\glIrrep[s][s])$ to $\mathcal{L}(\glIrrep[N][N])$.
\end{definition}

The following result is standard (see, e.g., \cite{chiribella,harrow2013church}).
\begin{proposition}\label{prop:finite-loss-cloner-channels}
The maps $\traceChannel$ and $\cloneChannel$ are
completely positive and trace preserving.
\end{proposition}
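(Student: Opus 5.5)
Both maps are compositions of elementary CP operations, so complete positivity is immediate; the real content is trace preservation, which I will check on a spanning set of inputs using the Dicke basis.

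\emph{The loss channel $\traceChannel$.} The partial trace $\tr_{M-s}$ is completely positive and trace preserving on all of $\cL((\C^2)^{\otimes M})$, hence also on the subalgebra $\cL(\Sym^M(\C^2))$. It remains to confirm the stated codomain: its output must lie in $\cL(\Sym^s(\C^2))$. For $X=P_MXP_M$ I will use $P_M=(P_s\otimes \one)P_M=P_M(P_s\otimes\one)$, which holds because $S_s\subset S_M$ acts trivially on $\Sym^M$. This gives $\tr_{M-s}X=P_s(\tr_{M-s}X)P_s$.

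\emph{Complete positivity of $\cloneChannel$.} The map $Y\mapsto Y\otimes\one_2^{\otimes(N-s)}$ is CP, being tensoring with a positive operator. Conjugation by $P_N$ is CP, and multiplication by the positive scalar $(s+1)/(N+1)$ preserves this. The output is manifestly supported on $\Sym^N(\C^2)$.

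\emph{Trace preservation of $\cloneChannel$.} This is the only step needing a computation. By linearity it suffices to check it on a spanning set of $\cL(\Sym^s(\C^2))$; as in the main text I will use $\ketbra{u}{u}^{\otimes s}$, which span (cf.\ \cite[Section 1.1]{harrow2013church}). Then
\[
\tr\bigl[P_N(\ketbra{u}{u}^{\otimes s}\otimes\one^{\otimes(N-s)})\bigr]=\tr\bigl[(\ketbra{u}{u}^{\otimes s}\otimes\one)P_N\bigr].
\]
Taking the partial trace of $P_N$ over the last $N-s$ factors gives an operator on $\Sym^s$ that commutes with all $U^{\otimes s}$. By Schur's lemma it equals $c\,P_s$, and matching traces gives $c=(N+1)/(s+1)$. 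Hence the trace above is $(N+1)/(s+1)$, and multiplying by $(s+1)/(N+1)$ yields trace one.

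Alternatively, one can verify $\tr_{N-s}P_N=\tfrac{N+1}{s+1}P_s$ directly in the Dicke basis. This uses the decomposition $\dicke{N}{k}=\sum_j\sqrt{\binom{s}{j}\binom{N-s}{k-j}/\binom{N}{k}}\,\dicke{s}{j}\dicke{N-s}{k-j}$ together with the identity $\sum_{k}\binom{N-s}{k-j}\binom{s}{j}/\binom{N}{k}=\tfrac{N+1}{s+1}$, a standard hypergeometric sum.

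The Schur's lemma step, namely identifying $\tr_{N-s}P_N$ as a multiple of $P_s$ (including that its support lies in $\Sym^s$, by the same embedding argument used for $\traceChannel$), is the only point needing care. Everything else is routine.
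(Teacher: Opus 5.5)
Your proof is correct. The paper gives no argument of its own here: it simply calls the result standard and cites \cite{chiribella,harrow2013church}. The key fact in those references is exactly the identity $\tr_{N-s}P_N=\frac{N+1}{s+1}P_s$ that you derive from Schur's lemma, including the point you rightly flag that the support of $\tr_{N-s}P_N$ lies in $\Sym^s(\C^2)$; so you have supplied the standard proof, and the loss map is, as you say, just a partial trace with the codomain check.

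One small simplification: once you have that identity, $\tr[(Y\otimes\one)P_N]=\tr[Y\,\tr_{N-s}P_N]=\frac{N+1}{s+1}\tr Y$ holds for every $Y\in\cL(\Sym^s(\C^2))$ directly. The reduction to the spanning set $\ketbra{u}{u}^{\otimes s}$ is therefore unnecessary.
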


We can now compute the action of the channels $\PhiChannel$ on tensor powers of rank-one operators. Such
operators suffice because their linear span is the entire operator space, $\mathcal{L}(\mathrm{Sym}^M(\mathbb{C}^2))$.

\begin{proposition}\label{prop:coherent-kernel-finite}
Let $\ket{u}, \ket{v} \in\mathbb C^2$ be unit vectors, set
$g=\langle v|u\rangle$, and put
$A_{u,v}=\mathcal D_\eta(\ketbra{u}{v})$.
The operators $\ketbra{u}{v}^{\otimes M}$ span
$\mathcal{L}(\glIrrep[M])$. For every block with positive probability,
\begin{equation}\label{eq:finite-gl2-kernel}
 \PhiChannel
   \mleft(\ketbra{u}{v}^{\otimes M} \mright)
 =\frac{g^{M-N}}{q_N}S_N(A_{u,v}).
\end{equation}
\end{proposition}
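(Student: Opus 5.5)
The proof has two parts: a spanning claim, and a computation of the block components of $A_{u,v}^{\otimes M}$. In both, the key fact is that $\ketbra{u}{v}^{\otimes M}=(\ket{u}\bra{v})^{\otimes M}$ is an IID operator. Since $\cD_\eta$ acts factorwise,
\begin{equation*}
\cD_\eta^{\otimes M}\bigl(\ketbra{u}{v}^{\otimes M}\bigr)=A_{u,v}^{\otimes M}.
\end{equation*}

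\emph{Spanning.} I would use the standard polarization argument, as in \cite[Section 1.1]{harrow2013church}. The vectors $\ket{u}^{\otimes M}$ span $\Sym^M(\C^2)$, since any symmetric tensor is a polynomial coefficient extraction of $(x\ket{0}+y\ket{1})^{\otimes M}$. The rescaling $\ket{u}\mapsto t\ket{u}$ only multiplies $\ket{u}^{\otimes M}$ by $t^M$, so restricting to unit vectors costs nothing. Hence the products $\ket{u}^{\otimes M}\bra{v}^{\otimes M}$ span $\cL(\Sym^M(\C^2))$.

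\emph{Block decomposition.} I apply \eqref{eq:schurlemma}/\eqref{eq:determinant-twist}, extended by continuity from $\GL(2)$ to all of $\cL(\C^2)$; this is legitimate because both sides are polynomial in the entries of $A$. This gives
\begin{equation*}
A_{u,v}^{\otimes M}\cong\bigoplus_N(\det A_{u,v})^{\frac{M-N}{2}}S_N(A_{u,v})\otimes\one_{M,N}.
\end{equation*}
Comparing with \Cref{thm:decomp}, where the $N$-block equals $\pDist{N}\PhiChannel(X)\otimes\one_{M,N}/\snIrrepDim$, and substituting $\pDist{N}=\snIrrepDim(ab)^{(M-N)/2}q_N$ from \Cref{lem: p_MN-prob}, yields
\begin{equation*}
\PhiChannel\bigl(\ketbra{u}{v}^{\otimes M}\bigr)=\frac{(\det A_{u,v})^{(M-N)/2}}{(ab)^{(M-N)/2}\,q_N}\,S_N(A_{u,v}).
\end{equation*}

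\emph{Computing the determinant.} It remains to show $\det A_{u,v}=ab\,g^2$; then $(M-N)/2$ is an integer exponent and the prefactor becomes $g^{M-N}$. Write $A_{u,v}=\eta\ketbra{u}{v}+b\,g\,\one_2$, using $\tr\ketbra{u}{v}=\langle v|u\rangle=g$. The operator $\ketbra{u}{v}$ has rank one, with eigenvalues $g$ and $0$. Hence $A_{u,v}$ has eigenvalues $\eta g+bg=ag$ and $bg$, so $\det A_{u,v}=abg^2$. If $\ketbra{u}{v}$ is nilpotent ($g=0$), both eigenvalues are $0$ and the formula still holds.

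I expect the only delicate point to be justifying the extension of the determinant-twist formula to possibly singular, non-normal $A_{u,v}$. I would handle it by noting that both sides of \eqref{eq:determinant-twist} are polynomial in the matrix entries, so equality on the dense set $\GL(2)$ gives equality everywhere. A second point is that the identification of blocks in \Cref{thm:decomp} is linear in $X$. It therefore applies to the non-Hermitian operators $\ketbra{u}{v}^{\otimes M}$, with $\PhiChannel$ understood as its linear extension.
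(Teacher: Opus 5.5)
Your proposal is correct and follows the paper's proof step for step: both push $|u\rangle\langle v|^{\otimes M}$ through to $A_{u,v}^{\otimes M}$, apply the determinant twist $q_\lambda(A)=(\det A)^{\lambda_2}S_N(A)$ with $\det A_{u,v}=ab\,g^2$, and divide by $p_N=m_{M,N}(ab)^{\lambda_2}q_N$. The differences are cosmetic. You obtain the determinant from the eigenvalues $ag$ and $bg$, whereas the paper uses the identity $\det(x\mathbb{I}_2+yR)=x^2+xy\,\mathrm{tr}R$. You also write out the spanning argument and the polynomial-identity extension of the twist formula to singular $A_{u,v}$, which the paper leaves implicit.
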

\begin{proof}
    Note $\cD_\eta (\ketbra{u}{v}) =: A_{u,v} = bg \id_2 + \eta \ketbra{u}{v}$ with $b = \frac{1 - \eta}{2}$.  Because ${A_{u,v}^{\otimes M} = \cD_\eta^{\otimes M} (\ketbra{u}{v}^{\otimes M} ) }$, the output of $\PhiChannel(\ketbra{u}{v}^{\otimes M})$ is given by $q_{\lambda}(A_{u,v})$, the $\GL(2)$-irrep of $A_{u,v}$ on $\glIrrep$.
    
    The determinant for matrices of the form $x\id_2 + y R$ for $\det R = 0$ is given by $x^2 + xy \tr R$, giving
    \begin{equation*}
        \det A_{u,v} = b^2g^2 + b\eta g^2= abg^2
    \end{equation*}
    for $a = \frac{1 + \eta}{2}$. Consequently, according to \eqref{eq:determinant-twist}, 
    \begin{equation*}
        q_\lambda(A_{u,v}) = (ab)^{\lambda_2} g^{2\lambda_2} S_N(A_{u,v}).
    \end{equation*}
    
For $0\leq\eta<1$, comparison with \eqref{eq:9} and the normalized block decomposition \eqref{eq:normalized-block-decomposition}
now gives
\begin{align}
 \PhiChannel (\ketbra{u}{v}^{\otimes M})
 &=\frac{f_\lambda q_\lambda(A_{u,v})}{\pDist{N}}=\frac{g^{2\lambda_2}}{q_N}S_N(A_{u,v})
 =\frac{g^{M-N}}{q_N}S_N(A_{u,v}).
\end{align}
\end{proof}

Define now for $X\in\mathcal{L}(\glIrrep)$ the map
\begin{equation}\label{eq:top-sector-thermalizer}
 \Theta_{N,\eta}(X)
    =\frac{1}{q_N}P_N\mathcal D_\eta^{\otimes N}(X)P_N.
\end{equation}
Since $q_N>0$, this map is completely positive. Applying the sector
probability formula with input size $N$ and partition $(N,0)$ shows
that the operator $P_N\mathcal D_\eta^{\otimes N}(X)P_N$ in \eqref{eq:top-sector-thermalizer} has
trace $q_N \tr X$. Thus $\Theta_{N,\eta}$ is also
trace preserving. The next identity reduces every block map $\Phi^\eta_{M\to N}$ to a composition of a partial trace with the map in \eqref{eq:top-sector-thermalizer}.

\begin{proposition}\label{prop:exact-first-factorization}
For $0\le\eta\le1$,
\begin{equation}\label{eq:first-factorization}
 \PhiChannel=\Theta_{N,\eta}\circ \traceChannel[N].
\end{equation}
\end{proposition}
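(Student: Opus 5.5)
Both sides of \eqref{eq:first-factorization} are linear maps on $\mathcal{L}(\glIrrep[M])$. By Proposition~\ref{prop:coherent-kernel-finite}, the operators $\ketbra{u}{v}^{\otimes M}$ with unit vectors $\ket{u},\ket{v}\in\C^2$ span this space. It therefore suffices to check the identity on these product operators. The left-hand side is already known: by \eqref{eq:finite-gl2-kernel},
\begin{equation*}
\PhiChannel\bigl(\ketbra{u}{v}^{\otimes M}\bigr)=\frac{g^{M-N}}{q_N}S_N(A_{u,v}),\qquad g=\langle v|u\rangle .
\end{equation*}
This holds whenever $\pDist{N}>0$. For $0\le\eta\le 1$ the weight $\pDist{N}$ is positive, except possibly at the endpoint $\eta=1$, where only $N=M$ survives; I would treat that endpoint separately.

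For the right-hand side, I first apply the partial trace. Since $\tr\ketbra{u}{v}=g$, we get $\traceChannel[N](\ketbra{u}{v}^{\otimes M})=g^{M-N}\ketbra{u}{v}^{\otimes N}$. Applying $\Theta_{N,\eta}$ then gives
\begin{equation*}
\frac{g^{M-N}}{q_N}\,P_N\,\cD_\eta^{\otimes N}\bigl(\ketbra{u}{v}^{\otimes N}\bigr)P_N=\frac{g^{M-N}}{q_N}\,P_N A_{u,v}^{\otimes N}P_N .
\end{equation*}
The operator $A_{u,v}^{\otimes N}$ commutes with permutations, so it preserves $\Sym^N(\C^2)$. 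Hence $P_N A_{u,v}^{\otimes N}P_N$ is exactly the restriction $S_N(A_{u,v})$, regarded as an operator on $\Sym^N(\C^2)=\glIrrep[N][N]$. The two sides therefore agree on a spanning set, and by linearity they agree everywhere.

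The main subtlety is to keep the identification of the output spaces consistent. On the left, $\PhiChannel$ lands in the abstract irrep $\glIrrep$ inside the $N$-sector of $M$ qubits. On the right, the map lands in $\Sym^N(\C^2)$. Both must be identified through the same $\GL(2)$-isomorphism, namely the one under which $q_\lambda(A)=(\det A)^{\lambda_2}S_N(A)$ in \eqref{eq:determinant-twist}. Proposition~\ref{prop:coherent-kernel-finite} was derived in exactly this convention, so the identification is consistent. I also need to handle the degenerate endpoints. At $\eta=0$ we have $A_{u,v}=g\,\one_2/2$; here $q_N>0$ and $\pDist{N}>0$, so nothing special happens. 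At $\eta=1$, a block with $\pDist{N}=0$ carries no content, and $\PhiChannel$ is only defined up to this convention. I would simply define it by \eqref{eq:first-factorization} in that case, or equivalently note that the formula \eqref{eq:finite-gl2-kernel} extends continuously in $\eta$.
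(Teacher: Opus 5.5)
Your proposal is correct and follows the paper's proof: you check the identity on the spanning set $\ketbra{u}{v}^{\otimes M}$, use $\traceChannel[N](\ketbra{u}{v}^{\otimes M})=g^{M-N}\ketbra{u}{v}^{\otimes N}$ and $P_N A_{u,v}^{\otimes N}P_N=S_N(A_{u,v})$, and match the result with \eqref{eq:finite-gl2-kernel}. Your remarks on the $\eta=1$ endpoint (where blocks with $N<M$ have $p_N=0$, so $\PhiChannel$ is not defined by the block decomposition) and on identifying $\glIrrep$ with $\mathrm{Sym}^N(\mathbb{C}^2)$ consistently are points the paper leaves implicit.
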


\begin{proof}
Again, it suffices to compare
the two maps on $\ketbra{u}{v}^{\otimes M}$, where $g= \langle{v|u}\rangle$ and $A_{u,v} = \cD_\eta(\ketbra{u}{v})$. The partial trace and the definition of
$\Theta_{N,\eta}$ give
\begin{align*}
 (\Theta_{N,\eta}\circ\traceChannel[N])(\ketbra{u}{v}^{\otimes M})
 &=g^{M-N}\Theta_{N,\eta}(\ketbra{u}{v}^{\otimes N})\\
 &=\frac{g^{M-N}}{q_N}
       P_N A_{u,v}^{\otimes N}P_N\\
 &=\frac{g^{M-N}}{q_N}S_N(A_{u,v}) \\
 &= \PhiChannel(\ketbra{u}{v}^{\otimes M})
\end{align*}
where the final equality is given by \eqref{eq:finite-gl2-kernel}.
\end{proof}

The next theorem decomposes each normalized block channel into
a mixture of compositions of $\cloneChannel$ and
$\traceChannel$. 
The decomposition is based on the observation that once restricted to the symmetric subspace, the depolarizing channel's output can be manipulated similarly to binomials.

\begin{theorem}
\label{thm:block-channel-decomposition}
The conditional channels satisfy
    \begin{align}
    \PhiChannel
    =
    \sum_{s=0}^{N}
    \piDist{s}\,
    \cloneChannel
    \circ
    \traceChannel,
    \end{align}
where the probability distribution over $s$, the number of retained qubits, is given by 
\begin{align}\label{eq:explicit-mixing-weights}
    \piDist{s} = {N+1 \choose s+1}\frac{(2\eta)^{s+1}(1-\eta)^{N-s}}{(1+\eta )^{N+1}- (1-\eta)^{N+1}},
\end{align}
and $L_{M\to s}\colon \mathrm{Sym}^M(\mathbb{C}^2)\to \mathrm{Sym}^s(\mathbb{C}^2)$ and $C_{s\to N} : \mathrm{Sym}^s(\mathbb{C}^2)\to \mathrm{Sym}^N(\mathbb{C}^2)$ are respectively the loss channel and the cloning channel.
\end{theorem}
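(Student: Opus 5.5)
By \Cref{prop:exact-first-factorization} we have $\PhiChannel=\Theta_{N,\eta}\circ\traceChannel[N]$. Since $\traceChannel[s][N]\circ\traceChannel[N][M]=\traceChannel[s][M]$, it suffices to prove the identity at the top sector,
\begin{equation*}
P_N\mathcal{D}_\eta^{\otimes N}(X)P_N=\sum_{s=0}^{N}\binom{N+1}{s+1}\eta^s b^{N-s}\,\cloneChannel\bigl(\traceChannel[s][N](X)\bigr)
\end{equation*}
for all $X\in\cL(\Sym^N(\C^2))$. Dividing by $q_N=(a^{N+1}-b^{N+1})/\eta$ then yields the weights $\binom{N+1}{s+1}\eta^{s+1}b^{N-s}/(a^{N+1}-b^{N+1})$. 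Multiplying numerator and denominator by $2^{N+1}$ turns this into exactly \eqref{eq:explicit-mixing-weights}.

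To establish the top-sector identity, I write $\mathcal{D}_\eta=\eta\,\id+b\,\cT$ with $\cT(Y)=\tr(Y)\one_2$, and expand the tensor power as a sum over subsets $S\subseteq[N]$ of positions where the identity acts. The term for $S$ is $\eta^{|S|}b^{N-|S|}$ times the operator obtained by tracing out the qubits in $S^c$ and tensoring in $\one_2$ on those positions. Because $X$ is supported on the symmetric subspace, and $P_N$ absorbs permutations ($P_NW_\pi=W_\pi P_N=P_N$), every subset of size $s$ contributes the same amount after sandwiching with $P_N$. Each such contribution equals $P_N(\tr_{N-s}X\otimes\one_2^{\otimes(N-s)})P_N=\frac{N+1}{s+1}\cloneChannel(\traceChannel[s][N](X))$. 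Collecting terms gives the coefficient
\begin{equation*}
\binom{N}{s}\frac{N+1}{s+1}=\binom{N+1}{s+1}.
\end{equation*}
The main care needed is the permutation step. Conjugating by $W_\pi$ maps the $S$-term to the $\pi(S)$-term, using $W_\pi XW_\pi^\dagger=X$ (since $X=P_NXP_N$). Sandwiching with $P_N$ then removes the $W_\pi$'s, so all subsets of a given size give equal contributions.

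As a consistency check, I will sum the weights: $\sum_s\binom{N+1}{s+1}\eta^{s+1}b^{N-s}=(\eta+b)^{N+1}-b^{N+1}=a^{N+1}-b^{N+1}$. Hence $\sum_s\piDist{s}=1$, matching the trace-preservation of $\Theta_{N,\eta}$ noted after \eqref{eq:top-sector-thermalizer}. Since each $\cloneChannel\circ\traceChannel$ is a channel by \Cref{prop:finite-loss-cloner-channels}, the result is a convex combination of channels. Alternatively, one could verify the identity on the spanning set $\ketbra{u}{v}^{\otimes N}$, but the direct expansion above avoids computing $S_N(A_{u,v})$ explicitly, so I expect no genuine obstacle beyond this bookkeeping.
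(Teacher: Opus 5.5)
Your proposal is correct and follows essentially the same route as the paper. Both arguments reduce to the top sector via $\PhiChannel=\Theta_{N,\eta}\circ\traceChannel[N]$, expand $\mathcal{D}_\eta=\eta\,\mathrm{id}+b\,\cT$, use permutation invariance together with $\binom{N}{s}\frac{N+1}{s+1}=\binom{N+1}{s+1}$, divide by $q_N$, and check normalization with the binomial theorem. Your explicit treatment of the permutation step (conjugating the $S$-term by $W_\pi$ and absorbing $W_\pi$ into $P_N$) just spells out a detail the paper states in one line.
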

\begin{proof}
 We will first
decompose $\Theta_{N,\eta}$ and then make use of  ~\Cref{prop:exact-first-factorization}. Write
$$\mathcal D_\eta=\eta\operatorname{id}+b \cT,$$ where
$b=(1-\eta)/2$ and $ \cT(X)=\tr(X)\id_2$. Write $P_N$ for the projector onto $\Sym^N(\C^2)$. Expanding $\mathcal D_\eta^{\otimes N}$ using this expression for $\cD_\eta$, we obtain 
\begin{align*}
 P_N\mathcal D_\eta^{\otimes N}(X)P_N
 &=\sum_{s=0}^N\binom Ns\eta^s b^{N-s}
   P_N\bigl(\traceChannel[s][N](X)\otimes
                I_2^{\otimes(N-s)}\bigr)P_N\\
 &=\sum_{s=0}^N\binom{N+1}{s+1}\eta^s b^{N-s}
   \cloneChannel \bigl(\traceChannel[s][N](X)\bigr).
\end{align*}
The second equality uses the normalization in~\eqref{eq:18} and
$\binom{N}{s} \frac{N+1}{s+1}=\binom{N+1}{s+1}$. Dividing by $q_N$ gives
\begin{align}
 \Theta_{N,\eta}
 =\sum_{s=0}^N
   \underbrace{\frac{\binom{N+1}{s+1}\eta^s b^{N-s}}{q_N}}
   \cloneChannel \circ \traceChannel[s][N].
   \label{eq:Theta-expansion}
\end{align}
For $\eta>0$, the identity
$q_N=(a^{N+1}-b^{N+1})/\eta$, with $a=(1+\eta)/2$ and $b=(1-\eta)/2$,
shows that the underbracketed coefficients in \eqref{eq:Theta-expansion} are exactly~\eqref{eq:explicit-mixing-weights}.
They are nonnegative, and the binomial theorem gives
$$
 \sum_{s=0}^N\binom{N+1}{s+1}\eta^s b^{N-s}
   =\frac{(b+\eta)^{N+1}-b^{N+1}}{\eta}
   =q_N.
$$
Thus the coefficients sum to one. Each summand is a channel by
Proposition~\ref{prop:finite-loss-cloner-channels}. Finally, pre-compose the decomposition of $\Theta_{N,\eta}$ with
$ \traceChannel[N]$. Since successive partial traces satisfy
$$
\traceChannel[s][N] \circ \traceChannel[N][M]
   = \traceChannel ,
$$
Proposition~\ref{prop:exact-first-factorization} proves the claimed identity.
\end{proof}

\section{Concentration lemmas}
In this section we prove two concentration estimates that concern the block probabilities $\pDist{N}$ defined in \eqref{eq:block-probabilities} and the mixing weights $\pi_s$ defined in \eqref{eq:explicit-mixing-weights}.
We start with the $\pDist{N}$.
\begin{lemma}
\label{lem:block-probability-concentration}
 For any $\delta_1 >0$ and every $M$,
    \begin{align}
        \mathbb{P}_{\pDist{N}}\left(\abs{\frac{N}{M} - \eta}>\delta_1\right) \leq \frac{1+\eta}{\eta}\;\exp\left(-\frac{M\delta_1^2}{2}\right).
    \end{align}
\end{lemma}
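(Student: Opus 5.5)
The plan is to dominate $\pDist{N}$ pointwise by a multiple of a binomial probability and then apply Hoeffding's inequality. Start from the form in Lemma~\ref{lem: p_MN-prob}, with $a=\frac{1+\eta}{2}$, $b=\frac{1-\eta}{2}$, $\lambda_1=\frac{M+N}{2}$ and $\lambda_2=\frac{M-N}{2}$:
\begin{equation*}
\pDist{N}=\frac{f_\lambda}{\eta}\bigl(a^{\lambda_1+1}b^{\lambda_2}-a^{\lambda_2}b^{\lambda_1+1}\bigr)\le \frac{f_\lambda}{\eta}\,a^{\lambda_1+1}b^{\lambda_2}.
\end{equation*}
This drops a nonnegative subtracted term. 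Next use $f_\lambda=\binom{M}{\lambda_2}-\binom{M}{\lambda_2-1}\le\binom{M}{\lambda_2}=\binom{M}{\lambda_1}$. Together these give
\begin{equation*}
\pDist{N}\le \frac{a}{\eta}\binom{M}{\lambda_1}a^{\lambda_1}b^{M-\lambda_1}=\frac{1+\eta}{2\eta}\,p_{\mathrm{Bin}}\mleft(X_M=\tfrac{M+N}{2}\mright).
\end{equation*}
Here $X_M\sim\mathrm{Bin}(M,a)$ is the sum of $M$ IID Bernoulli$(a)$ variables. The map $N\mapsto \frac{M+N}{2}$ is injective on the allowed sectors, so summing over the $N$ with $|N-\eta M|>\delta_1 M$ gives
\begin{equation*}
\mathbb{P}_{\pDist{N}}\mleft(\abs{\tfrac{N}{M}-\eta}>\delta_1\mright)\le \frac{1+\eta}{2\eta}\,\mathbb{P}\mleft(\abs{X_M-aM}>\tfrac{\delta_1 M}{2}\mright).
\end{equation*}
This step uses $\frac{M+N}{2}-aM=\frac{N-\eta M}{2}$. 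The $X_M$ values not hit by any $N$, i.e.\ those with $X_M<M/2$, only add nonnegative mass to the right-hand side, so they cause no problem.

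Finally apply the two-sided Hoeffding bound to $M$ IID variables taking values in $[0,1]$ with deviation $t=\delta_1 M/2$:
\begin{equation*}
\mathbb{P}\bigl(\abs{X_M-aM}>t\bigr)\le 2\exp\mleft(-\frac{2t^2}{M}\mright)=2\exp\mleft(-\frac{M\delta_1^2}{2}\mright).
\end{equation*}
The factor $2$ combines with $\frac{1+\eta}{2\eta}$ to give the stated constant $\frac{1+\eta}{\eta}$.

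There is no real obstacle; the only care needed is in the bookkeeping. One must check that the index matching between $\lambda_1$ and the binomial argument is right, which uses $a^{\lambda_1+1}=a\cdot a^{\lambda_1}$ and $\lambda_2=M-\lambda_1$. One must also assume $\eta>0$, which the stated bound implicitly requires. At $\eta=1$ we have $b=0$, so only $N=M$ survives and the bound holds trivially.
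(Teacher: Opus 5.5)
Your proposal is correct and matches the paper's proof. The paper also bounds $p_N\le\frac{1+\eta}{2\eta}\,p_{\mathrm{Bin}}\bigl(X_M=\frac{M+N}{2}\bigr)$ by discarding the same two pieces, written there as multiplicative factors: $\frac{2N+2}{M+N+2}\le 1$ (your $f_\lambda\le\binom{M}{\lambda_2}$) and $1-\bigl(\frac{1-\eta}{1+\eta}\bigr)^{N+1}\le 1$ (your dropped subtracted term). It then applies the two-sided Hoeffding bound with deviation $M\delta_1/2$, as you do.
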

\begin{proof}
    We have \begin{align}
        \pDist{N} &= \left[{M \choose \frac{M-N}{2}} - {M \choose \frac{M-N}{2}-1}\right] \left(\frac{1-\eta^2}{4}\right)^{\frac{M-N}{2}} \frac{(1+\eta)^{N+1}-(1-\eta)^{N+1}}{2^{N+1}\eta}\nonumber\\
        &= {M \choose \frac{M-N}{2}}  \left(\frac{1+\eta}{2}\right)^{\frac{M-N}{2}} \left(\frac{1-\eta}{2}\right)^{\frac{M-N}{2}} \left(\frac{1+\eta}{2}\right)^{N} \frac{1+\eta}{2\eta} \left[1-\left(\frac{1-\eta}{1+\eta}\right)^{N+1}\right] \frac{2N+2}{M+N+2}\nonumber\\
        &= {M \choose \frac{M-N}{2}}  \left(\frac{1+\eta}{2}\right)^{\frac{M+N}{2}} \left(\frac{1-\eta}{2}\right)^{\frac{M-N}{2}}  \frac{1+\eta}{2\eta} \left[1-\left(\frac{1-\eta}{1+\eta}\right)^{N+1}\right] \frac{2N+2}{M+N+2}\nonumber\\
        &\leq {M \choose \frac{M-N}{2}}  \left(\frac{1+\eta}{2}\right)^{\frac{M+N}{2}} \left(\frac{1-\eta}{2}\right)^{\frac{M-N}{2}} \frac{1+\eta}{2\eta}\nonumber\\
        & = \frac{1+\eta}{2\eta}\;p_{\mathrm{Bin}}\left(X_M = \frac{M+N}{2}\right). 
    \end{align}
    In the last line we have introduced the random variable $X_M = \sum_{i=1}^M x_i$, where the $x_i$'s are IID random variables taking values $1$ with probability $(1+\eta)/2$ and $0$ with probability $(1-\eta)/2$. Therefore, for any $\delta_1>0$ and every $M$, 
    \begin{align}
        \mathbb{P}_{\pDist{N}}\left(\abs{\frac{N}{M} - \eta}>\delta_1\right) 
        &= \mathbb{P}_{\pDist{N}}\left(\abs{N - \eta M}>\delta_1 M\right)\nonumber\\
        &= \sum_{N \notin \bigl[(  \eta-\delta_1)M, (\eta+\delta_1)M \bigr]} \pDist{N} \nonumber\\
        &\leq \frac{1+\eta}{2\eta}\sum_{X_M \notin \bigl[\frac{(1+\eta-\delta_1)M}{2},\frac{(1+\eta+\delta_1)M}{2}\bigr]} p_{\mathrm{Bin}}(X_M)\nonumber\\
        &= \frac{1+\eta}{2\eta}\; \mathbb{P}_{\mathrm{Bin}}\left(\abs{X_M - \frac{1+\eta}{2}M}> \frac{M}{2}\delta_1\right)\nonumber\\
        &\leq \frac{1+\eta}{\eta} \exp\left(-\frac{2 M^2\delta_1^2}{4M}\right)\\
        &= \frac{1+\eta}{\eta} \exp\left(-\frac{ M\delta_1^2}{2}\right),
    \end{align}
    which concludes the proof.
\end{proof}

We now prove a concentration result for the $\piDist{s}$.
\begin{lemma}
\label{lem:mixing-weight-concentration}
   For any $1\geq \delta_2 >0$, and $c=e^{4\frac{1-\eta}{1+\eta}}$, we have that for every $N>\frac{1}{\delta_2}\frac{1-\eta}{1+\eta}$, 
    \begin{align}\label{eq:con.in.s}
\mathbb{P}_{\piDist{s}}\left(\abs{\frac{s}{N} - \frac{2\eta}{1+\eta}}>\delta_2\right) \leq \frac{1+\eta}{\eta}c\;\exp\left(-\frac{2N^2\delta_2^2}{(N+1)}\right).
    \end{align}
\end{lemma}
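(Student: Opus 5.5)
The plan mirrors the proof of Lemma~\ref{lem:block-probability-concentration}. First I dominate $\piDist{s}$ pointwise by a binomial probability mass function. Then I translate the deviation event for $s/N$ into a deviation event for that binomial variable, and finish with Hoeffding's inequality.

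\emph{Step 1: binomial domination.} Put $p:=\frac{2\eta}{1+\eta}$, so that $1-p=\frac{1-\eta}{1+\eta}=r$. Factor $(1+\eta)^{N+1}$ out of the denominator of \eqref{eq:explicit-mixing-weights}:
\begin{align*}
\piDist{s}=\binom{N+1}{s+1}p^{s+1}(1-p)^{N-s}\,\frac{1}{1-r^{N+1}} .
\end{align*}
Since $0\le r<1$, we have $1-r^{N+1}\ge 1-r=p$. Hence
\begin{align*}
\piDist{s}\le \frac{1+\eta}{2\eta}\,p_{\mathrm{Bin}}(Y_{N+1}=s+1),
\end{align*}
where $Y_{N+1}\sim\mathrm{Bin}(N+1,p)$. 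This is the bound quoted in the main-text sketch. Summing over the $s$ in the deviation event then gives
\begin{align*}
\mathbb{P}_{\piDist{s}}\left(\abs{\tfrac{s}{N}-p}>\delta_2\right)\le \frac{1+\eta}{2\eta}\,\mathbb{P}\bigl(\abs{s-pN}>\delta_2 N\bigr),
\end{align*}
with $s=Y_{N+1}-1$ on the right-hand side.

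\emph{Step 2: recentering.} The mean of $Y_{N+1}$ is $p(N+1)$, not $pN+1$. We have $Y_{N+1}-p(N+1)=(s-pN)+(1-p)=(s-pN)+r$. By the triangle inequality, $\abs{s-pN}>\delta_2N$ implies $\abs{Y_{N+1}-p(N+1)}>\delta_2N-r$. The hypothesis $N>\frac{1}{\delta_2}\frac{1-\eta}{1+\eta}$ is exactly what makes $t:=\delta_2N-r$ positive, so Hoeffding applies. Hoeffding's inequality for a sum of $N+1$ Bernoulli variables gives the bound $2\exp\left(-2t^2/(N+1)\right)$.

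\emph{Step 3: cleaning the exponent.} This is the only step needing care, since the constant $c$ must absorb the shift by $r$. Expand $t^2\ge \delta_2^2N^2-2\delta_2Nr$, dropping the nonnegative $r^2$. Then
\begin{align*}
-\frac{2t^2}{N+1}\le -\frac{2N^2\delta_2^2}{N+1}+\frac{4\delta_2 N r}{N+1}\le -\frac{2N^2\delta_2^2}{N+1}+4r,
\end{align*}
using $\delta_2\le1$ and $N/(N+1)<1$. Since $e^{4r}=c$, multiplying the prefactors $\frac{1+\eta}{2\eta}\cdot 2\cdot c$ yields exactly \eqref{eq:con.in.s}.
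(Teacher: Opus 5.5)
Your proof is correct and follows essentially the same route as the paper. Both use the same domination $\pi_s\le\frac{1+\eta}{2\eta}\,p_{\mathrm{Bin}}(Y_{N+1}=s+1)$, Hoeffding's inequality for $N+1$ Bernoulli trials, and the same absorption of the cross term into $c$ after dropping the $r^2$ contribution. The paper does the recentering by shrinking the shifted interval to a symmetric one of half-width $(N+1)\delta'_N=\delta_2N-\frac{1-\eta}{1+\eta}$, which is exactly the $t$ you obtain from the reverse triangle inequality.
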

\begin{proof}
    We have
    \begin{align}
        \piDist{s} &= {N+1 \choose s+1} \frac{(2\eta)^{s+1}(1-\eta)^{N-s}}{(1+\eta)^{N+1}-(1-\eta)^{N+1}}\nonumber\\
        & = {N+1 \choose s+1} \left(\frac{2\eta}{1+\eta}\right)^{s+1} \left(\frac{1-\eta}{1+\eta}\right)^{(N+1)-(s+1)}\frac{1}{1-\left(\frac{1-\eta}{1+\eta}\right)^{N+1}}\nonumber\\
        & \leq {N+1 \choose s+1} \left(\frac{2\eta}{1+\eta}\right)^{s+1} \left(\frac{1-\eta}{1+\eta}\right)^{(N+1)-(s+1)} \frac{1}{1-\frac{1-\eta}{1+\eta}}\nonumber\\
        & = \frac{1+\eta}{2\eta} p_{\mathrm{Bin}}(Y_{N+1} = s+1).
    \end{align}
    In the last line we have introduced the random variable $Y_{N+1} = \sum_{i=1}^{N+1} y_i$, where the $y_i$'s are IID random variables taking values $1$ with probability $2\eta/(1+\eta)$ and $0$ with probability $(1-\eta)/(1+\eta)$. Therefore, for any $\delta_2>0$ and every $N$,
    \begin{align}
        \mathbb{P}_{\piDist{s}} \left(\abs{\frac{s}{N}-\frac{2\eta}{1+\eta}}> \delta_2\right) &= \mathbb{P}_{\piDist{s}} \left(\abs{s-\frac{2\eta}{1+\eta}N}> N\delta_2\right)\nonumber\\
        &= \sum_{s \notin \bigl[\left(\frac{2\eta}{1+\eta}-\delta_2\right)N,\left(\frac{2\eta}{1+\eta}+\delta_2\right)N\bigr] } \piDist{s}\nonumber.
    \end{align}
    The intervals for $s$ translate to the following intervals for $Y_{N+1}$. The lower bound is given by
    \begin{align}
        \left(\frac{2\eta}{1+\eta}-\delta_2\right)N + 1 &= \left(\frac{2\eta}{1+\eta}-\frac{\delta_2N}{N+1} + \frac{1}{N+1} -\frac{1}{N+1}\frac{2\eta}{1+\eta}\right)(N+1) \nonumber\\
        &= \left(\frac{2\eta}{1+\eta} - \frac{\delta_2N}{N+1} +\frac{1}{N+1}\frac{1-\eta}{1+\eta}\right)(N+1),
    \end{align}
    while the upper bound is given by
    \begin{align}
        \left(\frac{2\eta}{1+\eta}+\delta_2\right)N + 1 &= \left(\frac{2\eta}{1+\eta} + \frac{\delta_2N}{N+1} +\frac{1}{N+1}\frac{1-\eta}{1+\eta}\right)(N+1).
    \end{align}
    Now the total probability that $Y_{N+1}$ is outside the interval $$\left[ \left(\frac{2\eta}{1+\eta} - \frac{\delta_2N}{N+1} +\frac{1}{N+1}\frac{1-\eta}{1+\eta}\right)(N+1),\left(\frac{2\eta}{1+\eta} + \frac{\delta_2N}{N+1} +\frac{1}{N+1}\frac{1-\eta}{1+\eta}\right)(N+1) \right]$$
    is less than or equal to the total probability that $Y_{N+1}$ is outside the interval $$\left[ \left(\frac{2\eta}{1+\eta} - \frac{\delta_2N}{N+1} +\frac{1}{N+1}\frac{1-\eta}{1+\eta}\right)(N+1),\left(\frac{2\eta}{1+\eta} + \frac{\delta_2N}{N+1} -\frac{1}{N+1}\frac{1-\eta}{1+\eta}\right)(N+1) \right].$$
    Let us denote, $\delta'_N = \frac{\delta_2N}{N+1} -\frac{1}{N+1}\frac{1-\eta}{1+\eta}>0$. Then we have
    \begin{align}
        &\mathbb{P}_{\piDist{s} } \left(\abs{\frac{s}{N}-\frac{2\eta}{1+\eta}}> \delta_2\right) \nonumber\\
        &\leq \frac{1+\eta}{2\eta} \sum_{Y_{N+1}\notin \bigl[\left(\frac{2\eta}{1+\eta} -\delta'_N\right)(N+1),\left(\frac{2\eta}{1+\eta} +\delta'_N\right)(N+1)\bigr]} p_{\mathrm{Bin}}(Y_{N+1}) \nonumber\\
        &= \frac{1+\eta}{2\eta}\; \mathbb{P}_{\mathrm{Bin}}\left(\abs{Y_{N+1}- \frac{2\eta}{1+\eta}(N+1)}> (N+1)\delta'_N\right) \nonumber\\
        &\leq \frac{1+\eta}{\eta} \exp\left(-2(N+1)\delta_N'^2\right)\nonumber\\
        &= \frac{1+\eta}{\eta} \exp\left(-\frac{2N^2\delta^2_2}{N+1}\right)\exp\left(\frac{4N\delta_2}{N+1}\frac{1-\eta}{1+\eta}\right)\exp\left(-\frac{2}{N+1}\left(\frac{1-\eta}{1+\eta}\right)^2\right) \nonumber\\
        &\leq \frac{1+\eta}{\eta} \exp\left(-\frac{2N^2\delta^2_2}{N+1}\right)\exp\left(\frac{4N\delta_2}{N+1}\frac{1-\eta}{1+\eta}\right) \nonumber\\
        &\leq \frac{1+\eta}{\eta}e^{4\delta_2\frac{1-\eta}{1+\eta}} \exp\left(-\frac{2N^2\delta^2_2}{N+1}\right).
    \end{align}
    As we have $0<\delta_2\leq 1$ and $c = e^{4\frac{1-\eta}{1+\eta}} $, we finally obtain
    \begin{align}
        &\;\mathbb{P}_{\piDist{s} } \left(\abs{\frac{s}{N}-\frac{2\eta}{1+\eta}}> \delta_2\right)\leq \frac{1+\eta}{\eta}c\; \exp\left(-\frac{2N^2\delta^2_2}{N+1}\right),
    \end{align}
    proving the claim.
\end{proof}

\section{Gaussian limit}
\subsection{Gaussian limit of the random loss-cloning channels}

In this section, we show convergence of $\PhiChannel$ to the Gaussian channels $\Gain_{\frac{1 + \eta}{2\eta}} \circ \Loss_{\frac{2\eta}{1 + \eta} \frac{N}{M}}$. As we talk about linear maps between infinite dimensional vector spaces, we work with the following norm.

\begin{definition}[Diamond norm with a cutoff, $K$] The diamond norm of the restriction of a linear map $\Phi$ to the first $K$ excitations is denoted by $\norm{\Phi}_{\diamond,K}\coloneqq \norm{\Phi\vert_{\mathcal{F}_K}}_{\diamond}$, where ${ \mathcal{F}_K =\mathrm{Span}\{\ket{0},\dots, \ket{K}\} }.$ 
\end{definition}
As the input space of this restricted channel is finite-dimensional, this constitutes a valid norm~\cite{Paulsen_2003}.

\begin{theorem}\label{thm:diamond-dist-bound} For a fixed excitation cutoff, $K\leq M$, and a fixed ratio $q= N/M$ with $0<q<1$,
    \begin{align}
        \norm{\PhiChannel - \mathcal{A}_{\frac{1+\eta}{2\eta}}\circ \mathcal{L}_{\frac{2\eta}{1+\eta}\frac{N}{M}}}_{\diamond, K} \leq \varepsilon_{M,K,q},
    \end{align}
    where $\varepsilon_{M,K,q}\to 0$ as $M\to \infty$.
\end{theorem}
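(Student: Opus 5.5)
The proof starts from the exact decomposition of \Cref{thm:block-channel-decomposition}, $\PhiChannel=\sum_s \piDist{s}\,\cloneChannel\circ\traceChannel$, and compares it with the Gaussian target $\Gain_G\circ\Loss_T$, where $G=\frac{1+\eta}{2\eta}$ and $T=\frac{2\eta}{1+\eta}q$. Throughout, Dicke states $\dicke{n}{k}$ are identified with Fock states $\ket{k}$. Since $\sum_s\piDist{s}=1$, the triangle inequality gives
\begin{align*}
\KDiaNorm{\PhiChannel-\Gain_G\circ\Loss_T}\le \sum_{s}\piDist{s}\,\KDiaNorm{\cloneChannel\circ\traceChannel-\Gain_G\circ\Loss_T}.
\end{align*}
I then split the sum over $s$ using a window $|s/N-\frac{2\eta}{1+\eta}|\le\delta_2$, with $\delta_2=\delta_2(M)\to0$ slowly, for instance $\delta_2=\sqrt{\log N/N}$ with $N=qM$.

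\emph{Atypical $s$.} Each term in the sum is at most $2$. By \Cref{lem:mixing-weight-concentration}, the atypical part therefore contributes at most $2\frac{1+\eta}{\eta}c\,e^{-2N^2\delta_2^2/(N+1)}$, which tends to $0$ because $N=qM\to\infty$.

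\emph{Typical $s$.} For these $s$ I insert intermediate Gaussian channels and write the term as a sum of three differences:
\begin{align*}
\cloneChannel\circ\traceChannel-\Gain_G\circ\Loss_T
&=\bigl(\cloneChannel-\Gain_{\frac{N+1}{s+1}}\bigr)\circ\traceChannel
+\Gain_{\frac{N+1}{s+1}}\circ\bigl(\traceChannel-\Loss_{s/M}\bigr)\\
&\quad+\bigl(\Gain_{\frac{N+1}{s+1}}\circ\Loss_{s/M}-\Gain_G\circ\Loss_T\bigr).
\end{align*}
All channels here are diamond-norm contractive. Both $\traceChannel$ and $\Loss_{s/M}$ map $\mathcal{F}_K$ into $\mathcal{F}_K$, since neither increases the excitation number. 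So it suffices to bound each of the three differences in $\KDiaNorm{\cdot}$.

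\emph{Loss part.} Writing $\dicke{M}{k}=\sum_u \sqrt{\binom{s}{k-u}\binom{M-s}{u}/\binom{M}{k}}\,\dicke{s}{k-u}\otimes\dicke{M-s}{u}$, the channel $\traceChannel$ has Kraus operators $\tilde K^u=\sum_k\sqrt{\mathrm{HyperGeo}}\,\ketbra{k-u}{k}$, indexed by the number $u$ of lost excitations. The corresponding Kraus operators of $\Loss_{s/M}$ are $K^u_{s/M}$ with binomial amplitudes. On $\mathcal{F}_K$ only $u\le K$ occurs, so both channels have finitely many Kraus operators, each a $(K+1)\times(K+1)$ matrix. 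Using a common environment labeled by $u$, I will bound the diamond norm through the Stinespring estimate $\|\mathcal{N}-\mathcal{N}'\|_\diamond\le 2\|V-V'\|$. This is controlled by $\max_{k\le K}\sum_u|\sqrt{h_{k,u}}-\sqrt{b_{k,u}}|^2$, where $h_{k,u}$ and $b_{k,u}$ are the hypergeometric and binomial weights. The standard hypergeometric-to-binomial estimate (ratio $1+O(K^2/s)+O(K^2/(M-s))$) makes this $O(K^2/M)$, because $s$ and $M-s$ are both of order $M$ when $s$ is typical.

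\emph{Cloning part.} The Kraus operators of $\cloneChannel$ are $\sqrt{\tfrac{s+1}{N+1}}P_N(\one\otimes\ket{x})$, which I group by $t=\mathrm{wt}(x)$, the number of added excitations. This gives $B_t\dicke{s}{k}=\sqrt{w_{k,t}}\,\dicke{N}{k+t}$ with negative-hypergeometric weights $w_{k,t}$. These are to be compared with the negative-binomial weights of $A^t_{(N+1)/(s+1)}$.

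This is the step I expect to be the main obstacle, because $t$ is not bounded by $K$. I would truncate at a level $t_0=t_0(M)\to\infty$ slowly. Both weight families are probability distributions in $t$ for each fixed $k$, by trace preservation. The tails $t>t_0$ therefore contribute at most twice the tail mass, and I would bound that mass uniformly in $k\le K$ via a geometric tail bound for the negative binomial with success probability near $\frac{2\eta}{1+\eta}$, bounded away from $0$, together with a ratio comparison for the negative hypergeometric. For $t\le t_0$, the pointwise ratio $w_{k,t}/\mathrm{NegBinom}_{k,t}=1+O((K+t_0)^2/(N-s))$ handles the remaining terms.

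\emph{Parameter part.} Finally, $\Gain_{\frac{N+1}{s+1}}\circ\Loss_{s/M}$ is compared with $\Gain_G\circ\Loss_T$. Here $\frac{N+1}{s+1}\to G$ and $\frac sM\to T$ uniformly on the typical window, with errors $O(\delta_2+1/N)$. I would use continuity of the Fock-basis Kraus amplitudes in the parameters on $\mathcal{F}_K$, with the same tail truncation in $t$ for the amplifier.

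Collecting the four contributions gives an explicit bound $\varepsilon_{M,K,q}\to0$ as $M\to\infty$ for fixed $K$ and $q$.
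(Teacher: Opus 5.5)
Your proposal is correct, and its skeleton matches the paper's proof. Both use the mixture over $s$ from Theorem~\ref{thm:block-channel-decomposition}, the triangle inequality, and the typical/atypical split via Lemma~\ref{lem:mixing-weight-concentration} with $\delta_2=\sqrt{\log N/N}$. Both use Stinespring dilations with a common environment labelled by lost/added excitations, and compare (negative) hypergeometric with (negative) binomial amplitudes through the intermediate parameters $s/M$ and $(s+1)/(N+1)$.

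The difference is in the key estimate. The paper bounds each distance by $2\sqrt{1-J^2}$ (Fuchs--van de Graaf) and lower-bounds the Bhattacharyya coefficient $J$ by $\exp(-\tfrac12 D(\alpha\|\beta))$ via Jensen. It then splits $D(\alpha\|\beta)=D(\alpha\|\gamma)+\mathbb{E}_\alpha\ln(\gamma/\beta)$. The first term is controlled by Stam's KL bounds for sampling without versus with replacement. The second term absorbs the parameter mismatch, so no separate Gaussian-versus-Gaussian comparison is needed. Because Stam's negative-hypergeometric bound holds on the full support, the paper never has to handle the unbounded number $t$ of added excitations, and it gets closed-form error terms directly.

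You instead use the exact identity $\|V-V'\|^2=\max_{k\le K}\sum_u(\sqrt{h_{k,u}}-\sqrt{b_{k,u}})^2$. This holds because each Kraus difference maps $|k\rangle$ to a single Fock state. You combine it with elementary pointwise likelihood ratios, a truncation $t_0\to\infty$ with $(K+t_0)^2/N\to 0$ (e.g.\ $t_0=N^{1/4}$), and an explicit third term for the parameter shift. Your route is self-contained and needs no external citation, but it requires more bookkeeping and gives a less explicit rate.

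Two simplifications are available to you. First, you need not bound the negative-hypergeometric tail separately. Since $\sum_t(\sqrt{w}-\sqrt{b})^2=2-2\sum_t\sqrt{wb}$ and $\sum_{t\le t_0}\sqrt{wb}\ge\sqrt{1-\epsilon}\sum_{t\le t_0}b$, the lower likelihood-ratio bound plus the negative-binomial tail already suffice. Second, the parameter part needs no truncation, because the Bhattacharyya coefficients have closed forms:
\begin{itemize}
\item $(\sqrt{pp'}+\sqrt{(1-p)(1-p')})^k$ for $\mathrm{Binom}(k,\cdot)$, which the paper uses in Lemma~\ref{lem:typical-gaussian-channels-convergence};
\item $\bigl(\sqrt{pp'}/(1-\sqrt{(1-p)(1-p')})\bigr)^{k+1}$ for $\mathrm{NegBinom}(k+1,\cdot)$.
\end{itemize}
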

The remainder of this section is dedicated to proving this theorem.
First, using the decomposition in Theorem~\ref{thm:block-channel-decomposition}, we can write
\begin{align}
    \PhiChannel - \mathcal{A}_{\frac{1+\eta}{2\eta}}\circ \mathcal{L}_{\frac{2\eta}{1+\eta}\frac{N}{M}} = \sum_{s=0}^N \piDist{s} \left(\cloneChannel
    \circ
    \traceChannel- \mathcal{A}_{\frac{1+\eta}{2\eta}}\circ \mathcal{L}_{\frac{2\eta}{1+\eta}\frac{N}{M}}\right).
\end{align}
Therefore, using the triangle inequality,
\begin{align}\label{eq:tri-diamond}
    &\;\norm{\PhiChannel - \mathcal{A}_{\frac{1+\eta}{2\eta}}\circ \mathcal{L}_{\frac{2\eta}{1+\eta}\frac{N}{M}}}_{\diamond, K}
    \leq \sum_{s=0}^N \piDist{s} \norm{\cloneChannel
    \circ
    \traceChannel- \mathcal{A}_{\frac{1+\eta}{2\eta}}\circ \mathcal{L}_{\frac{2\eta}{1+\eta}\frac{N}{M}}}_{\diamond, K}. 
\end{align}

As we put a cutoff at the excitation $K$, we are in fact calculating the diamond distance between the channels that we obtain by restricting $\cloneChannel \circ \traceChannel$ and $\mathcal{A}_{G}\circ \mathcal{L}_T$ to the first $K$ excitation levels. Denoting these channels by $\cloneChannel \circ \traceChannel \vert_{K}$ and $\mathcal{A}_{G}\circ \mathcal{L}_T\vert_K$, we can write 
\begin{align} 
    \norm{ \big. \cloneChannel\circ \traceChannel - \mathcal{A}_{G}\circ \mathcal{L}_T}_{\diamond,K} = \norm{\big. \cloneChannel\circ \traceChannel\vert_{K} - \mathcal{A}_{G}\circ \mathcal{L}_T\vert_{K}}_{\diamond}.
\end{align}

Furthermore, for an input state $\rho$, supported within the excitation cutoff $K$, both $\mathcal{L}_T(\rho)$ and $\traceChannel(\rho)$ are supported within the excitation cutoff $K$. Hence, we can write $\cloneChannel \circ \traceChannel \vert_{K} = \cloneChannel \vert_{K}\circ L_{M\to s}\vert_{K}$ and $\mathcal{A}_{G}\circ \mathcal{L}_T\vert_{K} = \mathcal{A}_{G}\vert_{K}\circ \mathcal{L}_T\vert_{K}$. Using the diamond norm inequality for the composition of channels~\cite{john-watrous-book}, we have
\begin{align}
    \;\norm{ \cloneChannel\circ \traceChannel - \mathcal{A}_{G}\circ \mathcal{L}_T}_{\diamond,K} 
    =&\;\norm{\cloneChannel \vert_K\circ \traceChannel \vert_{K} - \mathcal{A}_{G}\vert_{K}\circ \mathcal{L}_T\vert_{K}}_{\diamond}\nonumber\\
    \leq &\; \norm{\cloneChannel\vert_K -  \mathcal{A}_{G}\vert_{K}}_{\diamond} + \norm{\traceChannel \vert_{K}-\mathcal{L}_T\vert_{K}}_{\diamond}\nonumber\\
    = &\; \KDiaNorm{\cloneChannel - \Gain_G} + \KDiaNorm{\traceChannel - \Loss_T}.\label{eq:decomp-into-loss-and-gain}
\end{align}
We bound each of these diamond distances separately in Lemmas \ref{lem:clone-channel-gain-channel-diamond-norm-bound} and \ref{lem:trace-channel-loss-channel-diamond-norm-bound}, respectively. The bounds can be understood simply in the following manner. 
First recall that the Kraus operators of these channels are given by
\begin{align}
    \cloneChannel &\sim \left\{ C^{t}_{s\to N} = \sum_{i = 0}^{s} \sqrt{ \frac{s + 1}{N + 1} \frac{\binom{s}{i} \binom{N - s}{t}}{\binom{N}{i + t} }} \ketbra{i + t}{i} \right\}_{t = 0, \dots, N - s} \\
    \Gain_G &\sim \left\{ A_{G}^t = \sum_{i= 0}^\infty \sqrt{\binom{i+t}{t} \frac{(G - 1)^t}{G^{i + t + 1}} } \ketbra{i+t}{i} \right\}_{t = 0, \dots, \infty} \\
    \traceChannel &\sim \left\{ L_{M \to s}^{u} = \sum_{i = u}^{M} \sqrt{\frac{ \binom{s}{i - u} \binom{M - s}{u} }{\binom{M}{i}}} \ketbra{i - u}{i} \right\}_{u = 0, \dots, M - s} \\
    \Loss_T &\sim \left\{ K_T^u = \sum_{i= u}^\infty \sqrt{\binom{i}{u} T^{i - u} (1-T)^u } \ketbra{i - u}{i} \right\}_{u = 0, \dots, \infty} \label{eq:loss-channel-kraus-operators}
\end{align}
The coefficients of the Kraus operators in the excitation/Fock basis for the loss channel and the Gaussian attenuation channel respectively resemble square-roots of a hypergeometric distribution and a binomial distribution. Similarly, those of the cloning channel and the Gaussian amplification channel respectively resemble square-roots of a negative hypergeometric distribution and a negative binomial distribution. Within each pair of the distributions, convergence occurs under fixed sample size and number of successes respectively~\cite{Kinney2009-oq}, as the population size grows. Furthermore as $N$ grows, within the typical interval for $s/N$ around $\frac{2\eta}{1+\eta}$, we see a decaying upper bound for each of the two diamond distances for $T = \frac{2\eta}{1+\eta}q$ and $G = \frac{1+\eta}{2\eta}$.
\begin{lemma}
\label{lem:trace-channel-loss-channel-diamond-norm-bound}
    Given $K \leq M - s$, 
    \begin{equation}
        \KDiaNorm{\traceChannel - \Loss_T} \leq 2\sqrt{\frac{\K (\K-1)}{2(M - 1)(M - \K + 1)} + \frac{\K}{T(1-T)} \left( \frac{s}{M} - T\right)^2}
        \label{eqn:loss_bound}
    \end{equation}
    for some $ 0 \leq \K \leq K$.
\end{lemma}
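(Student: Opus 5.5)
The plan is to compare the two channels through their Stinespring isometries. Both $\traceChannel$ and $\Loss_T$ have Kraus operators indexed by the number $u$ of lost excitations, and each maps $\ket{i}$ to a multiple of $\ket{i-u}$. Restricted to $\mathcal{F}_K$, I take
\begin{align*}
V_L\ket{i}&=\sum_u \sqrt{h_i(u)}\,\ket{i-u}\ket{u},\\
V_T\ket{i}&=\sum_u \sqrt{b_i(u)}\,\ket{i-u}\ket{u}.
\end{align*}
Here $h_i(u)=\binom{s}{i-u}\binom{M-s}{u}/\binom{M}{i}$ is the hypergeometric law: $i$ draws from a population of $M$ containing $M-s$ ``lost'' items. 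The other law, $b_i(u)=\binom{i}{u}T^{i-u}(1-T)^u$, is $\mathrm{Bin}(i,1-T)$. The hypothesis $K\le M-s$ guarantees that every $u\le i\le K$ is admissible, so the Kraus expansions match index by index.

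I will use the standard estimate $\norm{V\rho V^\dagger-W\rho W^\dagger}_1\le 2\norm{V-W}_{\infty}$, applied with $V,W$ tensored with an identity on the reference system. This gives $\KDiaNorm{\traceChannel-\Loss_T}\le 2\norm{(V_L-V_T)|_{\mathcal{F}_K}}_\infty$. The vectors $\ket{i-u}\ket{u}$ are mutually orthogonal for distinct pairs $(i,u)$, so $(V_L-V_T)\ket{i}$ for different $i$ are orthogonal. Hence the operator norm is the maximum over $i\le K$ of
\[
\norm{(V_L-V_T)\ket{i}}^2=\sum_u\bigl(\sqrt{h_i(u)}-\sqrt{b_i(u)}\bigr)^2,
\]
which is twice the squared Hellinger distance. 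I let $\tilde K$ be the maximizing $i$; this is the origin of the unspecified $0\le\tilde K\le K$ in the statement. By the standard inequality $\sum_u(\sqrt{h}-\sqrt{b})^2\le D(h\|b)$, with the divergence in nats, it then suffices to show
\[
D(h_{\tilde K}\|b_{\tilde K})\le \frac{\tilde K(\tilde K-1)}{2(M-1)(M-\tilde K+1)}+\frac{\tilde K}{T(1-T)}\Bigl(\frac sM-T\Bigr)^2 .
\]

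To get this bound I insert the intermediate binomial $B_p=\mathrm{Bin}(\tilde K,p)$ with $p=1-s/M$, which has the same mean as the hypergeometric. Then
\[
D(h\|b)=D(h\|B_p)+\sum_u h(u)\log\frac{B_p(u)}{b(u)}.
\]
The log-likelihood ratio of two binomials with the same number of trials is affine in $u$. Since $h$ and $B_p$ have equal means, the cross term equals $\E_{B_p}\log(B_p/b)=D(B_p\|b)$ exactly. So the divergence splits additively.

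For the second piece, additivity over the $\tilde K$ IID trials and $\mathrm{KL}\le\chi^2$ for Bernoulli laws give
\[
D(B_p\|b)\le \tilde K\frac{(p-(1-T))^2}{T(1-T)}=\frac{\tilde K}{T(1-T)}\Bigl(\frac sM-T\Bigr)^2 .
\]
For the first piece I invoke Stam's bound on the Kullback--Leibler divergence between sampling without and with replacement:
\[
D(\text{Hyp}\,\|\,\text{Bin})\le \frac{\tilde K(\tilde K-1)}{2(M-1)(M-\tilde K+1)}.
\]
Combining the pieces yields the claimed estimate.

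The main obstacle is this hypergeometric-versus-binomial step. Stam's inequality is exactly of the required form, but should a self-contained argument be needed, I would try writing $\log(h/B_p)$ as a sum of $\log(1-j/M)$-type ratios and bounding its expectation directly, using $\log x\le x-1$ and the second moments of the hypergeometric law. I would also double-check that the logarithm base is consistent between the Hellinger--KL step and the final constant.
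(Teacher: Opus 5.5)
Your proposal is correct and follows essentially the paper's route. Both use Stinespring dilations with matched Kraus indices and reduce to the single Fock level $\tilde K$ with the smallest Bhattacharyya coefficient. Both then pass to the relative entropy, insert the mean-matched intermediate binomial (so the cross term is exactly $\tilde K$ times a binary relative entropy), and finish with Stam's bound and the $\chi^2$ bound.

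The only difference is the first step. You use $\KDiaNorm{\traceChannel-\Loss_T}\le 2\norm{V_L-V_T}_\infty$ together with the orthogonality of the vectors $(V_L-V_T)\ket{i}$, which gives $2\sqrt{2(1-\mathrm{BC})}$. The paper instead takes the pure-state trace distance $2\sqrt{1-\abs{\bra{\sigma}(\one\otimes V_s^\dagger W_T)\ket{\sigma}}^2}$ and uses that $V_s^\dagger W_T$ is diagonal, which gives the slightly tighter $2\sqrt{1-\mathrm{BC}^2}$. Both are then relaxed to $2\sqrt{D}$, so the final bound is identical.
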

\begin{proof}
        Let $V_s, W_T$ be the Stinespring dilations of the loss channel and Gaussian attenuation channel with respect to the Kraus operator ordering specified above:
    \begin{align*}
        V_s = \sum_{u = 0}^{M - s} L_{M \to s}^u \otimes \ket{u},\qquad W_T = \sum_{u = 0}^{\infty} K_T^u \otimes \ket{u} 
    \end{align*}

    Let $\ket{\sigma} = \sum_{\ell = 0}^{K } \sqrt{s_\ell} \ket{\sigma_\ell}\otimes \ket{\sigma_\ell}$ be the pure state achieving the diamond norm $\KDiaNorm{\traceChannel - \Loss_T} = \norm{\traceChannel(\sigma) - \Loss_T(\sigma)}_1$. Then by monotonicity of trace norm,
    \begin{align*}
        \KDiaNorm{\traceChannel - \Loss_T} &\leq \norm{ (\id \otimes V_s)\ketbra{\sigma}{\sigma} (\id \otimes V_s^\dagger) - (\id \otimes W_T)\ketbra{\sigma}{\sigma} (\id \otimes W_T^\dagger) }_1 \\
        &= 2\sqrt{1 - \abs{ \bra{\sigma} (\id \otimes V_s^\dagger W_T) \ket{\sigma} }^2}
    \end{align*}
    where the last line follows from the Fuchs-van de Graaf inequality, which is an equality for pure states. Now let 
    \begin{align*}
        \Jloss &\coloneqq  \bra{\sigma} (\id \otimes V_s^\dagger W_T) \ket{\sigma} = \sum_{\ell = 0}^{K } s_\ell \sum_{u = 0}^{M-s} \bra{\sigma_\ell} L_{M\to s}^{u \, \dagger} K_T^u \ket{\sigma_\ell}
    \end{align*}
Each $ L_{M\to s}^{u \, \dagger} K_T^u$ is diagonal in the Fock basis with non-negative eigenvalues, so for all $\ell$ we have $\bra{\sigma_\ell} L_{M\to s}^{u \, \dagger} K_T^u \ket{\sigma_\ell}\geq 0$, and 
\begin{align}
    \Jloss =  \sum_{\ell = 0}^{K } s_\ell \sum_{u = 0}^{M-s} \bra{\sigma_\ell} L_{M\to s}^{u \, \dagger} K_T^u \ket{\sigma_\ell} &= \sum_{\ell,k = 0}^{K } s_\ell\abs{\langle{\sigma_\ell}|k\rangle}^2  \bra{k} \sum_{u=0}^{M-s} L_{M\to s}^{u \, \dagger} K_T^u \ket{k}\nonumber\\
    &\geq \min_{\ell\in \{0,\dots,K\}} \sum_{k=0}^K\abs{\langle{\sigma_\ell}|k\rangle}^2 \bra{k}\sum_{u=0}^{M-s} L_{M\to s}^{u \, \dagger} K_T^u\ket{k}\nonumber\\
    &\geq \min_{k\in \{0,\dots,K\}}\; \bra{k} \sum_{u=0}^{M-s} L_{M\to s}^{u \, \dagger} K_T^u \ket{k}.
\end{align}
Let us assume that the minimum occurs at $k=\tilde{K}$. In the following, we use this lower bound on $\Jloss$ to prove the claimed upper bound on the diamond distance. As $\tilde{K}\leq K \leq M-s$, we have
\begin{align}
    \Jloss &\geq \sum_{u = 0}^{\K} \sqrt{ \frac{\binom{s}{\K - u} \binom{M - s}{u}}{\binom{M}{\K}} }\sqrt{\binom{\K}{u} T^{\K-u} (1-T)^u } \label{eqn:loss_min}  = \sum_{u = 0}^{\K} \sqrt{\alpha_{\K - u}}\sqrt{\beta_{\K - u}} = \sum_{u = 0}^{\K} \sqrt{\alpha_{u}}\sqrt{\beta_{u}},
\end{align}
 with distributions
    \begin{align*}
        \alpha_u&= \frac{\binom{s}{u} \binom{M - s}{\K - u} }{\binom{M}{\K}} &&\sim \Hypergeo(M, s, \K), \\
        \beta_u &= \binom{\K}{u}T^u (1-T)^{\K - u} &&\sim \Binom(\K, T).
    \end{align*}
Noting that $\text{supp}(\alpha_u) \subset \text{supp}(\beta_u)$, we apply Jensen's inequality to introduce KL divergence as a lower bound~\cite{Tsybakov2009-gw}:
    \begin{align*}
        \sum_{u = 0}^{\K} \sqrt{\alpha_u}\sqrt{\beta_u} = \sum_{u=0}^{\K} \alpha_u \exp\left( -\frac{1}{2} \ln\frac{\alpha_u}{\beta_u}\right)&\geq \exp \left(  -\frac{1}{2}\sum_{u=0}^{\K} \alpha_u  \ln\frac{\alpha_u}{\beta_u} \right) = \exp \left(-\frac{1}{2} \diver{\alpha_u}{\beta_u} \right).
    \end{align*}
 We now turn to upper bounding $\diver{\alpha_u}{\beta_u}$. To see the asymptotic behavior of hypergeometric tending towards binomial, we introduce an intermediary binomial distribution 
    \begin{equation*}
        \gamma_u = \binom{\K}{u} \left( \frac{s}{M} \right)^u \left(1 - \frac{s}{M} \right)^{\K - u} \sim \Binom(\K, \frac{s}{M}).
    \end{equation*}
Then,
    \begin{align}
        \diver{\alpha_u}{\beta_u} &= \diver{\alpha_u}{\gamma_u} + \sum_{u = 0}^{\K} \alpha_u \ln\frac{\gamma_u}{\beta_u}.
        \label{eqn:loss_div}
    \end{align}
Comparison between $\gamma_u$ and $\alpha_u$ is simply the comparison between sampling with and without replacement. This was precisely studied by Stam in~\cite{Stam1978-we}. In Section 2, eq.~(2.6) therein, the following bound is stated:
\begin{align}
    \diver{\alpha_u}{\gamma_u} \leq \frac{\K (\K-1)}{2(M - 1)(M - \K + 1)}.
\end{align}
    As for the second term of Eq.~\eqref{eqn:loss_div}, note that
    \begin{align*}
        \ln\frac{\gamma_u}{\beta_u} &= \ln \frac{\binom{\K}{u} \left(\frac{s}{M} \right)^{u}\left(1 - \frac{s}{M} \right)^{u}}{\binom{\K}{u} T^{\K - u} (1 - T)^{\K - u}} \\
        &= u\ln\left( \frac{\left(\frac{s}{M}\right)}{T}\right) + (\K - u) \ln\left( \frac{1 - \frac{s}{M}}{1 - T} \right).
    \end{align*}
Therefore,
 \begin{align*}
         \sum_{u = 0}^{\K} \alpha_u \ln\frac{\gamma_u}{\beta_u} &= \E_{\alpha_u}[u]\ln\left( \frac{\frac{s}{M}}{T}\right) + (\K - \E_{\alpha_u}[u]) \ln\left( \frac{1 - \frac{s}{M}}{1 - T} \right) \\
         &= \K \frac{s}{M}\ln\left( \frac{\frac{s}{M}}{T}\right) + (\K - \K \frac{s}{M}) \ln\left( \frac{1 - \frac{s}{M}}{1 - T} \right) \\
         &= \K \diver{\frac{s}{M}}{T}\\
         &\leq \frac{\K}{T(1-T)} \left( \frac{s}{M} - T \right)^2.
    \end{align*}
    In the last line we have used a simple upper bound on the binary relative entropy, $\diver{a}{b} \leq \frac{(a-b)^2}{b(1-b)}$, which can be easily seen using $\ln x \leq x-1$. Combining everything, we have a lower bound:
    \begin{align}
        \Jloss &\geq \exp \left({-\frac{1}{4}\frac{\K (\K-1)}{(M - 1)(M - \K + 1)} - \frac{1}{2}\frac{\K}{T(1-T)} \left( \frac{s}{M} - T \right)^2} \right)\nonumber\\
        \implies (\Jloss)^2 &\geq \exp \left({-\frac{1}{2}\frac{\K (\K-1)}{(M - 1)(M - \K + 1)} - \frac{\K}{T(1-T)} \left( \frac{s}{M} - T \right)^2} \right)\nonumber\\
        &\geq 1 -\frac{1}{2}\frac{\K (\K-1)}{(M - 1)(M - \K + 1)} - \frac{\K}{T(1-T)} \left( \frac{s}{M} - T \right)^2,
    \end{align}
where we have used that $e^{-x}\geq 1-x$. This provides the claimed upper bound on the diamond norm.
\end{proof}
We now similarly derive the following upper bound for the cloning channel and the Gaussian amplification channel.
\begin{lemma}
\label{lem:clone-channel-gain-channel-diamond-norm-bound}
    Given $K \leq s$,
    \begin{equation}
        \KDiaNorm{\cloneChannel - \Gain_G} \leq 2\sqrt{\frac{(\K' + 1)(\K'+2)}{(s+2)(s + 1 - \K')} + G(\K' + 1) \abs{\frac{1}{G - 1}\frac{N - s}{s + 2} - 1 } \abs{ \frac{1}{G} - \frac{s + 1}{N + 1} } },
        \label{eqn:gain_bound}
    \end{equation}
    for some $ 0 \leq \K' \leq K $.
\end{lemma}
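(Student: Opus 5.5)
The plan mirrors the proof of Lemma~\ref{lem:trace-channel-loss-channel-diamond-norm-bound}, with the hypergeometric/binomial pair replaced by a negative hypergeometric/negative binomial pair.

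\emph{Reduction to a single Fock level.} Let $V = \sum_{t=0}^{N-s} C^t_{s\to N}\otimes\ket{t}$ and $W=\sum_{t\ge0}A^t_G\otimes\ket{t}$ be the Stinespring isometries built from the Kraus operators listed above. Both families shift $\ket{i}\mapsto\ket{i+t}$. Take a purification $\ket{\sigma}$ optimal for the restricted diamond norm. Exactly as before, monotonicity of the trace norm and the pure-state Fuchs--van de Graaf equality give
\[
\KDiaNorm{\cloneChannel-\Gain_G}\le 2\sqrt{1-J^2},\qquad J=\bra{\sigma}(\id\otimes V^\dagger W)\ket{\sigma}.
\]
Each $\sum_t C^{t\,\dagger}_{s\to N}A^t_G$ is diagonal in the excitation basis with nonnegative entries. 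Hence $J\ge \min_{k\le K}\sum_t \sqrt{\alpha_t\beta_t}$, and we let the minimum be attained at $k=\K'$. Here $K\le s$ ensures that $\ket{\K'}$ lies in $\Sym^s$.

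\emph{Identifying the two distributions.} The first distribution is
\[
\alpha_t=\frac{s+1}{N+1}\,\frac{\binom{s}{\K'}\binom{N-s}{t}}{\binom{N}{\K'+t}},\qquad t=0,\dots,N-s.
\]
This is the negative hypergeometric law: the number of ``failures'' before the $(\K'+1)$-th success in sampling without replacement. It should be checked that it sums to one, which is where the factor $(s+1)/(N+1)$ enters. The second is
\[
\beta_t=\binom{\K'+t}{t}(1-1/G)^t G^{-(\K'+1)},
\]
which is $\NegBinom(\K'+1,1/G)$. Since $\mathrm{supp}\,\alpha\subset\mathrm{supp}\,\beta$, Jensen gives $\sum_t\sqrt{\alpha_t\beta_t}\ge \exp(-\tfrac12 D(\alpha\|\beta))$.

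I then insert an intermediate negative binomial $\gamma=\NegBinom(\K'+1,p)$ with $p=\frac{s+1}{N+1}$, and split
\[
D(\alpha\|\beta)=D(\alpha\|\gamma)+\sum_t\alpha_t\ln(\gamma_t/\beta_t).
\]
The second term is linear in $t$, so only $\E_\alpha[t]=(\K'+1)\frac{N-s}{s+2}$ is needed:
\[
\sum_t\alpha_t\ln\frac{\gamma_t}{\beta_t}=(\K'+1)\ln(pG)+(\K'+1)\frac{N-s}{s+2}\,\ln\frac{(1-p)G}{G-1}.
\]
Applying $\ln x\le x-1$ to both logarithms and regrouping should give exactly $G(\K'+1)\bigl|\tfrac{1}{G-1}\tfrac{N-s}{s+2}-1\bigr|\bigl|\tfrac1G-p\bigr|$. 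The regrouping works because $\ln(pG)\approx G(p-\tfrac1G)$ and $\ln\frac{(1-p)G}{G-1}\approx\frac{G}{G-1}(\tfrac1G-p)$, so the two contributions combine into the product of the two bracketed deviations. The bookkeeping of signs has to be done carefully; bounding by absolute values handles both orderings of $p$ versus $1/G$.

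\emph{Main obstacle.} The step I expect to be hardest is bounding $D(\alpha\|\gamma)$, negative hypergeometric versus negative binomial, by $\frac{(\K'+1)(\K'+2)}{(s+2)(s+1-\K')}$. No ready-made analogue of Stam's bound is cited. I would first try $D\le\chi^2(\alpha\|\gamma)$, or directly bound $\sum\alpha\ln(\alpha/\gamma)$ using the ratio $\alpha_t/\gamma_t$. This ratio factorizes as a product of terms of the form $\frac{(s+1-j)}{(s+1)}\cdot\frac{(N+1)}{(N+1-j)}$ over $j\le \K'+t$. Taking logs and using $\ln(1-x)\ge -x/(1-x)$, the sum should reduce to a quadratic in $\K'$ divided by $(s+2)(s+1-\K')$, the second factor coming from $s+1-\K'$ being the smallest denominator. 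Averaging against $\alpha$ then uses its first two moments, which are explicit.

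Finally, $J^2\ge e^{-D}\ge 1-D$ yields \eqref{eqn:gain_bound}.
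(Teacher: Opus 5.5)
\textbf{Where the proposal stands.} Every step except one matches the paper's proof:
\begin{itemize}
\item the Stinespring/pure-state Fuchs--van de Graaf reduction and the minimization over a single Fock level $\K'$;
\item the identification $\alpha=\NegHypergeo(N+1,s+1,\K'+1)$ (your $\alpha_t$ equals the paper's $\binom{\K'+t}{\K'}\binom{N-\K'-t}{s-\K'}/\binom{N+1}{s+1}$) and $\beta=\NegBinom(\K'+1,1/G)$;
\item the intermediate $\gamma=\NegBinom(\K'+1,\frac{s+1}{N+1})$ and the mean $(\K'+1)\frac{N-s}{s+2}$;
\item the $\ln x\le x-1$ regrouping, which is exact algebra rather than an approximation, since $pG-1=G(p-\frac1G)$ and $\frac{(1-p)G}{G-1}-1=\frac{G}{G-1}(\frac1G-p)$;
\item and the final $e^{-x}\ge 1-x$.
\end{itemize}

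\textbf{The gap.} It is the step you flag yourself: you have no proof that $\diver{\alpha}{\gamma}\le\frac{(\K'+1)(\K'+2)}{(s+2)(s+1-\K')}$. This is the lemma's only nontrivial quantitative input. The paper does not derive it either. Contrary to your remark, it does cite a ready-made analogue. Eq.~(4.1) of the same Stam paper used for the loss lemma treats inverse sampling and gives $\diver{\alpha}{\gamma}\le\frac{(\K'+1)(\K'+2)}{2(s+2)(s+1-\K')}$. That is half of what you need, so with this citation your chain closes.

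\textbf{Why your substitute fails as sketched.} First, the likelihood ratio is not a product of factors $\frac{s+1-j}{s+1}\cdot\frac{N+1}{N+1-j}$ over $j\le\K'+t$. You have dropped the failure factors, and the index ranges differ. The correct form is
\begin{equation*}
\frac{\alpha_t}{\gamma_t}=\frac{\prod_{j=1}^{\K'}\bigl(1-\frac{j}{s+1}\bigr)\prod_{j=1}^{t-1}\bigl(1-\frac{j}{N-s}\bigr)}{\prod_{j=1}^{\K'+t}\bigl(1-\frac{j}{N+1}\bigr)}.
\end{equation*}
Second, the success factors sit in the numerator. Upper-bounding $\ln(\alpha_t/\gamma_t)$ therefore needs $\ln(1-x)\le -x$ there. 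The inequality $\ln(1-x)\ge -x/(1-x)$ points the wrong way, and it does not produce the factor $s+1-\K'$ you expect.

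Third, and more fundamentally, the target bound is independent of $N$. With $t$ typically of order $(\K'+1)\frac{N-s}{s+1}$, the failure product and the denominator each contribute terms such as $t^2/(N-s)$ and $t^3/(N-s)^2$. These grow with $N$ and cancel only against each other. One-sided bounds like $\ln(1-x)\le -x$ and $-\ln(1-x)\le\frac{x}{1-x}$ keep the first-order terms but discard second-order pieces of size roughly $\frac{t^3}{6(N-s)^2}$ and $\frac{(\K'+t)^3}{6(N+1)^2}$. What remains is an uncancelled error of order $(\K'+1)^3\frac{N-s}{(s+1)^3}$. This error grows with $N$ at fixed $s$, is cubic in $\K'$, and requires third moments of $\alpha$ to control, not just the first two. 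So this route cannot produce the stated $N$-independent term, which is quadratic in $\K'$.

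A $\chi^2$ bound might be made to work, but only with an exact evaluation or sharp bound of $\sum_t\alpha_t^2/\gamma_t$ uniform in $N$, which you have not supplied. The missing piece is Stam's inverse-sampling bound.
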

\begin{proof}
    A similar analysis as in the proof of Lemma~\ref{lem:trace-channel-loss-channel-diamond-norm-bound} leads to 
    \begin{align}
        \KDiaNorm{\cloneChannel - \Gain_G} \leq 2\sqrt{1-J^2_{\mathrm{gain},s}},
    \end{align}  
    where
    \begin{align}
        J_{\mathrm{gain},s} &\geq \sum_{t = 0}^{N - s} \sqrt{\frac{s + 1}{N + 1} \frac{\binom{s}{\K'} \binom{N - s}{t}}{\binom{N}{\K' + t}}} \sqrt{\binom{\K' + t}{t} \frac{(G - 1)^t}{G^{\K' + t + 1}} }\\
        &= \sum_{t =0}^{N - s} \sqrt{\alpha_t'}\sqrt{\beta_t'} \geq \exp\left( -\frac{1}{2}\diver{\alpha_t'}{\beta'_t} \right),
    \end{align}
    where we introduced a negative hypergeometric distribution and a negative binomial distribution
    \begin{align*}
        \alpha_t' &= \frac{\binom{\K' + t}{\K'} \binom{N - \K' - t}{s - \K'} }{\binom{N + 1}{s + 1}} && \sim \NegHypergeo(N + 1, s + 1, \K' + 1) \\
        \beta'_t &= \binom{\K' + t}{t} \left(1 - \frac{1}{G} \right)^t \left(\frac{1}{G} \right)^{\K' + 1} &&\sim \NegBinom(\K' + 1, \frac{1}{G}).
    \end{align*}
    Similarly, as before, we introduce an intermediary negative binomial distribution $$\gamma'_t = \binom{\K' + t}{t}\left(1 - \frac{s + 1}{N + 1} \right)^t \left( \frac{s + 1}{N + 1} \right)^{\K'+ 1}\sim \NegBinom(\K' + 1, \frac{s + 1}{N + 1}),$$
    and decompose the divergence similarly to \eqref{eqn:loss_div}:
    \begin{equation}
        \diver{\alpha_t'}{\beta_t'} = \diver{\alpha_t'}{\gamma_t'} + \sum_{t = 0}^{N - s} \alpha_t' \ln \frac{\gamma_t'}{\beta_t'}
        \label{eqn:gain_div}
    \end{equation}
    The first term can be bounded again via Stam's inequality (\cite[eq.~(4.1)]{Stam1978-we}) as
    \begin{equation}
        \diver{\alpha_t'}{\gamma_t'} \leq \frac{(\K' + 1)(\K'+2)}{2(s+2)(s + 1 - \K')}.
        \label{eqn:gain_first}
    \end{equation}
    The second term in \eqref{eqn:gain_div} evaluates to
    \begin{align*}
        \sum_{t = 0}^{N - s} \alpha_t' \ln \frac{\gamma_t'}{\beta_t'} &= (\K' + 1 )\ln\left( \frac{ \frac{s + 1}{N + 1} }{ \frac{1}{G}} \right) + \E_{\alpha_t'}[t] \ln\left( \frac{1 - \frac{s + 1}{N + 1}}{ 1 - \frac{1}{G} } \right) \\
        &=(\K' + 1) \ln \left(G \frac{s + 1}{N + 1} \right) + \frac{(\K' + 1)(N - s)}{s + 2} \ln \left( \frac{1 - \frac{s + 1}{N + 1}}{1 - \frac{1}{G}} \right) \\
        &= (\K' + 1) \left( \ln\left( G \frac{s + 1}{N + 1} \right) + \frac{N - s}{s + 2} \ln \left( \frac{1 - \frac{s + 1}{N + 1}}{1 - \frac{1}{G}} \right) \right) \\
        & \leq (\K' + 1) \left( \left( G \frac{s + 1}{N + 1} - 1 \right) + \frac{N - s}{s + 2} \left( \frac{1 - \frac{s + 1}{N + 1}}{1 - \frac{1}{G}} - 1\right) \right) \\
        &= (\K' + 1) \left( G \left(  \frac{s + 1}{N + 1} - \frac{1}{G} \right) + \frac{1}{1 - \frac{1}{G}}\frac{N - s}{s + 2} \left(1 - \frac{s + 1}{N + 1} - (1 - \frac{1}{G})\right) \right) \\
        &= G(\K' + 1) \left( \left(  \frac{s + 1}{N + 1} - \frac{1}{G} \right) + \frac{1}{G - 1}\frac{N - s}{s + 2} \left( \frac{1}{G} - \frac{s + 1}{N + 1} \right) \right) \\
        &= G(\K' + 1) \left(\frac{1}{G - 1}\frac{N - s}{s + 2} - 1 \right) \left( \frac{1}{G} - \frac{s + 1}{N + 1} \right) \\
        &\leq G(\K' + 1) \abs{\frac{1}{G - 1}\frac{N - s}{s + 2} - 1 } \abs{ \frac{1}{G} - \frac{s + 1}{N + 1} },
    \end{align*}
    where we have used $\ln x \leq x - 1$ to obtain the inequality. 
    Combining everything and finally using $e^{-x}\geq 1-x$ gives the stated upper bound on the diamond distance.
\end{proof}
We are now ready to give the proof of the main result in this section:
\begin{proof}[Proof of Theorem~\ref{thm:diamond-dist-bound}]
    Using Eq~\eqref{eq:tri-diamond} and Eq~\eqref{eq:decomp-into-loss-and-gain}, we obtain that 
    \begin{align}
        &\;\norm{\PhiChannel - \mathcal{A}_{\frac{1+\eta}{2\eta}}\circ \mathcal{L}_{\frac{2\eta}{1+\eta}\frac{N}{M}}}_{\diamond, K}\\
    &\leq \sum_{s=0}^N \piDist{s} \norm{\cloneChannel
    \circ
    \traceChannel- \mathcal{A}_{\frac{1+\eta}{2\eta}}\circ \mathcal{L}_{\frac{2\eta}{1+\eta}\frac{N}{M}}}_{\diamond, K}\nonumber\\
    &\leq  \sum_{s=0}^N \piDist{s} \left(\KDiaNorm{\cloneChannel - \Gain_{\frac{1+\eta}{2\eta}}} +  \KDiaNorm{\traceChannel - \Loss_{\frac{2\eta}{1+\eta}\frac{N}{M}}}\right)\nonumber\\
    &\leq \smashoperator{\sum_{s\in \mathrm{Typ}^N(\delta)} }\piDist{s}\left(\KDiaNorm{\cloneChannel - \Gain_{\frac{1+\eta}{2\eta}}} +  \KDiaNorm{\traceChannel - \Loss_{\frac{2\eta}{1+\eta}\frac{N}{M}}}\right) + 4 \smashoperator{\sum_{s\notin \mathrm{Typ}^N(\delta)}} \piDist{s}.
    \end{align}
    Here $\mathrm{Typ}^N(\delta)$ denotes the interval $\left[\left(\frac{2\eta}{1+\eta} -\delta\right)N, \left(\frac{2\eta}{1+\eta} +\delta\right)N \right]$. We have used the trivial upper bound of $2$ for the diamond distance in the atypical interval for $s$. With the choice $\delta=\sqrt{\frac{\log N}{N}}$, Lemma~\ref{lem:mixing-weight-concentration} applies for all $N\geq 2$. Hence, we obtain that
    \begin{multline}
         \norm{\PhiChannel - \mathcal{A}_{\frac{1+\eta}{2\eta}}\circ \mathcal{L}_{\frac{2\eta}{1+\eta}\frac{N}{M}}}_{\diamond, K}\\
        \leq \sum_{s\in \mathrm{Typ}^N(\delta)} \piDist{s} \left(\KDiaNorm{\cloneChannel - \Gain_{\frac{1+\eta}{2\eta}}} +  \KDiaNorm{\traceChannel - \Loss_{\frac{2\eta}{1+\eta}\frac{N}{M}}}\right) + 4c\frac{1+\eta}{\eta}\;e^{-\frac{2N^2 \delta^2}{N+1}}.\label{eq:total-bound-1}
    \end{multline}
Within the typical interval, it is sufficient to give a uniform upper bound for each of the diamond distance term. We begin with the cloning and amplification channel and use the bound \eqref{eqn:gain_bound} from Lemma \ref{lem:clone-channel-gain-channel-diamond-norm-bound}. For $s\in \mathrm{Typ}^N(\delta)$ with the choice $\delta = \sqrt{\frac{\log N}{N}}$ and $N>\max\{N_0,\frac{1+\eta}{\eta}K\}$ with \begin{align}
    N_0 &= \frac{8}{3}\left(\frac{1+\eta}{\eta}\right)^2\log \frac{2+2\eta}{\eta},
\end{align} we have that $\frac{2\eta}{1+\eta} -\sqrt{\frac{\log N}{N}}>\frac{\eta}{1+\eta}$ (using $N>N_0$) and $s\geq N(\frac{2\eta}{1+\eta}-\sqrt{\frac{\log N}{N}}) > \frac{1+\eta}{\eta} K \frac{\eta}{1+\eta} =K \geq \K'$ (using $N> \frac{1+\eta}{\eta}K$).
Therefore,
\begin{align}
    \frac{1}{(s+2)(s+1-\tilde{K}')}& \leq \frac{1}{s} \leq \frac{1}{N(\frac{2\eta}{1+\eta}-\sqrt{\frac{\log N}{N}})} \leq \frac{1+\eta}{qM\eta}.
\end{align}
Furthermore,
\begin{align}
    &\frac{N - s}{s + 2} \leq \frac{N-s}{s} = \frac{1-s/N}{s/N} \leq \frac{1-(\frac{2\eta}{1+\eta}-\sqrt{\frac{\log N}{N}})}{\frac{2\eta}{1+\eta}-\sqrt{\frac{\log N}{N}}} \leq \frac{1-\frac{\eta}{1+\eta}}{\frac{\eta}{1+\eta}} = \frac{1}{\eta}\nonumber\\
    \implies & \abs{\frac{1}{\frac{1+\eta}{2\eta} - 1}\frac{N - s}{s + 2} - 1 } \leq \frac{2\eta}{1-\eta}\frac{1}{\eta} + 1 \leq \frac{3-\eta}{1-\eta},
\end{align}
and
\begin{align}
    \abs{ \frac{s+1}{N+1} - \frac{2\eta}{1+\eta} }  &\leq \abs{\frac{\frac{2\eta}{1+\eta}N+\delta N+1}{N+1} -\frac{2\eta}{1+\eta}}\nonumber\\
    &\leq \abs{\delta \frac{N}{N+1} + \left(\frac{1-\eta}{1+\eta}\right)\frac{1}{N+1}}\nonumber\\
    &\leq \left(\sqrt{\frac{\log qM}{qM}} + \frac{1-\eta}{1+\eta}\frac{1}{qM}\right).
\end{align}
Combining both terms, we obtain the following upper bound from Eq.~\eqref{eqn:gain_bound}:
For $s\in \mathrm{Typ}^N(\delta)$, we have $\big\|\cloneChannel - \Gain_{\frac{1+\eta}{2\eta}}\big\|_{\diamond,K} \leq \varepsilon^{\mathrm{cloning}}_{M,K,q}$ with
\begin{align}
    \varepsilon^{\mathrm{cloning}}_{M,K,q}= 2 \left(\frac{1}{M}\frac{(K+2)^2(1+\eta)}{q\eta} + \frac{(1+\eta)(3-\eta)(K+1)}{2\eta(1-\eta)}\left(\sqrt{\frac{\log qM}{qM}} + \frac{1-\eta}{1+\eta}\frac{1}{qM}\right) \right)^{1/2}.
\end{align}
Evidently, we have $\varepsilon^{\mathrm{cloning}}_{M,K,q}\to 0$ as $M\to \infty$ for fixed $K$ and $q$. 

The bound on the diamond norm between loss channel and the Gaussian attenuation channel is even simpler. In Eq.~\eqref{eqn:loss_bound} in Lemma \ref{lem:trace-channel-loss-channel-diamond-norm-bound}, we write $T = \frac{2\eta}{1+\eta} q$ and $s/M =sq/N$ and within the typical interval for $s$, we have that $(\frac{s}{M} - T)^2 = q^2(\frac{s}{N}-\frac{2\eta}{1+\eta})^2 \leq q^2\frac{\log N}{N} = q\frac{\log qM}{M}$. Therefore, the second term within the square-root is upper bounded by $$\frac{(1+\eta)K}{2\eta q(1-\frac{2\eta}{1+\eta}q)}q\frac{\log qM}{M} = \frac{(1+\eta)K}{2\eta (1-\frac{2\eta}{1+\eta}q)}\frac{\log qM}{M}.$$
Hence, a simple upper bound on the loss and the Gaussian attenuation channel within the typical window of $s$ is $\big\|\traceChannel - \Loss_{\frac{2\eta}{1+\eta}\frac{N}{M}} \big\|_{\diamond,K}\leq \varepsilon^{\mathrm{loss}}_{M,K,q}$ with
\begin{align}
\varepsilon^{\mathrm{loss}}_{M,K,q} = 2\left( \frac{K^2}{2(M-1)(M-K+1)} + \frac{(1+\eta)K}{2\eta (1-\frac{2\eta}{1+\eta}q)}\frac{\log qM}{M}\right)^{1/2}.
\end{align}
From the expression above it is clear that for fixed $K$ and $q$ as given in the statement of the theorem, $\varepsilon^{\mathrm{loss}}_{M,K,q} \to 0$ as $M\to \infty$. As $\varepsilon^{\mathrm{cloning}}_{M,K,q}$ and $\varepsilon^{\mathrm{loss}}_{M,K,q}$ bound each of the terms in the respective sums within the typical interval of $s$ in Eq.~\eqref{eq:total-bound-1}, and $\sum_{s\in \mathrm{Typ}^N(\delta)}\piDist{s}\leq 1$, we conclude that
\begin{align}
     \norm{\PhiChannel - \mathcal{A}_{\frac{1+\eta}{2\eta}}\circ \mathcal{L}_{\frac{2\eta}{1+\eta}\frac{N}{M}}}_{\diamond, K} &\leq \varepsilon^{\mathrm{cloning}}_{M,K,q} + \varepsilon^{\mathrm{loss}}_{M,K,q}+ 4c\frac{1+\eta}{\eta}\;e^{-\frac{2N^2 \delta^2}{N+1}}\nonumber\\
    &\leq \varepsilon^{\mathrm{cloning}}_{M,K,q} + \varepsilon^{\mathrm{loss}}_{M,K,q}+ 4c\frac{1+\eta}{\eta}\;e^{-{N \delta^2}}\nonumber\\
    &\leq \varepsilon^{\mathrm{cloning}}_{M,K,q} + \varepsilon^{\mathrm{loss}}_{M,K,q}+ 4c\frac{1+\eta}{\eta q}\frac{1}{M} \eqqcolon \varepsilon_{M,K,q}
\end{align}
It is explicit that for fixed cutoff $K$ and ratio $q=N/M$ we have $\varepsilon_{M,K,q}\to 0$ as $M\to \infty$. This finishes the proof. 
\end{proof}

We also give the proof of Theorem~\ref{thm:diamond-dist-bound-main} in the main text, which is stated in slightly different terms than in Theorem \ref{thm:diamond-dist-bound} above.

\begin{proof}[Proof of Theorem~\ref{thm:diamond-dist-bound-main} in main text]
Here we focus our attention specifically to the typical interval of $q=N/M \in \left[\eta -\delta, \eta +\delta\right]\eqqcolon\mathrm{Typ}^M(\delta)$ with $\delta = \sqrt{(2\log M)/M}$. When $1>\eta \geq 1/2$, for all $M > \frac{16}{(1-\eta)^2}\log \frac{16}{(1-\eta)^2}$, the typical interval, $\mathrm{Typ}^M(\sqrt{(2\log M)/M}) \subset \left[\frac{3\eta-1}{2},\frac{1+\eta}{2}\right]\subset (0,1)$, implying $\frac{3\eta-1}{2}<q< \frac{1+\eta}{2}$. Therefore, for all $q \in \mathrm{Typ}^M(\sqrt{(2\log M)/M})$ we can upper-bound the constant $\varepsilon_{M,K,q}$ by
\begin{align}\label{eq:bound-eta-geq1/2}
    \varepsilon_{M,K}&= 2 \left(\frac{2}{M}\frac{(K+2)^2(1+\eta)}{(3\eta-1)\eta} + \frac{(1+\eta)(3-\eta)(K+1)}{2\eta(1-\eta)}\left(\sqrt{\tfrac{2\log [(1+\eta)M/2]}{(3\eta-1)M}} + \tfrac{1-\eta}{1+\eta}\tfrac{2}{(3\eta-1)M}\right) \right)^{1/2}\nonumber\\
    &\quad {}+2\left( \frac{K^2}{2(M-1)(M-K+1)} + \frac{(1+\eta)K}{2\eta (1-\eta)}\frac{\log [(1+\eta)M/2]}{M}\right)^{1/2} + \frac{8c(1+\eta)}{\eta (3\eta-1)}\frac{1}{M}.
\end{align}
When $0<\eta \leq 1/2$, for all $M > \frac{16}{\eta^2}\log \frac{16}{\eta^2}$, the typical interval, $\mathrm{Typ}^M(\sqrt{(2\log M)/M}) \subset \left[\frac{\eta}{2},\frac{3\eta}{2}\right]\subset (0,1)$, implying $\frac{\eta}{2}<q< \frac{3\eta}{2}$. In this case, we can therefore upper-bound $\varepsilon_{M,K,q}$ for all $q \in \mathrm{Typ}^M(\sqrt{(2\log M)/M})$ by
\begin{align}\label{eq:bound-eta-leq1/2}
    \varepsilon_{M,K} &= 2 \left(\frac{2}{M}\frac{(K+2)^2(1+\eta)}{\eta^2} + \frac{(1+\eta)(3-\eta)(K+1)}{2\eta(1-\eta)}\left(\sqrt{\tfrac{2\log (3\eta M/2)}{\eta M}} + \tfrac{1-\eta}{1+\eta}\tfrac{2}{\eta M}\right) \right)^{1/2}\nonumber\\
    &\quad {} +2\left( \frac{K^2}{2(M-1)(M-K+1)} + \frac{(1+\eta)^2K}{2\eta (1+\eta-3\eta^2)}\frac{\log (3\eta M/2)}{M}\right)^{1/2} + \frac{8c(1+\eta)}{\eta^2}\frac{1}{M}.
\end{align}
For both of the cases in Eq.~\eqref{eq:bound-eta-geq1/2} and Eq.~\eqref{eq:bound-eta-leq1/2}, we find that $\varepsilon_{M,K}\to 0$ as $M\to \infty$ for a fixed cutoff $K$.
\end{proof}

\subsection{Convergence among the typical Gaussian channels}
\begin{lemma}
\label{lem:typical-gaussian-channels-convergence}
    For any bipartite pure state $\ket{\psi}_{RA}$ with $\rho = \tr_R(\ketbra{\psi}{\psi})$ supported within the first $K$ excitations and $N/M \in [\eta - \delta ,\eta + \delta]$ with $\delta = \sqrt{(2\log M)/M}$, we have
    \begin{align}
        \norm{\Id_R\otimes \mathcal{A}_{\frac{1+\eta}{2\eta}}\circ \mathcal{L}_{\frac{2\eta}{1+\eta}\frac{N}{M}}(\psi) - \Id_R\otimes \mathcal{A}_{\frac{1+\eta}{2\eta}}\circ \mathcal{L}_{\frac{2\eta^2}{1+\eta}}(\psi)}_1 \leq \varepsilon^{\mathrm{att}}_{M,K},
    \end{align}
    where $\varepsilon^{\mathrm{att}}_{M,K}\to 0$ as $M\to \infty$. 
\end{lemma}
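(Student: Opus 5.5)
Since the amplifier $\Gain_{\frac{1+\eta}{2\eta}}$ is a channel and the trace norm is monotone under $\Id_R\otimes\Gain_G$, it suffices to bound
\[
\norm{\Id_R\otimes\Loss_{T_1}(\psi)-\Id_R\otimes\Loss_{T_2}(\psi)}_1,\qquad T_1=\tfrac{2\eta}{1+\eta}\tfrac{N}{M},\quad T_2=\tfrac{2\eta^2}{1+\eta}.
\]
Here $\abs{T_1-T_2}=\tfrac{2\eta}{1+\eta}\abs{N/M-\eta}\le \tfrac{2\eta}{1+\eta}\delta$. This is exactly the diamond norm with cutoff $K$ of $\Loss_{T_1}-\Loss_{T_2}$, evaluated on one state. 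I would therefore copy the Stinespring/fidelity argument of Lemma~\ref{lem:trace-channel-loss-channel-diamond-norm-bound}, replacing the hypergeometric loss channel by a second binomial one.

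Let $W_{T_i}=\sum_u K^u_{T_i}\otimes\ket{u}$ be the two dilations with the Kraus ordering of \eqref{eq:loss-channel-kraus-operators}. By monotonicity and Fuchs--van de Graaf for pure states,
\[
\norm{\cdots}_1\le 2\sqrt{1-J^2},\qquad J=\bra{\psi}\bigl(\id\otimes W_{T_1}^\dagger W_{T_2}\bigr)\ket{\psi}.
\]
The operator $\sum_u K^{u\,\dagger}_{T_1}K^u_{T_2}$ is diagonal in the Fock basis with nonnegative entries. As in the earlier lemma, this gives
\[
J\ge \min_{k\le K}\sum_{u=0}^{k}\sqrt{\beta^{(1)}_u\beta^{(2)}_u},
\]
where $\beta^{(i)}\sim\Binom(k,T_i)$. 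This is the Bhattacharyya coefficient of two binomials. It equals $\bigl(\sqrt{T_1T_2}+\sqrt{(1-T_1)(1-T_2)}\bigr)^k$ exactly, or one can bound it below by $\exp(-\tfrac12 k\,\diver{T_1}{T_2})$ via Jensen, as before. Using $\diver{a}{b}\le (a-b)^2/(b(1-b))$ then gives
\[
J^2\ge 1-\frac{K}{T_2(1-T_2)}\Bigl(\frac{2\eta}{1+\eta}\Bigr)^2\frac{2\log M}{M}.
\]
So we may take
\[
\varepsilon^{\mathrm{att}}_{M,K}=2\Bigl(\frac{8\eta^2K}{(1+\eta)^2T_2(1-T_2)}\frac{\log M}{M}\Bigr)^{1/2},
\]
which tends to zero for fixed $K$. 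Note that $T_2=\tfrac{2\eta^2}{1+\eta}\in(0,1)$ for $0<\eta<1$, so the constant is finite and independent of $N$.

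The one point to handle carefully is the reduction of the $\Loss$ comparison to the Fock diagonal, i.e.\ the positivity of the diagonal operator $\sum_u K^{u\dagger}_{T_1}K^u_{T_2}$. This follows since both Kraus families have nonnegative real coefficients and map $\ket{k}\to\ket{k-u}$. A second subtlety is the orientation of $\diver{T_1}{T_2}$: one should put the fixed $T_2$ in the denominator so that the bound is uniform over the whole typical interval. No conditions on $M$ are needed beyond $\delta$ being well defined.
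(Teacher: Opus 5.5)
Your proof is correct. It shares the paper's skeleton: drop the amplifier by monotonicity, compare Stinespring dilations via Fuchs--van de Graaf, and use that $\sum_u K_{T_1}^{u\dagger}K_{T_2}^u$ is Fock-diagonal to reduce to the worst binomial Bhattacharyya coefficient over $k\le K$.

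The paper's proof differs in two places.
\begin{itemize}
\item It first strips the common factor $\mathcal{L}_{2\eta/(1+\eta)}$ using $\mathcal{L}_a\circ\mathcal{L}_b=\mathcal{L}_{ab}$, so it compares $\mathcal{L}_{N/M}$ with $\mathcal{L}_\eta$ rather than $\mathcal{L}_{T_1}$ with $\mathcal{L}_{T_2}$.
\item It evaluates the coefficient exactly as $\bigl(\sqrt{\eta N/M}+\sqrt{(1-\eta)(1-N/M)}\bigr)^k$, and then bounds it only to first order, $\ge(1-2\delta)^K\ge 1-2K\delta$. This gives $\varepsilon^{\mathrm{att}}_{M,K}=4\sqrt{K}\,(2\log M/M)^{1/4}$. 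It also requires $M$ large enough that $\delta<\min\{\eta,1-\eta,1/2\}$ and $K\delta\le 1/2$.
\end{itemize}

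Your route is the Jensen/KL step followed by the $\chi^2$ bound on the binary divergence. This is the same device the paper uses for the hypergeometric-versus-binomial comparison. It captures that the coefficient deviates from $1$ only at second order in $\abs{T_1-T_2}$. Hence you get $1-J^2=O(K\delta^2)$ instead of the paper's $O(K\delta)$.

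Your constant simplifies, via $T_2(1-T_2)=2\eta^2(1-\eta)(1+2\eta)/(1+\eta)^2$, to $\varepsilon^{\mathrm{att}}_{M,K}=4\sqrt{K\log M/\bigl((1-\eta)(1+2\eta)M\bigr)}$. This is the square of the paper's rate, up to constants. It holds for every $M$, with an $\eta$-dependent constant replacing the paper's $\eta$-dependent threshold on $M$.

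Either bound suffices for the lemma and for the downstream coherent-information theorem, which only needs $\varepsilon^{\mathrm{att}}_{M,K}\to0$.

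Two harmless slips. The quantity you bound is at most, not exactly, the cutoff diamond norm. Indexed by the number $u$ of lost photons, the weights are $\mathrm{Binom}(k,1-T_i)$ rather than $\mathrm{Binom}(k,T_i)$, which changes neither the Bhattacharyya coefficient nor the divergence.
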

\begin{proof}
Consider the following:
\begin{align}
    &\norm{\Id_R\otimes \mathcal{A}_{\frac{1+\eta}{2\eta}}\circ \mathcal{L}_{\frac{2\eta}{1+\eta}\frac{N}{M}}(\psi) - \Id_R\otimes \mathcal{A}_{\frac{1+\eta}{2\eta}}\circ \mathcal{L}_{\frac{2\eta^2}{1+\eta}}(\psi)}_1 \nonumber\\
    &\leq  \norm{\Id_R\otimes\mathcal{L}_{\frac{N}{M}}(\psi) - \Id_R\otimes\mathcal{L}_{\eta}(\psi)}_1\nonumber\\
    &\leq \norm{\left( \one_R\otimes V_{\frac{N}{M}}\right)\ketbra{\psi}{\psi}\left( \one_R\otimes V^\dagger_{\frac{N}{M}} \right)- \left( \one_R\otimes V_\eta \right)\ketbra{\psi}{\psi} \left( \one_R\otimes V^\dagger_{\eta}\right) }_1\nonumber\\
    &= 2\sqrt{1- |\bra{\psi} \left(\one_R\otimes V^\dagger_\eta V_{\frac{N}{M}}\right)\ket{\psi}|^2},
\end{align}
where we used $V_{T} =  \sum_{u} K^u_T\otimes \ket{u}$ with $K_T^u$ as in \eqref{eq:loss-channel-kraus-operators} in the second line.
The first and second inequality follows from monotonicity of trace distance and the composition rule $\mathcal{L}_a\circ\mathcal{L}_b=\mathcal{L}_{ab}$, while the last equality is the equality case for the Fuchs-van de Graaf inequality for pure states. Let $\ket{\psi} = \sum_{\alpha}\sqrt{\lambda_\alpha}\ket{\alpha}\ket{\alpha}$ be the Schmidt decomposition of $|\psi\rangle$. Then,
\begin{align}
    \bra{\psi} \left(\one_R\otimes V^\dagger_\eta V_{\frac{N}{M}}\right)\ket{\psi} = \sum_{\alpha} \lambda_\alpha \bra{\alpha} V^\dagger_\eta V_{\frac{N}{M}} \ket{\alpha} = \sum_{\alpha} \lambda_\alpha \sum_{u} \bra{\alpha} K^{u\dagger}_\eta K^u_{\frac{N}{M}}\ket{\alpha} 
\end{align}
Writing $\ket{\alpha} = \sum_{k}f_{\alpha,k} \ket{k}$ and using definition \eqref{eq:loss-channel-kraus-operators} for the $K^u_*$, we continue to calculate:
\begin{align}
    &\bra{\psi} \left(\one_R\otimes V^\dagger_\eta V_{\frac{N}{M}}\right)\ket{\psi}\nonumber\\
    &= \sum_{\alpha}\lambda_\alpha \sum_{k} \abs{f_{\alpha,k}}^2 \sum_{u=0}^k\sqrt{{k\choose u}\eta^{(k-u)}(1-\eta)^{u}}\sqrt{{k\choose u}\left(\frac{N}{M}\right)^{(k-u)}\left(1-\frac{N}{M}\right)^{u}}\\
    &\geq \min_{k\in \{0,\dots , K\}} \sum_{u=0}^k{k\choose u}\sqrt{\eta^{(k-u)}(1-\eta)^{u}}\sqrt{\left(\frac{N}{M}\right)^{(k-u)}\left(1-\frac{N}{M}\right)^{u}}.\\
    &\eqqcolon J.
\end{align}
Now, for every $k$ the binomial theorem gives
\begin{align}
    \sum_{u=0}^k{k\choose u}\sqrt{\eta^{(k-u)}(1-\eta)^{u}}\sqrt{\left(\frac{N}{M}\right)^{(k-u)}\left(1-\frac{N}{M}\right)^{u}} 
    =\left(\sqrt{\frac{N\eta}{M}}+\sqrt{\left(1-\frac{N}{M}\right)(1-\eta)}\right)^k . 
\end{align}
By the Cauchy-Schwarz inequality,
\begin{align} 
    \sqrt{\frac{N\eta}{M}}+\sqrt{\left(1-\frac{N}{M}\right)(1-\eta)} \leq \left(\frac{N}{M}+1-\frac{N}{M}\right)(\eta + 1-\eta) = 1,
\end{align}
and hence the minimum occurs at the cutoff $k=K$.
Therefore,
\begin{align}
    J &=  \left(\sqrt{\frac{N\eta}{M}}+\sqrt{\left(1-\frac{N}{M}\right)(1-\eta)}\right)^K\nonumber\\
    & \geq \left(\eta\sqrt{1-\frac{\delta}{\eta}} +(1-\eta)\sqrt{\left(1-\frac{\delta}{1-\eta}\right)} \right)^K \nonumber\\
    &\geq \left(\eta\left(1-\frac{\delta}{\eta}\right) + (1-\eta)\left(1-\frac{\delta}{1-\eta}\right)\right)^K\nonumber\\
    &= \left(1 -2\delta\right)^K
\end{align}
for sufficiently large $M$ such that $\delta=\sqrt{(2\log M)/M}< \min\lbrace \eta,1-\eta,1/2\rbrace$. Therefore, using Bernoulli's inequality, we have $J \geq 1- 2K\delta$, and
\begin{align}
    \norm{\Id_R\otimes \mathcal{A}_{\frac{1+\eta}{2\eta}}\circ \mathcal{L}_{\frac{2\eta}{1+\eta}\frac{N}{M}}(\psi) - \Id_R\otimes \mathcal{A}_{\frac{1+\eta}{2\eta}}\circ \mathcal{L}_{\frac{2\eta^2}{1+\eta}}(\psi)}_1 
    & \leq 2\sqrt{1-(1-2K\delta)^2}\nonumber\\
    & \leq 2 \sqrt{4K\delta - 4K^2\delta^2}\nonumber\\
    &\leq 4\sqrt{K\delta}\sqrt{1-K\delta}\nonumber\\
    &\leq 4 \sqrt{K}\left(\frac{2\log M}{M\nonumber}\right)^{1/4}\\
    &\eqqcolon \varepsilon^{\mathrm{att}}_{M,K}, 
    \label{eq:eps-att-MK}
\end{align}
for sufficiently large $M$ such that $K\delta \leq \frac{1}{2}$. We see that $\varepsilon^{\mathrm{att}}_{M,K}\to 0$ as $M\to \infty$.
\end{proof}

\subsection{Convergence of coherent information}
In this section we prove Theorem~\ref{thm:coherent-information} from the main text, which we restate here for convenience:
\begin{theorem}\label{thm:coherent-information-supp} For every $\rho\in \mathcal{L}(\mathrm{Sym}^M(\mathbb{C}^2))$ supported within a fixed Dicke excitation cutoff (equivalently Fock cutoff) given by $K$, 
    \begin{align}
        \abs{I_c(\rho,\mathcal{D}^{\otimes M}_\eta) - I_c\left(\rho, \mathcal{A}_{\frac{1+\eta}{2\eta}}\circ \mathcal{L}_{\frac{2\eta^2}{1+\eta}}\right)} \leq \Delta_{M,K},
    \end{align}
    where $\Delta_{M,K}\to 0$ as $M\to \infty$.
\end{theorem}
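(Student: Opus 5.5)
Fix a purification $\ket{\psi}_{RA}$ of $\rho$. Since $\rho$ is supported on the first $K$ Dicke levels, the reference $R$ can be taken of dimension at most $K+1$. The plan has four steps.

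\textbf{Step 1: split the coherent information by sector.} By \Cref{thm:decomp}, $(\Id_R\otimes\mathcal{D}_\eta^{\otimes M})(\psi)$ is block diagonal over $N$. The $N$-th block is $p_N\,(\Id_R\otimes\PhiChannel)(\psi)\otimes \one_{M,N}/\snIrrepDim$, and the same block structure holds for the output without $R$. Entropy of a direct sum of weighted states is additive: $S(\bigoplus_N p_N\omega_N)=H(p)+\sum_N p_N S(\omega_N)$. Both the $H(p)$ terms and the $\log \snIrrepDim$ terms cancel between $S(B)$ and $S(RB)$. This gives
\begin{equation*}
I_c(\rho,\mathcal{D}_\eta^{\otimes M})=\sum_N p_N\, I_c(\rho,\PhiChannel).
\end{equation*}
Writing $\mathcal{G}=\Gain_{\frac{1+\eta}{2\eta}}\circ\Loss_{\frac{2\eta^2}{1+\eta}}$ and using $\sum_N p_N=1$, the difference to be bounded is at most $\sum_N p_N\,|I_c(\rho,\PhiChannel)-I_c(\rho,\mathcal{G})|$.

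\textbf{Step 2: split $N$ into typical and atypical sectors.} Let $\delta=\sqrt{(2\log M)/M}$.
\begin{itemize}
\item Atypical $N$: bound each term by $2\log(K+1)$ using Araki--Lieb, since $|I_c|\le S(\rho)\le\log(K+1)$. By \Cref{lem:block-probability-concentration} with $\delta_1=\delta$, the total atypical weight is at most $\frac{1+\eta}{\eta}e^{-\log M}=\frac{1+\eta}{\eta M}$.
\item Typical $N$: by the triangle inequality, go through the intermediate channel $\mathcal{G}_N=\Gain_{\frac{1+\eta}{2\eta}}\circ\Loss_{\frac{2\eta}{1+\eta}\frac NM}$.
\end{itemize}
The two resulting output pairs on $RB$ are $\epsilon$-close in trace distance. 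For $\PhiChannel$ versus $\mathcal{G}_N$, $\epsilon\le\varepsilon_{M,K}$ by \Cref{thm:diamond-dist-bound-main}, since $\psi$ lies in the domain of the $K$-restricted diamond norm. For $\mathcal{G}_N$ versus $\mathcal{G}$, $\epsilon\le\varepsilon^{\mathrm{att}}_{M,K}$ by \Cref{lem:typical-gaussian-channels-convergence}. Tracing out $R$ preserves these bounds.

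\textbf{Step 3: continuity, the main obstacle.} The outputs live in $\mathrm{Sym}^N(\C^2)$ of dimension $N+1$, or in an infinite-dimensional Fock space. A Fannes-type bound therefore carries a $\log(N+1)$ factor, and $\varepsilon_{M,K}\log M$ need not vanish. The fix is to write $I_c(\rho,\mathcal{N})=-S(R|B)$ and apply the Alicki--Fannes--Winter bound to the conditional entropy. This requires only $\dim R\le K+1$ and gives
\begin{equation*}
|S(R|B)_\omega-S(R|B)_{\omega'}|\le \epsilon\log(K+1)+(1+\epsilon/2)h\!\left(\tfrac{\epsilon}{1+\epsilon/2}\right)
\end{equation*}
for trace distance $\epsilon$, independent of the dimension of $B$. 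In the infinite-dimensional Gaussian case, AFW still applies because it only needs the finite dimension of the conditioned system $R$ and finite conditional entropies. Those are guaranteed here since $|S(R|B)|\le\log(K+1)$.

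It remains to place both channels' outputs on a common output space so that trace distance and conditional entropy are well defined. Embed $\mathrm{Sym}^N(\C^2)$ into Fock space via $\ket{D^N_k}\mapsto\ket{k}$, which is the identification already implicit in \Cref{thm:diamond-dist-bound-main}. An isometric embedding changes neither entropies nor trace distances.

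\textbf{Step 4: assemble.} Combining the three estimates gives
\begin{equation*}
\Delta_{M,K}=f(\varepsilon_{M,K})+f(\varepsilon^{\mathrm{att}}_{M,K})+\frac{2(1+\eta)\log(K+1)}{\eta M},
\end{equation*}
with $f(\epsilon)=\epsilon\log(K+1)+(1+\epsilon/2)h\bigl(\tfrac{\epsilon}{1+\epsilon/2}\bigr)$. Since $f$ is continuous with $f(0)=0$, and both $\varepsilon_{M,K}\to0$ and $\varepsilon^{\mathrm{att}}_{M,K}\to0$, we get $\Delta_{M,K}\to0$ for fixed $K$. This holds for $M$ large enough that the earlier lemmas apply; for smaller $M$, take the trivial bound $2\log(K+1)$.
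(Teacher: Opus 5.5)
Your proposal is correct and follows essentially the same route as the paper. The shared steps are the sector-wise identity $I_c(\rho,\mathcal{D}_\eta^{\otimes M})=\sum_N p_N\, I_c(\rho,\PhiChannel)$, a typical/atypical split in $N$ via Lemma~\ref{lem:block-probability-concentration}, and a triangle inequality through $\Gain_{\frac{1+\eta}{2\eta}}\circ\Loss_{\frac{2\eta}{1+\eta}\frac{N}{M}}$ controlled by Theorem~\ref{thm:diamond-dist-bound-main} and Lemma~\ref{lem:typical-gaussian-channels-convergence}. You then close with the Alicki--Fannes--Winter bound, where only $\dim R\le K+1$ enters, exactly as the paper does. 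The differences are minor: you merge the two atypical sums, which halves the $1/M$ term, and you are more explicit than the paper about the entropy bookkeeping, the Dicke--Fock embedding, and small $M$. One cosmetic point: your $f$ is a valid (slightly loose) consequence of Winter's bound when $\epsilon$ denotes the full trace norm $\|\omega-\omega'\|_1\le 1$, which is how you actually use it. For $\epsilon=\tfrac12\|\omega-\omega'\|_1$ it should instead read $2\epsilon\log(K+1)+(1+\epsilon)h\bigl(\tfrac{\epsilon}{1+\epsilon}\bigr)$; either way $\Delta_{M,K}\to 0$.
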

\begin{proof} 
Using the decomposition in Theorem~\ref{thm:decomp}, we can write $I_c(\rho,\mathcal{D}^{\otimes M}_\eta) = \sum_N \pDist{N} I_c(\rho, \PhiChannel )$. Then adding and subtracting $\sum_N \pDist{N} I_c(\rho, \mathcal{A}_{\frac{1+\eta}{2\eta}}\circ \mathcal{L}_{\frac{2\eta}{1+\eta}\frac{N}{M}})$ and using the triangular inequality, we obtain
    \begin{align}
        &\abs{I_c(\rho,\mathcal{D}^{\otimes M}_\eta) - I_c\left(\rho, \mathcal{A}_{\frac{1+\eta}{2\eta}}\circ \mathcal{L}_{\frac{2\eta^2}{1+\eta}}\right)}\nonumber\\
        &\leq  \sum_{N} \pDist{N} \abs{I_c(\rho, \PhiChannel) - I_c\left(\rho, \mathcal{A}_{\frac{1+\eta}{2\eta}}\circ \mathcal{L}_{\frac{2\eta}{1+\eta}\frac{N}{M}}\right)}\nonumber \\
        &\quad {} + \sum_{N} \pDist{N} \abs{I_c\left(\rho, \mathcal{A}_{\frac{1+\eta}{2\eta}}\circ \mathcal{L}_{\frac{2\eta}{1+\eta}\frac{N}{M}}\right) - I_c\left(\rho, \mathcal{A}_{\frac{1+\eta}{2\eta}}\circ \mathcal{L}_{\frac{2\eta^2}{1+\eta}}\right)}. \nonumber
    \end{align}
    We divide each of the sums into the typical sector, $N\in \bigl[\eta M - \sqrt{2M\log M}, \eta M + \sqrt{2M \log M} \bigr]$ and the atypical sector, i.e., the complement of this interval. For the atypical sector, we upper-bound the absolute difference between the coherent information by $2\log(K+1)$ and use the concentration inequality from Lemma~\ref{lem:block-probability-concentration}:
    \begin{align}
        &\abs{I_c(\rho,\mathcal{D}^{\otimes M}_\eta) - I_c\left(\rho, \mathcal{A}_{\frac{1+\eta}{2\eta}}\circ \mathcal{L}_{\frac{2\eta^2}{1+\eta}}\right)}\nonumber\\
        & \leq \sum_{N\in \mathrm{Typ}_M} \pDist{N} \abs{I_c(\rho, \Phi^\eta_{M\to N}) - I_c\left(\rho, \mathcal{A}_{\frac{1+\eta}{2\eta}}\circ \mathcal{L}_{\frac{2\eta}{1+\eta}\frac{N}{M}}\right)} \nonumber\\
        & \quad {}+  \sum_{N\in \mathrm{Typ}_M} \pDist{N} \abs{I_c\left(\rho, \mathcal{A}_{\frac{1+\eta}{2\eta}}\circ \mathcal{L}_{\frac{2\eta}{1+\eta}\frac{N}{M}}\right) - I_c\left(\rho, \mathcal{A}_{\frac{1+\eta}{2\eta}}\circ \mathcal{L}_{\frac{2\eta^2}{1+\eta}}\right)} \nonumber\\
        &\quad {} + \frac{4(1+\eta)\log(K+1)}{\eta M}.
    \end{align}
For sums involving the typical sectors, we use the continuity of conditional entropy~\cite{Winter_2016,Alicki_2004}. Then for the first sum, we use the bound on the diamond norm distance in Theorem~\ref{thm:diamond-dist-bound-main} to conclude
\begin{multline}
    \sum_{N\in \mathrm{Typ}_M} \pDist{N} \abs{I_c(\rho, \PhiChannel ) - I_c\left(\rho, \mathcal{A}_{\frac{1+\eta}{2\eta}}\circ \mathcal{L}_{\frac{2\eta}{1+\eta}\frac{N}{M}}\right)} \\
    \leq  2\varepsilon_{M,K}\log(K+1) + (1+\varepsilon_{M,K})\;h_2\left(\frac{\varepsilon_{M,K}}{1+\varepsilon_{M,K}}\right).  
\end{multline}
For the second sum,  we similarly bound the coherent information in terms of $\varepsilon^{\mathrm{att}}_{M,K}$ defined in \eqref{eq:eps-att-MK} in the proof of Lemma \ref{lem:typical-gaussian-channels-convergence}, which gives a bound on the maximum diamond distance (with fixed cutoff $K$) between the attenuation channels $\mathcal{L}_{N/M}$ and $\mathcal{L}_{\eta}$ within the typical sector. Hence,
\begin{multline}
    \sum_{N\in \mathrm{Typ}_M} \pDist{N} \abs{I_c\left(\rho, \mathcal{A}_{\frac{1+\eta}{2\eta}}\circ \mathcal{L}_{\frac{2\eta}{1+\eta}\frac{N}{M}}\right) - I_c\left(\rho, \mathcal{A}_{\frac{1+\eta}{2\eta}}\circ \mathcal{L}_{\frac{2\eta^2}{1+\eta}}\right)} \\
    \leq 2\varepsilon^{\mathrm{att}}_{M,K}\log(K+1) + (1+\varepsilon^{\mathrm{att}}_{M,K})\;h_2\left(\frac{\varepsilon^{\mathrm{att}}_{M,K}}{1+\varepsilon^{\mathrm{att}}_{M,K}}\right). 
\end{multline}
Combining everything we obtain an upper bound on the difference of coherent informations as
\begin{align}
    \abs{I_c(\rho,\mathcal{D}^{\otimes M}_\eta) - I_c\left(\rho, \mathcal{A}_{\frac{1+\eta}{2\eta}}\circ \mathcal{L}_{\frac{2\eta^2}{1+\eta}}\right)} &\leq  2\varepsilon_{M,K}\log(K+1) + (1+\varepsilon_{M,K})\;h_2\left(\frac{\varepsilon_{M,K}}{1+\varepsilon_{M,K}}\right) \nonumber\\
    &\quad {}+ 2\varepsilon^{\mathrm{att}}_{M,K}\log(K+1) + (1+\varepsilon^{\mathrm{att}}_{M,K})\;h_2\left(\frac{\varepsilon^{\mathrm{att}}_{M,K}}{1+\varepsilon^{\mathrm{att}}_{M,K}}\right)\nonumber\\
    &\quad {} + \frac{4(1+\eta)\log(K+1)}{\eta M}\\
    &\eqqcolon \Delta_{M,K}
\end{align}
Note that both $\varepsilon_{M,K}$ and $\varepsilon^{\mathrm{att}}_{M,K}$ go to $0$ as $M\to \infty$, and hence we also have $\lim_{M\to \infty}\Delta_{M,K}=0$.
This concludes the proof.
\end{proof}

\clearpage
\endgroup

\end{document}